\documentclass[12pt]{article}
\usepackage{amsmath}
\usepackage{amsfonts}
\usepackage{amssymb}
\usepackage{amsthm}
\usepackage{cite}
\usepackage[usenames,dvipsnames]{xcolor}
\usepackage{hyperref}
\usepackage{mathrsfs}
\usepackage{subcaption}
\usepackage{caption}
\captionsetup[figure]{name=}
\DeclareMathOperator*{\slim}{s-lim}
\setlength{\unitlength}{1mm}

\newlength{\dinwidth}
\newlength{\dinmargin}

\setlength{\dinwidth}{21.0cm}
\setlength{\textwidth}{15.3cm}  
\setlength{\textheight}{23.4cm} 
\setlength{\dinmargin}{\dinwidth}
\addtolength{\dinmargin}{-\textwidth}
\setlength{\dinmargin}{0.5\dinmargin}
\setlength{\oddsidemargin}{-1.0in}
\addtolength{\oddsidemargin}{\dinmargin}
\setlength{\evensidemargin}{\oddsidemargin}
\setlength{\marginparwidth}{0.9\dinmargin}
\setlength{\marginparsep}{8pt}
\setlength{\marginparpush}{5pt}
\setlength{\columnseprule}{0mm}
\setlength{\columnsep}{7mm}
\setlength{\topmargin}{-0.5in}
\setlength{\headheight}{30pt}
\setlength{\headsep}{10pt} 
\setlength{\footskip}{20pt}
\usepackage{tikz}
\usetikzlibrary{trees}
\usetikzlibrary{decorations.pathmorphing}
\usetikzlibrary{decorations.markings}

%
%
%
%
%


\newcommand{\bh}{\mathbf{h}}
\newcommand{\bu}{\mathbf{u}}
\newcommand{\msf}{\mathsf}

\newcommand{\bell}{\boldsymbol{\ell} }

\newcommand{\beps}{\bar\eps}
\newcommand{\mrm}{\mathrm}
\renewcommand{\mathbf}{\boldsymbol}

\newcommand{\ext}{\bullet}
\newcommand{\Span}{\mathrm{Span}}
\newcommand{\mcU}{\mathcal U}

\newcommand{\bv}{\mathbf{v}}

\newcommand{\wtla}{\la}
\newcommand{\wtU}{U}
\newcommand{\inc}{\mathrm{in}}
\newcommand{\ba}{\mathbf{a}}

\newcommand{\bxi}{\boldsymbol{\xi}}
\newcommand{\bE}{\boldsymbol{E}}
\newcommand{\Sp}{\mathrm{Sp}}
\newcommand{\hA}{\hat A}
\newcommand{\tout}{\overset{\out}{\times}}
\newcommand{\bla}{\boldsymbol{\xi}}
\newcommand{\bn}{\mathbf{n}}
\newcommand{\by}{\mathbf{y}}
\newcommand{\bp}{\mathbf{p}}
\newcommand{\bP}{\mathbf{P}}
\newcommand{\loc}{\mathrm{loc}}
\newcommand{\mfb}{\mathfrak A}
\newcommand{\bx}{\mathbf{x}}
\newcommand{\res}{\restriction}

\newcommand{\mcF}{\mathcal F}
\newcommand{\mcP}{\mathcal P}
\newcommand{\mcL}{\mathcal L}
\newcommand{\mcK}{\mathcal K}
\newcommand{\ball}{\mathsf{B}}
\newcommand{\hyp}{\mathsf{H}}

\newcommand{\out}{\mathrm{out}}

\newcommand{\mcC}{\mathcal C}
\newcommand{\el}{\mathrm{el}}

\renewcommand{\i}{\mathrm i}



\newcommand{\wlim}{\mathrm{w-}\lim}

\newcommand{\cc}{\mathrm{c} }
\newcommand{\wt}{\widetilde}

\newcommand{\Om}{\Omega}
\newcommand{\ga}{\gamma}

\newcommand{\La}{\Lambda}

\newcommand{\be}{\beta}

\newcommand{\pa}{\partial}

\newcommand{\Ran}{\mathrm{Ran}}
\newcommand{\ov}{\overline}

\newcommand{\mfh}{\mathfrak{h}}

\newcommand{\eps}{\varepsilon}
\newcommand{\de}{\delta}

\newcommand{\De}{\Delta}
\newcommand{\e}{\mathrm{e}}


\newcommand{\pho}{\mathrm{ph}}

\newcommand{\nin}{\noindent}
\newcommand{\si}{\sigma}
\newcommand{\ph}{\phantom}
\newcommand{\h}{\fr{1}{2}}
\newcommand{\nat}{\mathbb{N}}
\newcommand{\hil}{\mathcal{H}}
\newcommand{\om}{\omega}
\newcommand{\mfa}{\mathfrak{A}}
\newcommand{\mco}{\mathcal{O}}

\newcommand{\supp}{\mathrm{supp}}
\newcommand{\fr}[2]{\frac{#1}{#2}}
\newcommand{\al}{\alpha}
\newcommand{\real}{\mathbb{R}}
\newcommand{\complex}{\mathbb{C}}
\newcommand{\la}{\lambda}
\newcommand{\non}{\nonumber}
\newcommand{\Ga}{\Gamma}

\newcommand{\lan}{\langle}
\newcommand{\ran}{\rangle}

\def\proof{\noindent{\bf Proof. }}
\def\qed{$\Box$\medskip}

\newtheorem{theoreme}{Theorem } [section]
\newtheorem{proposition}[theoreme]{Proposition}
\newtheorem{lemma}[theoreme]{Lemma}
\newtheorem{definition}[theoreme]{Definition}
\newtheorem{corollary}[theoreme]{Corollary}
\newtheorem{remark}[theoreme]{Remark}
\newtheorem{example}[theoreme]{Example}
\newtheorem{criterion}[theoreme]{Criterion}
\newtheorem{conjecture}{Conjecture}
\newtheorem{assumption}{Assumption}

\newcommand{\bea}{\begin{assumption}}
	\newcommand{\eea}{\end{assumption}}
\newcommand{\beco}{\begin{conjecture} }
	\newcommand{\eeco}{\end{conjecture} }
\newcommand{\beq}{\begin{equation}}
	\newcommand{\eeq}{\end{equation}}
\newcommand{\beqa}{\begin{eqnarray}}
	\newcommand{\eeqa}{\end{eqnarray}}
\newcommand{\ben}{\begin{arabicenumerate}}
	\newcommand{\een}{\end{arabicenumerate}}
\newcommand{\bex}{\begin{example}}
	\newcommand{\eex}{\end{example}}
\newcommand{\ber}{\begin{remark}}
	\newcommand{\eer}{\end{remark}}
\newcommand{\bec}{\begin{corollary}}
	\newcommand{\eec}{\end{corollary}}
\newcommand{\bep}{\begin{proposition}}
	\newcommand{\eep}{\end{proposition}}
\newcommand{\becr}{\begin{criterion}}
	\newcommand{\eecr}{\end{criterion}}

\def\bel{\begin{lemma}}
	\def\eel{\end{lemma}}
\def\bet{\begin{theoreme}}
	\def\eet{\end{theoreme}}
\def\bed{\begin{definition}}
	\def\eed{\end{definition}}

\begin{document}
\title{Compton scattering in the Buchholz-Roberts framework of relativistic QED.  } 

\author{Sabina Alazzawi\footnote{E-mail: sabina.alazzawi@tum.de} \ and Wojciech Dybalski\footnote{E-mail: dybalski@ma.tum.de}  \\\\
Zentrum Mathematik, Technische Universit\"at M\"unchen,\\
D-85747 Garching, Germany   }

\date{}

\maketitle

\vspace{-0.7cm}
\begin{center}
\small{\emph{Dedicated to the memory of John E. Roberts}}
\end{center}
\vspace{0.1cm}
\begin{abstract}

We consider a Haag-Kastler net in a positive energy representation,
admitting massive Wigner particles and asymptotic fields of massless bosons. We show that states of the massive particles are always vacua of the massless asymptotic fields. Our argument is  based on the Mean Ergodic Theorem in a certain extended Hilbert space.  
As an application of this result we construct the outgoing isometric  
wave operator for Compton scattering in QED in a class of representations recently proposed by Buchholz and Roberts. 
In the course of this analysis we use our new technique to further simplify scattering theory of massless bosons in the vacuum sector. A general discussion of the status of the infrared problem in the setting of Buchholz and Roberts is given.  
	
\end{abstract}

\section{Introduction}
\setcounter{equation}{0}

In general, the term \emph{infrared problems} can be understood as complications  in mathematical description  of quantum systems encountered at 
large spatio-temporal scales.  However, its conventional definition is more specific
and  refers to difficulties in scattering theory of such systems 
in the presence of  long range forces and/or massless particles. 
The simplest and well understood example    
is Coulomb scattering in quantum mechanics which requires the Dollard modifications of  the wave operators. Infrared problems
in quantum electrodynamics (QED) still
evade a satisfactory solution and constitute an active field of research in mathematical physics. 
Among many advances of  recent years \cite{MS15, He14.0, CFP07,BR14},  a particularly radical proposal 
was put forward by Buchholz and Roberts in the setting of algebraic quantum field theory (AQFT) \cite{BR14}.
In essence, these authors suggest that
after restricting attention to measurements in some future lightcone $V$, 
infrared problems should disappear. Buchholz and Roberts adopt the
general point of view on  infrared problems and illustrate their ideas by 
results on superselection structure of QED.
However, conventional infrared problems, understood as complications in scattering theory, are not treated in their work. 
It is therefore an open question if the appealing ideas of Buchholz and Roberts are helpful for analysis of collision processes in QED. We give a partial answer in this work.

Infrared problems in QED can be traced back to the fact that the spacelike asymptotic flux of the electric field 
\beqa
\phi(\bn)=\lim_{r\to\infty} r^2\bn \bE(r\bn), \quad \bn\in S^2
\label{fluxes}
\eeqa
commutes with all local observables \cite{Bu82}. Since this flux is an arbitrary function on the unit sphere $S^2$, restricted only by the Gauss Law,  each value of the electric charge corresponds to uncountably many disjoint irreducible representations of the algebra of observables, which are of potential
physical interest.  This invalidates the standard Doplicher-Haag-Roberts (DHR) 
theory of superselection sectors. For non-zero charges none of these representations can be Poincar\'e covariant, since the existence of $\phi$
is not consistent with unitary action of Lorentz transformations.
For similar reasons,
charged particles cannot have sharp masses \cite{Bu86}. 
This latter difficulty, called the \emph{infraparticle problem}, invalidates the conventional Haag-Ruelle or Lehmann-Symanzik-Zimmermann (LSZ) scattering theory for electrically charged particles. In this situation a charged particle is a
composite object involving  a soft-photon cloud correlated with the particle's velocity. The cloud  is needed for the purpose of `fine-tuning the flux', that is, keeping it constant along the time evolution \cite{Bu82}. Such \emph{infraparticles} have  in fact been constructed in  concrete models of non-relativistic QED \cite{CFP07}.

The above discussion involves a tacit restriction to representations of the algebra of observables of QED in which the flux (\ref{fluxes})
exists.  Buchholz and Roberts consider instead a class of representations in which this is not the case, i.e. the fluctuations of
the electric field tend to infinity under large spacelike translations. 
Thinking heuristically, one way to achieve this is to include highly fluctuating background radiation, emitted in very distant past.
Such radiation, which should not be confused with soft photon clouds mentioned above,  will `blur the flux', that is prevent the existence of the limit in (\ref{fluxes}).  On the other hand, it is clear from Figure~\ref{fig:sub1} and the Huygens principle that this background radiation will stay outside any future lightcone~$V$. 
Thus, inside $V$ one can follow the usual DHR strategy to pass from the defining vacuum representation $\iota$ of the algebra of observables $\mfa$ to an electrically charged positive energy representation $\pi$. To this end, consider a  pair of opposite charges  in a hypercone $\mcC\subset V$, which is a region depicted in  Figure~\ref{fig:sub1} and defined precisely in Subsection~\ref{rep-subsection}.
Next,  transport one of the charges to  lightlike infinity.  As argued in \cite{BR14}, this process of charge creation in $\mcC$ should be only weakly correlated with operations performed in the spacelike complement of $\mcC$ in $V$, denoted $\mcC^\cc$. Therefore, the resulting charged representation $\pi$ should satisfy the following property of \emph{hypercone localization}
\beqa
\pi\res \mfa(\mcC^\cc)\simeq\iota \res \mfa(\mcC^\cc), 
\label{intro-hypercone}
\eeqa
where $\simeq$ denotes unitary equivalence and $\mfa(\mcC^\cc)$ is the algebra of all observables measurable in $\mcC^\cc$. Since $\mcC^\cc\subset V$, this
property is consistent with high fluctuations of the electric field at spacelike infinity, blurring the flux (\ref{fluxes}). (See again Figure~\ref{fig:sub1}). 
As $\phi$ does not exist,
we may require that $\pi$ is
covariant under Poincar\'e transformations and that charged particles 
have sharp masses\footnote{Poincar\'e covariance is used in \cite{BR14} 
at a technical level.  The possibility of sharp masses of charged particles is only mentioned as a problem for future investigations. }.  We adopt these  assumptions in this work and study their consequences.

The  problem of verifying these assumptions in  some concrete models of QED  is outside  the scope of this work.
However,  the above discussion reveals certain similarity of the Buchholz-Roberts ideas to the concept of
\emph{infravacua} \cite[p.59]{Bu82}\cite{Ku98, Kr82}. We recall that such states result from adding to the vacuum a sufficiently strong background field.
Infravacua were constructed in  QED in the external current approximation by Kraus, Polley and Reents \cite{KPR77}.
We believe that a similar analysis in more realistic theories, e.g. translation invariant models of non-relativistic QED, 
could bring interesting new insights into the nature of electrically charged particles.

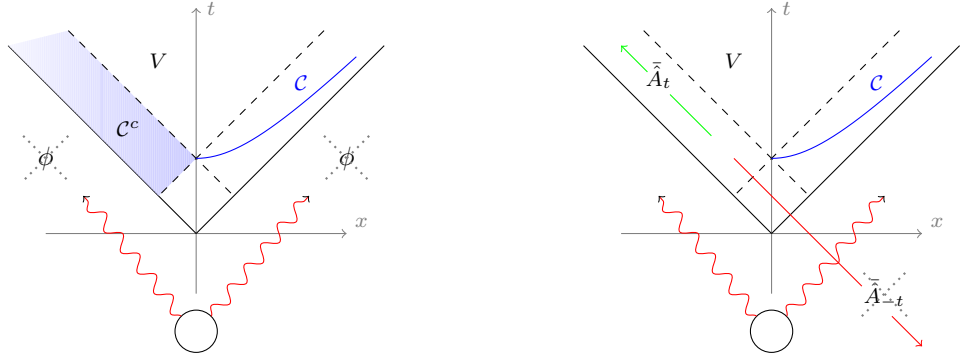
\begin{figure}
\centering
\begin{subfigure}{.5\textwidth}
  \centering
    \begin{tikzpicture}[scale=1]
                \begin{scope}
                \draw[fill] (-2,1)  node{\footnotesize{$\phi$}};
                \end{scope} 
        \begin{scope}
                \draw[fill] (2,1) node{\footnotesize{$\phi$}};
        \begin{scope}[->]
            \draw[gray] (-2,0) -- (2,0) node[anchor=north] {};
            \draw[gray] (0,-.8) -- (0,3) node[anchor=east] {};
        \end{scope}
        \draw (-2.5,2.5) -- (0,0);
        \draw (0,0) -- (2.5,2.5);
       \end{scope}
       \draw[fill,gray] (2.2,.15) node{\scriptsize{$x$}};
       \draw[fill,gray] (.2,3) node{\scriptsize{$t$}};
       \draw[fill] (-.5,2.3) node{\scriptsize{$V$}};
       \shade[left color=blue!5!white,right color=blue!30!white,opacity=0.5] (0,1)--(-1.7,2.7)--(-2.5,2.5) -- (-.5,.5);
       \draw[fill] (-.9,1.4) node{\scriptsize{$\mathcal{C}^c$}};
       \draw[dashed](-1.7,2.7)--(0.5,.5);
       \draw[dashed](1.7,2.7)--(-.5,.5);
       \draw[blue] plot[variable=\t,domain=0:1.5] ({sinh(\t)},{cosh(\t)});
       \draw[fill, blue] (1.4,2) node{\scriptsize{$\mathcal{
       C}$}};
       \draw (0,-1.3) circle (8pt);
       \draw[dotted, thick,gray] (-2.3,1.3)--(-1.7,.7);
       \draw[dotted, thick,gray] (-1.7,1.3)--(-2.3,.7);
       \draw[dotted, thick,gray] (2.3,1.3)--(1.7,.7);
       \draw[dotted, thick,gray] (1.7,1.3)--(2.3,.7);
       \begin{scope}[->]
       \draw[style={decorate, decoration={snake}, draw=red}] (.2,-1.1)--(1.5,.5)node[anchor=north] {};
              \draw[style={decorate, decoration={snake}, draw=red}] (-.2,-1.1)--(-1.5,.5)node[anchor=north] {};
              \end{scope}
      \end{tikzpicture}
  \caption{\footnotesize{A hypercone~localized representation.}}
  \label{fig:sub1}
\end{subfigure}%
\begin{subfigure}{.5\textwidth}
  \centering    
  \begin{tikzpicture}[scale=1]
          \begin{scope}
                  \draw[fill] (1.5,-.8) node{\scriptsize{$\bar{\hat{A}}_{-t}$}};
          \begin{scope}[->]
              \draw[gray] (-2,0) -- (2,0) node[anchor=north] {};
              \draw[gray] (0,-.8) -- (0,3) node[anchor=east] {};
          \end{scope}
          \draw (-2.5,2.5) -- (0,0);
          \draw (0,0) -- (2.5,2.5);
         \end{scope}
         \draw[fill,gray] (2.2,.15) node{\scriptsize{$x$}};
         \draw[fill,gray] (.2,3) node{\scriptsize{$t$}};
         \draw[fill] (-.5,2.3) node{\scriptsize{$V$}};
         \draw[fill] (-1.5,2.1) node{\scriptsize{$\bar{\hat{A}}_t$}};
         \draw[dashed](-1.7,2.7)--(0.5,.5);
         \draw[dashed](1.7,2.7)--(-.5,.5);
         \draw[blue] plot[variable=\t,domain=0:1.5] ({sinh(\t)},{cosh(\t)});
         \draw[fill, blue] (1.4,2) node{\scriptsize{$\mathcal{C}$}};
         \draw (0,-1.3) circle (8pt);
         \draw[dotted, thick, gray] (1.8,-1.1)--(1.2,-.5);
         \draw[dotted, thick,gray] (1.2,-1.1)--(1.8,-.5);
         \begin{scope}[->]
         \draw[style={decorate, decoration={snake}, draw=red}] (.2,-1.1)--(1.5,.5)node[anchor=north] {};
         \draw[style={decorate, decoration={snake}, draw=red}] (-.2,-1.1)--(-1.5,.5)node[anchor=north] {};
         \draw[green] (-1.65,2.15)--(-2,2.5);
         \draw[red](1.6,-1.1)--(2,-1.5)node[anchor=south] {};
         \end{scope}
         \draw[green] (-1.35,1.85)--(-.8,1.3);
         \draw[red] (-.5,1)--(1.2,-.7);
        \end{tikzpicture}
  \caption{\footnotesize{(Non-)existence of asymptotic fields.}}
  \label{fig:sub2}
\end{subfigure}
\caption*{\footnotesize{Figure 1. (a) A hypercone localized representation of QED is equivalent to the vacuum representation
in the causal complement $\mcC^\cc\subset V$ of any hypercone $\mcC\subset V$.
This condition is consistent with the presence of highly fluctuating background radiation emitted in distant past, which is
needed to blur the flux~$\phi$.  (b) If the approximating sequence $[1,\infty) \ni t\mapsto\bar{\hat{A}}_t$ of the \emph{outgoing} asymptotic photon field  
is localized in  $\mcC^\cc$, the existence of the limit $\hat{A}^{\out}$ can be inferred
from the corresponding result in the vacuum representation \cite{Bu77}. However, the \emph{incoming} asymptotic field is
not expected to exist, since its approximating sequence $[1,\infty) \ni t\mapsto\bar{\hat{A}}_{-t}$ collides with the background
radiation.}}
\label{mass-shell-fig}
\end{figure}

\normalsize{}

\vspace{0.5cm}

\nin\textbf{Results.} Let us now give an outline of our results in somewhat simplified terms.
As mentioned above, we consider a Haag-Kastler theory $(\mfa, U)$ in a vacuum representation,
given by the algebra of observables $\mfa\subset B(\hil)$
and a unitary representation of the covering group of the Poincar\'e group
$\wt{P}_+^{\uparrow}\ni \la\mapsto U(\la)$. 
We also consider a  Poincar\'e covariant, positive energy representation $\pi$ of $\mfa$, satisfying the property of hypercone localization~(\ref{intro-hypercone}), which gives rise to a new Haag-Kastler theory  $(\hat\mfa, \hat U)$ on a Hilbert space $\hat \hil$. The vacuum representation is assumed 
to contain massless Wigner particles (`photons') and the representation $\pi$ to contain massive 
Wigner particles (`electrons'). That is, there is a subspace 
$\mfh_{\pho}\subset\hil$ on which $U$ acts as a representation of $\wt{P}_+^{\uparrow}$ with mass $m_{\pho}=0$ and, similarly, 
a subspace  $\hat\mfh_{\el}\subset \hat\hil$ on which $\hat U$ acts as a representation
of  mass $m_{\el}>0$. 

Our goal is to describe Compton scattering, i.e. collision processes involving one electron and some finite number of photons. To
be able to add photons to vectors $\Psi_{\el}\in \hat\mfh_{\el}$ describing one electron, we introduce asymptotic fields of photons via the LSZ prescription. For this purpose, let $\hA\in \hat\mfa$ be a suitable local operator and $\hA(t,\bx):= \hat U(t,\bx)\hA  \hat U(t,\bx)^*$  its spacetime translations. Moreover, let 
$f$ be a solution of the wave equation with compactly supported initial
data, i.e.
\beqa
f(t,\bx)=(2\pi)^{-3/2}\int d^3p\, \e^{\i\bp\bx}\big(\e^{-\i |\bp| t}\wt f_{\mrm{p}}(\bp)+
\e^{\i |\bp| t}\wt f_{\mrm{n}}(\bp)\big), \label{wave-packet}
\eeqa
where $\wt f_{\mrm{p}}(\bp)=\wt f_1(\bp)- \i |\bp|\wt f_2(\bp)$,  $\wt f_{\mrm{n}}(\bp)=\wt f_1(\bp)+\i |\bp|\wt f_2(\bp)$,
$f_1, f_2\in C_0^{\infty}(\real^3)$, determine the positive and negative energy parts of $f$. The asymptotic photon field approximants, given by
\beqa
\bar{\hA}_t:=\fr{1}{\ln\, t}\int_t^{t+\ln\, t} dt'  \int d^3x\, \hA(t',\bx) f(t',\bx), \label{intro-as-field}
\eeqa
finally, give rise to the asymptotic fields via
\beqa
\hA^{\out}:=\lim_{t\to\infty}\bar{\hA}_t \label{intro-limit}.
\eeqa
We shall show that these fields exist as strong limits on the domain $D_{\hat H}\subset \hat\hil$ of vectors of polynomially bounded energy and leave this domain invariant. 
The first step of the proof is inspired by \cite{Bu82}. Namely, we decompose $\bar{\hA}_t$ into a finite number of terms $\bar{\hA}_{i,t}$, $i=1, \ldots, N$,  which are localized in causal complements of some hypercones $\mcC_i$. Then, we use the hypercone localization property
(\ref{intro-hypercone}) and the existence of asymptotic photon fields
in the vacuum representation \cite{Bu77} to obtain the limits $\hA^{\out}_{i}$ on some domains $D_i$. Finally, we use the energy bounds \cite{Bu90, He14}
\beqa
\sup_{t\geq 1} \|\bar{\hA}_t(1+\hat H)^{-1}\|<\infty, \label{intro-energy-bounds}
\eeqa 
where $\hat H$ is the Hamiltonian in representation $\pi$, to obtain
the limits $\hA^{\out}_{i}$ on a common domain $D_{\hat H}$ on which they can be added up to $\hA^{\out}$. (Cf. Figure~\ref{fig:sub2}).

Given  $\hA^{\out}$ we define asymptotic creation and annihilation operators as 
follows
\beqa
\hA^{\out+}:=\int d^4x\, \hA^{\out}(x)\eta(x),\quad  \hA^{\out-}:= (\hA^{\out+})^*, \label{asymptotic-field}
\eeqa
where the Fourier transform $\wt \eta\in C_0^{\infty}(\real^4)$ of $\eta$ is supported outside of
the backward lightcone in energy-momentum space. Since $x\mapsto \hA^{\out}(x)$ is a solution of the wave equation,
this smearing operation restricts the energy transfer of $\hA^{\out}$ to positive values. Summing up, vectors of the form
\beqa
\Psi^{\out}:=\hA_1^{\out+} \ldots  \hA_n^{\out+}\Psi_{\el} \label{Compton-vectors}
\eeqa
are natural candidates for Compton scattering states describing $n$ photons and one electron.

These states can now be used to construct the outgoing wave operator 
\beqa
W^{\out}: \Ga(\mfh_{\pho})\otimes \hat\mfh_{\el}\to \hat\hil,
\eeqa
with $\Ga(\mfh_{\pho})$ being the symmetric Fock space over  $\mfh_{\pho}$. 
$W^{\out}$ maps any configuration of one electron and $n$ independent photons into 
the corresponding vector of the form (\ref{Compton-vectors}). 
However, to show that $W^{\out}$ is well defined and isometric, two 
ingredients are needed. Firstly, the asymptotic creation and annihilation operators $\hA^{\out\pm}$ 
must satisfy the standard canonical commutation relations.  
This can be shown by adapting results from \cite{Bu77, Bu82} to a new geometric situation.
Secondly, single-electron states must play a role of vacua of  the asymptotic photon fields, i.e.
\beqa
\hA^{\out-}\Psi_{\el}=0. \label{annihilation}
\eeqa  
Our proof of this fact, which is the main new technical result of this paper, is outlined below in this
introduction. This proof relies only on the Haag-Kastler postulates.
In particular, the hypercone localization of $\pi$ is not needed to show (\ref{annihilation}).

To formalize the idea that single-electron states are vacua of the asymptotic photon fields, we construct
the corresponding Haag-Kastler theory $(\hat\mfa^{\out}, \hat U)$. More precisely, for any double cone $\mco$ we
define the corresponding local algebra
\beqa
\hat\mfa^{\out}(\mco):=\{\, \e^{\i\hat A^{\out}}\,|\,\bar{\hA}_t=(\bar{\hA}_t)^* \textrm{ for all $t\geq 1$, } 
\ \bar{\hA}_t\in \hat\mfa(\mco)
\textrm{ for small $t\geq 1$ } \}''.
\eeqa  
This definition requires the self-adjointness of $\hat A^{\out}$  resulting from  self-adjoint approximating sequences. 
We show this using  the Nelson commutator theorem \cite{RS2} with the energy bounds~(\ref{intro-energy-bounds}) as an input.
Results from \cite{Fr77}, with the same input, yield Weyl relations for operators of the form $\e^{\i\hat A^{\out}}$. With
this information at hand and relation~(\ref{annihilation})  we verify that states of the form $\om_{\el}(\,\cdot\,):=\lan\Psi_{\el}, \,\cdot \, \Psi_{\el}\ran$,
$\Psi_{\el}\in \hat\mfh_{\el}$, $\|\Psi_{\el}\|=1$, induce vacuum representations of $(\hat\mfa^{\out}, \hat U)$. We point
out that the improvements of the energy bounds made in \cite{He14} (lower powers of the resolvent of $\hat H$ than in \cite{Bu90})  
are important for this part of our analysis.

\vspace{0.5cm}

\nin\textbf{Outline of the proof of (\ref{annihilation}).}
Let $\hat\hil_{\mathrm{c}}\subset \hat\hil$ be the subspace of vectors of bounded energy and $B$ be almost local
operator whose energy-momentum transfer is outside of the future lightcone (cf. Subsections~\ref{almost-local-subs}, \ref{Arveson-subs}). 
Next, we define auxiliary maps $a_B$  introduced in \cite{DG14} by C. G\'erard and one of the present authors, namely
\beqa
& &a_B:\hat\hil_{\mathrm{c}}\to \hat\hil\otimes L^2(\real^3),\\
& &(a_B\Psi)(\bx)=B(\bx)\Psi.
\eeqa  
It is not obvious that the range of $a_B$ is in $\hat\hil\otimes L^2(\real^3)$, but it follows from \cite[Lemma~2.2]{Bu90}, restated 
as Lemma~\ref{HA-lemma}
below. It is easy to see that this map  has the following property
\beqa
a_B \circ f(\hat\bP)=f(\hat\bP+D_{\bx})\circ a_B,
\eeqa
where $f$ is a bounded Borel function, $\hat\bP$ is the momentum operator, $D_{\bx}=-\i\nabla_{\bx}$ and we use the short-hand notation $\hat\bP+D_{\bx}:=\hat\bP\otimes 1_{L^2(\real^3)}+1_{\hat\hil}\otimes D_{\bx}$. 

Let $(\bar{\hat{A}}_t(\eta))^*$, $t\geq 1$, be the approximating sequence  of the asymptotic annihilation operator $\hA^{\out-}$ and put
\beqa
\bar{\hA}_t(\eta)=\bar{\hA}_{t,\mrm{p}}(\eta)+\bar{\hA}_{t,\mrm{n}}(\eta),
\eeqa
where $\bar \hA_{t,\mrm{p}/\mrm{n} }$ involve the positive and negative energy parts of the wave packet (\ref{wave-packet}).
For the positive energy part we have
\beqa
(\bar A_{t,\mrm{p}}(\eta))^*\Psi_{\el}&=&
\fr{1}{\ln\, t}\int_t^{t+\ln\, t} dt'  (1_{\hat\hil}\otimes \lan f_{\mrm{p}} |) (\e^{\i \hat H t' }\otimes  
\e^{\i |D_{\bx}|t'} )a_B\e^{-\i\om_{m_\el}(\hat\bP)t'}\Psi_{\el}\non\\
&=&(1_{\hat\hil}\otimes \lan f_{\mrm{p}}| )\fr{1}{\ln\, t}\int_t^{t+\ln\, t} dt' 
\e^{\i (\hat H+  |D_{\bx}|-\om_{m_{\el}}(\hat\bP+D_{\bx}) )t' }a_B\Psi_{\el}, 
\eeqa
where $B:=A^*(\ov{\eta})$, $\om_{m_{\el}}(k)=\sqrt{k^2+m_{\el}^2}$ and the map
 $(1_{\hat\hil}\otimes \lan f_{\mrm{p}} | ): \hat\hil\otimes L^2(\real^3)\to \hat\hil$ acts according to
\beqa
(1_{\hat\hil}\otimes \lan f_{\mrm{p} }  | )\Phi=\int d^3x \bar{f}_{\mrm{p}}(\bx)\Phi(\bx).
\eeqa
Now we are in position to apply the Mean Ergodic Theorem in $\hat\hil\otimes L^2(\real^3)$, which gives
\beqa
\lim_{t\to\infty}(\bar A_{t,\mrm{p}}(\eta))^*\Psi_{\el}=(1_{\hat\hil}\otimes \lan f_{\mrm{p}}| )F(\{0\}) a_B\Psi_{\el},
\eeqa 
where $F$ is the spectral measure of  $\hat H+  |D_{\bx}|-\om_{m_{\el}}(\hat\bP+D_{\bx})$. Since $\hat H, \hat\bP, D_{\bx}$ commute, spectral
calculus and covariance under Lorentz transformations can be used to show $F(\{0\})=0$. The analysis of $(\bar A_{t,\mathrm{n}  }(\eta))^*\Psi_{\el}$ is analogous.

Apart from verifying (\ref{annihilation}), the technique described above serves as a tool to simplify scattering theory
of massless bosons in the vacuum sector. In particular, the proof of the fact that 
$A^{\out\pm}$ satisfy canonical commutation relations can now  be
accomplished via a Pohlmeyer argument, without referring to the quadratic
decay of the vacuum correlations of local observables. Thus, with the a priori information from \cite{Bu90, He14}  and the present paper, collision theory for massless bosons can be developed in a way completely parallel to the fermionic case \cite{Bu75}.

Since the argument above  does not rely on strict locality, it may also be useful outside of the Haag-Kastler setting, e.g. in theories satisfying
some kind of asymptotic abelianess in spacelike directions.  For example,
it should help to remove Assumption~4 of \cite{He14.0} and Assumption~3 of \cite{DH15}. It might also find applications
in scattering theory of quantum spin systems satisfying the Lieb-Robinson bounds \cite{BDN14}.

\vspace{0.5cm}

\nin\textbf{Discussion.} Let us now  turn to the  status of the infrared problem
in the Buchholz-Roberts setting of relativistic QED. It may seem that by an analogous construction
as above one could obtain the incoming wave operator $W^{\inc}$ as well and define the scattering matrix $S$
of the Compton scattering process in the usual way, namely by putting $S=(W^\out)^*W^\inc$. 
Unfortunately, the situation is  less satisfactory than that. As far as we can
see, the incoming wave operator $W^\inc$ is not available in a
representation $\pi$ which is hypercone localized in a future lightcone. While the hypercone localization 
property~(\ref{intro-hypercone})  allows us  to establish the existence of  the \emph{outgoing} asymptotic photon fields, 
as explained below formula~(\ref{intro-limit}),  it is of no help for the \emph{incoming} photon fields. This is due to the fact that 
the approximating sequences 
of the incoming asymptotic photon fields are localized in regions moving to infinity in negative 
lightlike directions. Heuristically speaking, such regions inevitably collide with the highly fluctuating background radiation,
emitted in very distant past, which must be present in $\pi$ to prevent the  existence of the flux~(\ref{fluxes}). 
It is therefore reasonable to expect that also the incoming  asymptotic photon fields are blurred by this radiation as depicted in   Figure~\ref{fig:sub2}.
As a possible way out, one could consider a representation $\pi'$ hypercone localized in a \emph{backward} lightcone
in which by obvious modifications of our discussion $W^{\inc}$ exists but $W^{\out}$ may not exist. However, the existence of the
scattering matrix remains questionable, since there is no reason to expect that $\pi$ and $\pi'$ are unitarily equivalent.

Like in the conventional approach,  in the absence of the scattering matrix one may try to construct
inclusive collision cross sections. This idea, implemented in AQFT by Buchholz, Porrmann and Stein \cite{BPS91}, amounts in our
situation to preparation of incoming states using asymptotic observables of the form
\beqa
C_t:=\int d^3x\, h(\bx/t) (B^*B)(t,\bx).
\eeqa
Here $h\in C_0^{\infty}(\real^3)$ is supported on velocities of the desired particle and $B$ is an almost-local observable
whose energy-momentum transfer is outside of the future lightcone. 
Due to this latter property,
which cannot be imposed on strictly local observables $\hA$ appearing in the definition of asymptotic photon fields (\ref{intro-as-field}),
$B$  is much less sensitive to the background radiation mentioned above. Thus, the tentative 
inclusive collision cross sections of the form
\beqa
\lim_{t\to -\infty}\lan \Psi^{\out}, C_{1,t}\ldots C_{\ell,t}\Psi^{\out}\ran \label{inclusive}
\eeqa
are likely to exist. Although available methods allow to control such limits only in massive theories \cite{DG14}, 
their extension to the case of sharp masses  embedded in continuous spectrum is thinkable. Another strategy
may be to consider limits (\ref{inclusive}) in the framework of algebraic perturbative QFT. As a matter of fact,
(\ref{inclusive}) bears some similarity to expressions studied in the book of Steinmann \cite[formula (16.38)]{St}.

\vspace{0.5cm}

\nin\textbf{Summary.} Our paper is organized as follows. In Section~\ref{framework-section} we discuss
 Haag-Kastler nets and their representations. Section~\ref{Preliminaries}  
surveys various preliminary results, most of which concern the energy-momentum 
transfer of observables. In Section~\ref{asymptotic-fields-section} we introduce the asymptotic photon
fields approximants. We collect their representation-independent properties, e.g. the uniform 
energy bounds~(\ref{intro-energy-bounds}) and the decomposition into creation/annihilation operators.
In that section we also give the proof of relation~(\ref{annihilation}) which is our main technical result.
In Section~\ref{vacuum} we revisit scattering theory of photons in a vacuum representation. 
In Section~\ref{charged} we combine information from all the earlier sections to construct the outgoing
wave operator of Compton scattering in a hypercone localized representation and show that it is isometric and Poincar\'e covariant.
In Section~\ref{last-section} we construct the Haag-Kastler net of asymptotic photon fields in a hypercone localized representation 
and show that any single-electron state induces a vacuum representation of this net. More technical aspects of our discussion
are postponed to the appendices.   

\vspace{0.5cm}

\nin\textbf{Acknowledgements.} W.D. would like to thank D. Buchholz, M. Duell, C. G\'erard, A. Pizzo and Y. Tanimoto
for interesting discussions on topics related to this paper. Financial support from the Emmy Noether Programme of the
DFG, within the grant DY 107/2-1, is gratefully acknowledged.

\section{Framework}\setcounter{equation}{0} \label{framework-section}
\subsection{Haag-Kastler nets}

Let $M=\real^4$ be the  Minkowski spacetime.
We denote by $\mcK$ the family
of double cones $\mco\subset M$ ordered by inclusion and write $\mco_\cc$ for the causal complement of $\mco$ in 
$M$\footnote{Note the distinction between the causal complements in $M$ and $V$, which is indicated by lower respectively upper indices.}. Furthermore, let $\wt\mcP_+^\uparrow=\real^4\rtimes SL(2,\complex)$ denote the covering group of the proper ortochronous Poincar\'e group 
$\mcP_+^\uparrow$. Its elements $\la=(x,\wt\La)$
act on $M$ via $\la y=\La y+x$, where $\La\in \mcL_+^\uparrow$ is the Lorentz transformation corresponding to $\wt\La\in SL(2,\complex)$.
\bed\label{HK}  We say that $\mcK\ni\mco\mapsto \mfb(\mco)\subset B(\hil)$ is a Haag-Kastler net of von Neumann algebras if
the following properties hold:
\begin{enumerate}
\item[(a)] (Isotony) $\mfa(\mco_1)\subset \mfa(\mco_2)$ for $\mco_{1}\subset \mco_2$.
\item[(b)] (Locality) $\, [\mfb(\mco_{1}), \mfb(\mco_{2})]=0$  for $\mco_{1}\subset \mco_{2, \cc}$.
\item[(c)] (Covariance) There is a continuous unitary representation $\wtU$ of  $\wt\mcP_+^\uparrow$  such that
\beqa
\wtU(\wtla)\mfb(\mco)\wtU(\wtla)^*=\mfb(\la \mco)\quad  \mathrm{for}\quad 
\wtla\in \wt\mcP_+^\uparrow.
\eeqa
\item[(d)] (Positivity of energy) The joint spectrum of the  generators of translations, denoted $\Sp\,(\wtU\res\real^4)$, is contained in the closed future lightcone $\ov{V}_+$.

\end{enumerate}
A Haag-Kastler net will be denoted by $(\mfa,U)$.

\eed
\bed We say that a Haag-Kastler  net describes Wigner particles of mass $m\geq 0$ if there is a subspace $\mfh \subset\hil$ on which 
$\wtU(\la), \la\in \wt\mcP_+^\uparrow$, acts like a representation of mass $m$.
\eed
\nin Further useful definitions are as follows. For any region $\mathcal U\subset M$ we set
\beqa
\mfb_{\loc}(\mathcal U):=\bigcup_{\mco\subset \mathcal U} \mfb(\mco) 
\quad \textrm{ and } \quad \mfb(\mathcal U):=\ov{\mfb_{\loc}(\mathcal U)}^{\|\,\cdot\,\|}.
\eeqa
In particular, we refer to $\mfb_{\loc}:=\mfb_{\loc}(M)$ as the algebra of strictly local operators and to $\mfb:=\mfb(M)$ 
as the global algebra of the net.
For the unitary representation of translations $\wtU\res{\real^4}$ we shall write
$\wtU(x)=\e^{\i(Hx^0-\bP\bx)}$ 
and the joint spectral measure of the energy-momentum operators $(H,\bP)$ shall be denoted by $E(\,\cdot\,)$.
For translated observables $A\in \mfb$ the notations $\al_x(A):=A(x):=\wtU(x)A\wtU(x)^*$ are used. Moreover, we define
\beqa
\mfa_{\loc,0}:=\{A\in \mfa_{\loc}\,|\, x\mapsto A(x)\, \textrm{ smooth in norm}\}.
\eeqa


\subsection{Representations}\label{rep-subsection}
Consider a Haag-Kastler net $(\mfa,\wtU)$ and
let $\pi: \mfa \to B(\hil_{\pi})$ be a (unital) representation. We say that
$\pi$ is (Poincar\'e) covariant, if there exists a strongly continuous unitary representation
$\wtU_{\pi}$ of $\wt\mcP_+^\uparrow$ on $\hil_{\pi}$ such that
\beqa
\wtU_{\pi}(\la)\pi(A)\wtU_{\pi}(\la)^*=\pi(\wtU(\la)A\wtU(\la)^*), \quad A\in \mfa.
\eeqa  
Moreover, we say that $\pi$ has positive energy if  $\Sp\,(\wtU_{\pi}\res\real^4)\subset \ov{V}_+$.
It is easy to see that if $\pi$ is a covariant, positive energy representation, then,
\beqa
\mco\mapsto \mfa_{\pi}(\mco):=\pi(\mfa(\mco))''
\eeqa
is again a Haag-Kastler net which will be denoted $(\mfa_{\pi}, U_{\pi})$. 
\bed\label{vacuum-def} If $\pi$ is an irreducible, covariant, positive energy representation   and $\hil_{\pi}$ contains a unique (up to a phase) unit vector $\Om$, invariant under $\wtU_{\pi}$, then we say that $\pi$ is a vacuum representation.
\eed

In order to proceed to charged representations, we choose an open future lightcone $V$ and denote for any region $\mathcal U\subset V$ by $\mathcal{U}^{\cc}$ its causal complement in $V$. Next,
we define  a class of regions in $V$ which are called \emph{hypercones} in \cite{BR14}. We recall here briefly their definition referring to \cite{BR14} for more details: Choose coordinates so that 
$V=\{\, x\in\real^4\,|\,  x_0>|\bx| \, \}$ and fix a hyperboloid
$\hyp_{\bar\tau}=\{\, x\in V\,|\,  x_0=\sqrt{  \bx^2+\bar\tau^2} \, \}$ for some $\bar\tau>0$. Project $\hyp_{\bar\tau}$ through the origin onto the plane $x_0=1$
so as to identify it with the open unit ball $\ball\subset\real^3$. This projection is the Beltrami-Klein model of hyperbolic geometry. 
Consider the family of (truncated) pointed convex Euclidean cones
$\mathsf K$ in $\ball$ with elliptical bases. It gives rise to a Lorentz invariant family
of hyperbolic cones $\mathsf C=\mathsf C(\mathsf K)$ in $\hyp_{\bar\tau}$. A hypercone $\mcC=\mcC(\mathsf K)$ is the causal completion of such $\mathsf C$, i.e. $\mcC=\mathsf C^{\cc\cc}$, and the family of all hypercones as described above is denoted by $\mcF_{V}$. We recall that for $\mathsf K\cap \mathsf K'=\emptyset$ we have
that $\mcC(\mathsf K)$ and $\mcC(\mathsf K')$ are spacelike separated. 
\bed Let $(\mfa,\wtU)$ be a Haag-Kastler net in a  vacuum representation
and let $\pi$ be a covariant positive energy representation.
We say that $\pi$ is hypercone localized  if for any future lightcone $V$
and  $\mcC\in\mcF_{V}$ there exists a unitary $W_{\mcC}:\hil\to\hil_{\pi}$ such that
\beqa
\pi(A)=W_{\mcC}A W_{\mcC}^*\quad  \mathrm{for}\quad A\in \mfa(\mcC^{\cc}).
\eeqa
\eed
\begin{remark}
 It is easy to see that the morphisms $\si_{\mcC,M}:\mfa\to B(\hil)$ from \cite{BR14} are irreducible, hypercone localized representations. 
\end{remark}

Note that for any hypercone localized representation $\pi$ and any $\mco\in\mcK$ we have
$\pi(\mfa(\mco))=\pi(\mfa(\mco))''$ and, therefore, $\pi(\mfa)=\mfa_{\pi}$. It is also easy to see that
any hypercone localized representation is faithful.

\section{Preliminaries}\label{Preliminaries}\setcounter{equation}{0}
In this section we consider an arbitrary Haag-Kastler net  $(\mfa,U)$.
\subsection{Almost local operators}\label{almost-local-subs}

The following standard class of observables shall be of particular importance in Lemma~\ref{HA-lemma} stated below.
\bed An operator $A\in \mfb$ is called almost local if there exists a family of local observables $A_r$, localized in standard double cones $\mco_r$ of radius $r$ centered at zero, and for any $n\in \nat$ there is a constant $C_n$ such that
\beqa
\|A-A_r\|\leq\frac{C_n}{r^n}.
\eeqa
\eed
\nin Next, we define for any $B\in \mfb$ the  smeared operators 
\beqa
B(g):=\left\{ \begin{array}{ll} 
\int d^3x\, B(\bx)g(\bx) & \textrm{for $g\in S(\real^3)$,} \\
\int d^4x\, B(x)g(x)   & \textrm{for $g\in S(\real^4)$.} \end{array} \right.
\eeqa 
Since local algebras are von Neumann, $B(g)\in \mfb$.
It is easy to see that for $B\in \mfb_{\loc}$ the operators $B(g)$, as defined above, 
are almost local.
\subsection{Arveson spectrum}\label{Arveson-subs}

\bed  For $B\in \mfb$ we define the Arveson spectrum of $B$ as 
the support
of the Fourier transform of $\real^4\ni x\mapsto B(x)$, understood as an operator 
valued distribution. That is,
\beqa
\Sp_{B}\al:=\ov{ \bigcup_{\Psi,\Phi\in \hil} \supp\,\lan \Psi, \wt B(\,\cdot\,)\Phi\ran}.
\eeqa
\eed 
\nin This concept is useful due to the \emph{energy-momentum transfer relation} which gives~\cite{Ar82}
\beqa
BE(\De)\hil\subset E(\ov{\De+\Sp_{B}\al})\hil \label{EM-transfer}
\eeqa
for any Borel set $\De\subset \real^4$. For future reference, we also note the simple fact that
\beqa
\Sp_{B(g)}\al\subset \supp\, \wt g, \quad g\in S(\real^4),
\eeqa
which allows to construct almost local observables whose Arveson spectrum is contained in a prescribed set.
We refer to Appendix~\ref{conventions} for our conventions concerning the Fourier transform.
\subsection{Energy bounds}

In this subsection we recall results from \cite{Bu90} and \cite{He14} which are  important for our analysis. They 
are  proven by combination of the energy-momentum 
relation (\ref{EM-transfer})  and almost locality. 

The following
lemma enters into the proof of Proposition~\ref{a-B-prop} stated below which is then used to obtain our  main technical result, 
namely Theorem~\ref{main-ergodic}.
\bel\emph{\cite{Bu90}} \label{HA-lemma} Let $B\in \mfb$ be almost local and such that $\Sp_B\al$ is a compact
set which does not intersect with $\ov{V}_+$. 
Then, for any compact $\De\subset \real^4$ there
exists a constant $c_{\De}$ such that for any compact $K\subset \real^3$
\beqa
\|E(\De)\int_{K} d^3x\, (B^*B)(\bx) E(\De)\|\leq c_{\De}.
\eeqa
\eel
\nin
Next, we state a result which is at the basis of Proposition~\ref{uniform-bound-prop} stated below, giving  information about the domains of asymptotic fields. 
\begin{proposition}\emph{\cite{Bu90, He14}}\label{herdegen-bound-new}
	Let $A\in\mathfrak{A}_{\mathrm{loc},0}$ and $n\in \real^4$ be a unit future oriented timelike vector, i.e. $n_0=\sqrt{1+\bn^2}$. Then, for any $g\in S(\real^4)$ 
\begin{equation}
	\|A( (n_{\mu}\pa^{\mu})^3 g)(1+H)^{-1}\|\leq c\, \sup_{\ell=0,1} \|\pa_0^{\ell}\widetilde{g}\|_2, \label{energy-bound-estimate}
	\end{equation}
where the constant $c$ is independent of $g$. 

\end{proposition}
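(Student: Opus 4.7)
This is the Buchholz energy bound~\cite{Bu90}, refined by Herdegen~\cite{He14} to use only a single resolvent power; I would proceed in three stages.

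\textbf{Stage 1 (Lorentz reduction and Cauchy-Schwarz).} By covariance of $(\mfa, \wtU)$ there is $\wtla\in\wt\mcP_+^\uparrow$ whose Lorentz part maps $(1,\mathbf 0)$ to $n$. Conjugation by $\wtU(\wtla)$ sends $H$ to $n\cdot P$, $\pa_0$ to $n_\mu\pa^\mu$, and $g$ to its Lorentz pullback, with $\sup_{\ell=0,1}\|\pa_0^\ell\widetilde g\|_2$ altered only by a constant depending on $\wtla$. Hence WLOG $n=(1,\mathbf 0)$, so that $n\cdot P = H$ and $(n_\mu\pa^\mu)^3 = \pa_0^3$. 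Since $A \in \mfa_{\loc,0}$, the map $x\mapsto A(x)$ is smooth in norm and $\pa_0^3 A(x)$ is a bounded operator (via iterated application of $\pa_0 A(x) = i[H, A(x)]$). Integration by parts and Cauchy-Schwarz in $x^0$ with the weight $1+(x^0)^2$ yield
\[ \|A(\pa_0^3 g)(1+H)^{-1}\Psi\|\leq \Big(\int(1+(x^0)^2)|g|^2 d^4x\Big)^{1/2}\Big(\int(1+(x^0)^2)^{-1}\|\pa_0^3 A(x)(1+H)^{-1}\Psi\|^2 d^4x\Big)^{1/2}, \]
and the first factor is bounded by $\sqrt{2}\,\sup_{\ell=0,1}\|\pa_0^\ell\widetilde g\|_2$ via Plancherel in $x^0$ (which identifies $\pa_0^\ell\widetilde g$ with the Fourier transform of $(ix^0)^\ell g$).

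\textbf{Stage 2 (Spatial $L^2$-bound via Fourier decomposition of $\pa_0^3 A$).} The remaining task is to show the uniform estimate
\[ \int d^3\bx\,\|\pa_0^3 A(x^0, \bx)(1+H)^{-1}\Psi\|^2 \leq c\,\|\Psi\|^2. \]
Decompose $\pa_0^3 A = B_- + B_+$ via a smooth Fourier partition in the time-transfer variable $p_0$, together with an additional cutoff at large spatial momentum, so that $B_\pm$ have compact Arveson spectrum with $\Sp_{B_-}\al \subset \{p_0 < 0\}$ and $\Sp_{B_+}\al \subset \{p_0 > 0\}$. Since $\{p_0 < 0\}$ is disjoint from $\ov{V}_+$, Lemma~\ref{HA-lemma} applies directly to $B_-$: on each bounded-energy spectral subspace $E(\De_j)\Psi$ it gives $\int d^3\bx\,\|B_-(x)\Phi\|^2 \leq c_{\De_j}\|\Phi\|^2$, and summing over $j$ with the $(1+j)^{-1}$-weight from the resolvent yields $\int d^3\bx\,\|B_-(x)(1+H)^{-1}\Psi\|^2 \leq c\|\Psi\|^2$. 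For $B_+$: the adjoint $(B_+)^*$ has Arveson spectrum $-\Sp_{B_+}\al \subset \{p_0 < 0\}$, so Lemma~\ref{HA-lemma} applies to $(B_+)^*$. A duality argument on matrix elements, combined with the fact that the positive energy raised by $B_+$ is absorbed into the $(1+H)^{-1}$ factor, gives the analogous spatial $L^2$-bound for $B_+$. Adding the two contributions and using $\int(1+(x^0)^2)^{-1}\,dx^0 < \infty$ completes the estimate.

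\textbf{Main obstacle.} The delicate point is Stage~2, specifically controlling the positive-energy-transfer piece $B_+$ with only a single factor of $(1+H)^{-1}$; this is precisely Herdegen's~\cite{He14} improvement over the original~\cite{Bu90}, which required higher powers of the resolvent. The availability of the cubic time derivative $(n_\mu\pa^\mu)^3$ in the smearing is what renders $\pa_0^3 A(x)$ bounded and allows the dual use of Lemma~\ref{HA-lemma} to work; the careful arrangement of these ingredients, together with the high-momentum cutoff needed to make the Arveson spectra of $B_\pm$ compact, is the technical heart of the argument.
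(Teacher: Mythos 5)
The paper does not actually prove this proposition: it is imported from \cite{Bu90} and \cite{He14}, and the only ``proof'' offered is the remark that Theorem 4 and formula (20) of \cite{He14} survive the replacement of $\theta(\pm p^0)|p^0|^k$ by $\theta(\pm n_{\mu}p^{\mu})|n_{\mu}p^{\mu}|^k$. Your sketch does reproduce the architecture of the Buchholz--Herdegen argument (Cauchy--Schwarz in time against a polynomial weight, splitting into energy-transfer parts, Lemma~\ref{HA-lemma} for the energy-decreasing part and a duality argument for its adjoint), but two of your steps would fail as written.

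First, the Lorentz reduction of Stage 1 does not preserve the right-hand side. The norm $\sup_{\ell=0,1}\|\pa_0^{\ell}\wt g\|_2$ involves only $\pa_{p^0}$; under the pullback $g\mapsto g\circ\La$ one has $\pa_{p^0}(\wt g\circ\La)=(\La^{-1})^{\mu}_{\ph{0}0}\,(\pa_{p^{\mu}}\wt g)\circ\La$, so spatial derivatives of $\wt g$ appear and cannot be absorbed into ``a constant depending on $\wtla$''. This is not cosmetic: in the application (Proposition~\ref{uniform-bound-prop}) the bound is used on $g'\ast_3 f_t$, whose Fourier transform carries factors $\e^{\pm\i t|\bp|}$, and a $\pa_{p^i}$ derivative would produce a factor of $t$ and destroy exactly the uniformity in $t$ that the proposition is designed to deliver. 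The correct move --- and the content of the paper's remark on frame independence --- is to keep the Plancherel/Cauchy--Schwarz step in the fixed frame (so that only $\pa_{p^0}$-norms arise) while performing the energy-transfer decomposition with respect to $n_{\mu}p^{\mu}$ and replacing $H$ by $n\cdot P$, which is comparable to $1+H$ by the spectrum condition. Second, in Stage 2 a single decomposition $\pa_0^3A=B_-+B_+$ into pieces with compact Arveson spectrum is not available: for strictly local $A$ the Arveson spectrum is generically all of $\real^4$, and the ``cutoff at large spatial momentum'' discards a part of the operator that is not small in any useful sense. One needs an infinite decomposition into pieces of compact energy-momentum transfer, and must control how the constants $c_{\De}$ of Lemma~\ref{HA-lemma} grow both with $\De$ and with the piece, so that the resulting sum is absorbed by the single factor $(1+H)^{-1}$; this is precisely where the three derivatives $(n_{\mu}\pa^{\mu})^3$ and the stated $g$-dependence of the constant are consumed, and it is the part your sketch defers rather than supplies.
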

\nin First bounds of this type were proven in \cite{Bu90}. The above variant can be inferred from \cite{He14} as follows. Starting
with Theorem 4 of this reference, one can replace operators $\wt A_{\pm}^{k}(p):=\e^{-\i k\pi/2}\theta(\pm p^0)|p^0|^k\wt A(p)$ 
with $\wt A_{\pm}^{k,n}(p):=\e^{-\i k\pi/2}\theta(\pm n_{\mu}p^{\mu})|n_{\mu}p^{\mu}|^k\wt A(p)$, where $k>0$ and $n$ 
is chosen as in 
Proposition \ref{herdegen-bound-new}. Then, formula (20) of  \cite{He14} gives (\ref{energy-bound-estimate}). This coordinate
frame independence was actually noticed and used in the proof of Theorem 5 (ii) of \cite{He14}.

\subsection{Auxiliary maps $a_B$}

In this subsection we recall some concepts and facts from \cite{DG14}. For any $B\in \mfb$ 
we have the continuous map $a_B: \hil\to S'(\real^3;\hil)$ given by
\beqa
(a_B\Psi)(\bx)=B(\bx)\Psi, \quad \bx\in \real^3.
\eeqa
Its dual $a_B^*: S(\real^3;\hil)\to \hil$ is given by
\beqa
a_B^*\Phi=\int d^3x\, B^*(\bx)\Phi(\bx). 
\eeqa
We identify  $S'(\real^3;\hil)=\hil\otimes S'(\real^3)$ and define for
$g\in S(\real^3)$ the 
functionals $(1_{\hil}\otimes \lan \ov g|): S'(\real^3;\hil)\to \hil$ by
\beqa
(1_{\hil}\otimes \lan \ov g|)\Psi=\int d^3x\, g(x)\Psi(x).
\eeqa 
Their adjoints are denoted by $(1_{\hil}\otimes |g\ran)$. 
For future reference, we note the identities 
\begin{align}
B(g)&=(1_{\hil}\otimes \lan \ov g|)\circ a_B, \label{smearing-vs-aB}\\
B^*(g)&=a_B^*\circ (1_{\hil}\otimes |g\ran).
\end{align}
For $B$ as in Lemma~\ref{HA-lemma} the maps $a_B$ have the following useful properties.
\bep\cite{DG14}\label{a-B-prop} Let $B$ be almost local and such that $\Sp_B\al$ is a compact
set which does not intersect with $\ov{V}_+$. Furthermore, let  $\De\subset \real^4$ be compact. 
Then:
\begin{itemize}
\item[(a)]  $a_B E(\De): \hil\to \hil\otimes L^2(\real^3)$ is bounded.
\item[(b)] $ a_B E(\De)\circ f(\bP)=f(\bP+D_{\bx})\circ a_B E(\De)$
for any $f\in L^{\infty}(\real^3)$.
\end{itemize}
Here we set $D_{\bx}=-\i\nabla_{\bx}$ and  use the shorthand notation
$\bP+D_{\bx}$ for $\bP\otimes 1_{L^2(\real^3)}+1_{\hil}\otimes D_{\bx}$.
\eep
\proof (a) follows from  the identity
\beqa
E(\De)a_B^*\circ a_B E(\De)=E(\De)\int d^3x\, (B^*B)(\mathbf{x})E(\De)
\eeqa
and Lemma~\ref{HA-lemma}. To verify~(b), one first checks that 
\beqa
a_B E(\De)\circ \e^{-\i\by\bP}= \e^{-\i\by(\bP+D_{\bx})} \circ a_B E(\De).
\eeqa
 Then,  the fact follows from properties of the Fourier transform and approximating arguments.\hfill \qed\\
In view of Proposition~\ref{a-B-prop},  for any
compact $\De$  the identity
\beqa
B(g)E(\De)=(1_{\hil}\otimes \lan \ov g|)\circ a_BE(\De)
\eeqa
extends by continuity to $g\in L^2(\real^3)$.
\subsection{Action of Lorentz transformations on a sphere}\label{Lorentz-subsection}
We conclude this preliminary section with a brief consideration
about Lorentz transformations which will be used to show Poincar\'e
covariance of our constructions. 

Given $\La\in \mcL_+^{\uparrow}$, we divide the expression $(\La x)^{\mu}=\La^{\mu}_{\ph{0}\nu} x^{\nu}$
into its time and space parts, i.e.
\beqa
& &(\La x)^{0}=\lan \bv_{\La} \ran x^0+  \bv_{\La}\bx,\\
& &(\La x)^i=- \bv_{\La^{-1}}^i x^0+[\La]^i_{\ph{0}j} \bx^j,
\eeqa
where we set $\bv_{\La}^i:=\La^{0}_{\ph{0}i}$, $[\La]^i_{\ph{0}j}:=\La^i_{\ph{0} j}$, $i,j=1,2,3$. Note that 
$\bv_{\La^{-1}}^i=-\La^i_{\ph{0}0}$ and $\La^{0}_{\ph{0}0}
=\sqrt{1+|\bv_{\La}|^2}=:\lan \bv_{\La} \ran$. 
Now let $S^2=\{\mathbf{n}\in\mathbb{R}^3\,|\, |\mathbf{n}|=1\}$ be the unit sphere. 
For $\bn\in S^2$ we get
\beqa
& &\La(1,\bn)=(\lan \bv_{\La} \ran +  \bv_{\La} \bn)(1, g_{\La}(\bn)), \label{g-Lambda-zero}\\
& &g_{\La}(\bn):=\fr{-\bv_{\La^{-1}} +[\La] \bn }
{|- \bv_{\La^{-1}}+[\La] \bn |}, \label{g-Lambda}
\eeqa
since $\La(1,\bn)$ is a lightlike vector.
Here $g_{\La}: S^2\to S^2$ is a family of diffeomorphisms of $S^2$, which in
fact forms a representation of $\mcL_+^{\uparrow}$, i.e. $g_{\La_1\La_2}=g_{\La_1}\circ g_{\La_2}$. It is, moreover, continuous in the
following sense
\beqa
\lim_{\La\to I}\|g_{\La}-g_{I}\|_{\infty}=0. \label{Lorentz-continuity}
\eeqa

Finally, we recall that by using the multiplication rules of the Poincar\'e group, the spectrum condition and the Stone theorem one obtains invariance of the domain of any positive power of the Hamiltonian $D(H^n)$ under the action of $U(\la), \la\in \wt\mcP_+^{\uparrow}$.  
Setting $P:=(H, \bP)$, the standard relation follows
\beqa
U(\wt \La)P^{\mu} U(\wt \La)^*=(\La^{-1})^{\mu}_{\ph{0}\nu} P^{\nu},
\eeqa
in the sense of operators on $D(H^n)$. 
With the above definitions, we obtain on $D(H^n)$
\beqa
& &U(\wt \La)HU(\wt \La)^* =\lan \bv_{\La^{-1}} \ran H + 
\bv_{\La^{-1}}\bP,\label{H-transform}\\
& &U(\wt \La)\bP U(\wt \La)^* =- \bv_{\La} H+[\La^{-1}] \bP. \label{P-transform}
\eeqa
Denoting by $P_{\pho}$ the projection on the single-photon subspace
$\mfh_{\pho}$, it further follows for any $f\in C^{\infty}(S^2)$ that
\beqa
U(\wt \La)f\left(\tfrac{\bP}{|\bP|}\right) U(\wt \La)^*P_{\pho}=
(f\circ g_{\La^{-1}}) \left(\tfrac{\bP}{|\bP|}\right) P_{\pho}. \label{transform}
\eeqa


\section{Asymptotic photon fields}\setcounter{equation}{0} \label{asymptotic-fields-section}

In this section the pair $(\mfa,U)$ still refers to an arbitrary Haag-Kastler net.

\subsection{Spherical means}
For further purposes we introduce the following Poincar\'e invariant subset of $S(\real^4)$
\beq
S_*(\real^4):=\{ (n_{\mu}\pa^{\mu})^{5}g \,|\, g\in S(\real^4), \,\, n_0=\sqrt{1+\bn^2}\,\}, \label{S-star-def}
\eeq
as well as certain Poincar\'e invariant subsets of $\mfa$, namely
\beqa
& &\mfb_{S_*}:=\{\, B(g)  \,|\,  B\in \mfa_{\loc,0}, \
g\in S_*(\real^4) \,\}, \\
& &\mfb^{S_*}:=\Span\, \mfb_{S_*}, 
\eeqa
where $\Span$ denotes finite linear combinations.
For any $A\in \mfb^{S_*}$, $f\in C^{\infty}(S^2)$,  we set   as in \cite{Bu82}
\beqa
A_t\{f\}:=-2\,t\int d\om(\bn)\,f(\mathbf{n})\,\partial_0A(t,t\mathbf{n}).
\eeqa
Here $d\om(\mathbf{n})=\frac{\sin\nu\,d\nu d\varphi}{4\pi}$ is the normalized, invariant measure on $S^2$ and 
$\pa_0A:=\partial_s(\e^{\i sH}A\e^{-\i sH})|_{s=0}$. In order to improve the convergence in the limit of large $t$, we proceed to  time averages of $A_t\{f\}$, namely
\begin{equation}\label{timeAverage}
\bar{A}_t\{f\}:=\int\,dt'\,h_t(t')\,A_{t'}\{f\}, 
\end{equation}
where for non-negative $h\in C_0^{\infty}(\real)$, supported in the interval $[-1,1]$ and normalized so that
$\int dt\, h(t)=1$, we set $h_t(t')=t^{-{\beps}} h(t^{-{\beps}}(t'-t))$ with
$t\geq 1$ and $0<\beps<1$. 

For the discussion of asymptotic creation and annihilation operators in Subsection~\ref{creation-annihilation-subs} we need $(n_{\mu}\pa^{\mu})^{5}$
in (\ref{S-star-def}), rather than just $(n_{\mu}\pa^{\mu})^{3}$ from the energy bound~(\ref{energy-bound-estimate}). For the same purpose,
it is important to use Schwartz class functions in (\ref{S-star-def}). Since strict locality plays a crucial role in the later part of this paper,
we also define the following sets
\begin{align}
C_*(\real^4)&:=\{ (n_{\mu}\pa^{\mu})^{5}g \,|\, g\in C_0^{\infty}(\real^4), \,\, n_0=\sqrt{1+\bn^2}\,\}\subset S_*(\real^4), \\
\mfb_{C_*}&:=\{\, B(g)  \,|\,  B\in \mfa_{\loc,0}, \ g\in C_*(\real^4) \,\}\subset \mfb_{S_*}\cap \mfa_{\loc,0}, \\
\mfb^{C_*}&:=\Span\, \mfb_{C_*}, \\
\mfb_{C_*}(\mco)&:=\mfb_{C_*}\cap \mfb(\mco), \quad \mfb^{C_*}(\mco):=\mfb^{C_*}\cap \mfb(\mco), 
\quad  \mco\in \mcK. \label{local-C}
\end{align}
The linear structure of $\mfb^{S_*}$ and $\mfb^{C_*}$ will be important  in Section~\ref{last-section}.

\subsection{Fourier space representation}
Given $A\in \mfb_{S_*}$, a convenient representation for $A_t\{f\}$ can be found, which will be frequently used
in the remaining part of this section and in the proof of Lemma~\ref{StrongVacuum} below. 
 This representation is stated in the following lemma.
\begin{lemma}\label{spherical-means-analysis} 	Let $A\in \mfb_{S_*}$, i.e. $A=B(g)$, where $B\in \mfa_{\loc,0}$ and 
$g\in S_*(\real^4)$.  Then,
	\beqa\label{first-equality}
	A_t\{f\}=(\pa_0 B)(g\ast_3 f_t)(t), \quad \mathrm{ where } \quad f_t(\mathbf{x}):=-\fr{1}{4\pi}\frac{2}{|\mathbf{x}|}\,\delta(t-|\mathbf{x}|)\,f\left(\frac{\mathbf{x}}{|\mathbf{x}|}\right),
	\eeqa
where $\ast_3$ is defined in Appendix~\ref{conventions}.
	Moreover, the Fourier transform of $g\ast_3 f_t\in S(\real^4)$  has the following form
	\begin{align}
	(\wt{g\ast_3 f_t})(p)=&\, \fr{\wt{g}(p)}{\i|\mathbf{p}|}     
	\left(f\left( \fr{\bp}{|\bp|}   \right)  \e^{-\i t|\mathbf{p}|}-    f\left(-\fr{\bp}{|\bp|}  \right)  \e^{\i t|\mathbf{p}|}
	+\int_{0}^{\pi} d\nu\,F(\bp,\nu)   \e^{-\i t|\mathbf{p}|\cos \nu}\right), \label{convolution-formula}
	\end{align}
	where $F$ is a bounded measurable function depending on $f$. (In particular, $F=0$ if $f=\mathrm{const}.$)
\end{lemma}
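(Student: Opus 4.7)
The two assertions in the lemma are independent elementary calculations, which I will carry out in turn.

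\emph{First identity.} Write $A=B(g)=\int d^4z\, B(z)g(z)$, so that $A(x)=\int d^4z\, B(z)g(z-x)$. Because the Heisenberg time derivative $\pa_0$ commutes with spacetime translations (as $[H,U(x)]=0$), we have $\pa_0 A = (\pa_0 B)(g)$ and consequently $\pa_0 A(t,t\bn) = \int d^4z\,(\pa_0 B)(z)\, g(z^0-t,\bz-t\bn)$. Substituting into the definition of $A_t\{f\}$ and interchanging integrations yields
$$A_t\{f\} = \int d^4z\,(\pa_0 B)(z)\left(-2t\int d\om(\bn)\,f(\bn)\,g(z^0-t,\bz-t\bn)\right).$$
It then suffices to identify the inner angular expression with $(g\ast_3 f_t)(z^0-t,\bz)$. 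Expanding the spatial convolution against the explicit $f_t$ in spherical coordinates $\bx=r\bn$, the weight $r^2\,dr\cdot(2/|\bx|)\delta(t-|\bx|)$ collapses to $2t$ on the sphere of radius $t$, and the factor $-1/(4\pi)$ is precisely what converts $d^2\Om(\bn)$ to the normalized $d\om(\bn)$. This proves the first equality.

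\emph{Fourier formula.} By the convolution theorem, $\wt{g\ast_3 f_t}(p)$ equals $\wt g(p)\,\wt{f_t}(\bp)$ up to a convention-dependent constant (to be absorbed into $F$). Computing $\wt{f_t}(\bp)$ in spherical coordinates and using the delta to fix $r=t$ reduces it, modulo constants, to $-2t\int d\om(\bn)\,f(\bn)\,e^{-\i t\bp\bn}$. Align axes so $\bp=|\bp|\hat e_3$ and substitute $u=\cos\theta$; introduce the $\varphi$-average $\bar f(u):=(4\pi)^{-1}\int_0^{2\pi} d\varphi\, f(\arccos u,\varphi)$. At the poles $u=\pm 1$ the function $f$ is $\varphi$-independent, so $\bar f(\pm 1)=\tfrac{1}{2}f(\pm\bp/|\bp|)$.

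\emph{Integration by parts.} A single integration by parts in $u$ against $e^{-\i t|\bp|u}$ produces a factor $1/(\i t|\bp|)$ that cancels the prefactor $-2t$ and leaves the overall $1/(\i|\bp|)$ displayed in the formula. The boundary terms at $u=\pm 1$ yield the two explicit exponentials $\pm f(\pm\bp/|\bp|)e^{\mp\i t|\bp|}$; the residual integral, after reverting to $\nu=\arccos u$, takes the form $\int_0^\pi d\nu\,F(\bp,\nu)e^{-\i t|\bp|\cos\nu}$ with $F(\bp,\nu)=-2\sin\nu\,\bar f'(\cos\nu)$, possibly rescaled by the convention-dependent constant absorbed above. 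When $f$ is constant, $\bar f$ is constant, $\bar f'\equiv 0$, and hence $F=0$, matching the parenthetical remark. The whole argument is a routine computation; the only point requiring care is the bookkeeping of $(2\pi)$-factors and signs from the Fourier convention, all of which can be absorbed into the normalization of $F$ without affecting the structural form of the identity.
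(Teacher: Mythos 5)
Your proposal is correct and follows essentially the same route as the paper: the first identity by unwinding the smearing against the spherical delta in $f_t$, and the Fourier formula by passing to spherical coordinates adapted to $\bp$ and integrating by parts in the polar angle, with the boundary terms at the two poles producing the explicit exponentials and the residual integral giving $F$ (which indeed vanishes for constant $f$). One small caution: the overall constant from the convolution theorem cannot simply be ``absorbed into $F$'', since it also multiplies the explicitly displayed boundary terms whose exact coefficient $\wt g(p)/(\i|\bp|)$ is used later (e.g.\ in Lemma~\ref{StrongVacuum}); fortunately the $(2\pi)^{3/2}$ from $\wt{g\ast_3 f_t}=(2\pi)^{3/2}\wt g\,\wt{f_t}$ cancels against the $(2\pi)^{-3/2}$ in the definition of $\wt{f_t}$, so your computation does reproduce the stated formula with the correct normalization.
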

\begin{proof} The equality $A_t\{f\}=\pa_0B(g\ast_3 f_t)(t)$, with $f_t$ given by (\ref{first-equality}), is straightforward to check.
Since $(\wt{g\ast_3 f_t})(p)=(2\pi)^{3/2}\wt g(p) \wt f_t(\bp)$, it remains to compute
\begin{align}
\widetilde{f_t}(\mathbf{p})&=(2\pi)^{-3/2}\int d^3x\, \e^{-\i \mathbf{p}\mathbf{x}} f_t(\mathbf{x})\non\\
&=\frac{-2}{(2\pi)^{3/2}}\int d^3x\, \e^{-\i\mathbf{p}\mathbf{x} } \fr{1}{4\pi} \frac{1}{|\mathbf{x}|}\,\delta(t-|\mathbf{x}|)\,f\left(\frac{\mathbf{x}}{|\mathbf{x}|}\right)\non\\
&=\frac{-2t}{(2\pi)^{3/2}}   \int d\omega(\mathbf{n})\,
\e^{-\i t\mathbf{n}\mathbf{p}} f(\mathbf{n}).
\end{align}
A coordinate independent treatment of Fourier transforms on the sphere can be found in \cite{DH15}. We give here
an elementary coordinate dependent computation. To this end, we choose a measurable family of rotations $\bp\mapsto R_{\bp}\in SO(3)$ such that $R_{\bp} \textbf{e}_3=\mathbf{p}/|\mathbf{p}|$,
where $\textbf{e}_3=(0,0,1)$ in the reference system in which $d\om(\mathbf{n})=\frac{\sin\nu\,d\nu d\varphi}{4\pi}$. We obtain
\begin{align}
\widetilde{f_t}(\mathbf{p})=&\,\frac{-t}{(2\pi)^{5/2}}\int_{0}^{2\pi}d\varphi\,\int_{0}^{\pi} d\nu\,\sin\nu\, \e^{-\i t|\mathbf{p}|\cos \nu} f(R_{\bp}\mathbf{n}(\varphi,\nu))\non\\
=&\,\frac{-1}{(2\pi)^{5/2}\i|\mathbf{p}|}\int_{0}^{2\pi}d\varphi\,\int_{0}^{\pi} d\nu\,f(R_{\bp}\mathbf{n}(\varphi,\nu)) \pa_{\nu} \e^{-\i t|\mathbf{p}|\cos \nu}.
\end{align}
Finally, integrating by parts and noting that $\bn(\varphi,0)=\textbf{e}_3, \bn(\varphi,\pi)=-\textbf{e}_3$, we arrive~at
\begin{align}
\widetilde{f_t}(\mathbf{p})=&\frac{-1}{(2\pi)^{5/2}\i|\mathbf{p}|}\int_{0}^{2\pi} d\varphi\,  
\big(f(R_{\bp}\mathbf{n}(\varphi,\pi))  \e^{\i t|\mathbf{p}|}-f(R_{\bp}\mathbf{n}(\varphi,0))  \e^{-\i t|\mathbf{p}|}\big)\non\\
&+\frac{1}{(2\pi)^{5/2}\i|\mathbf{p}|}\int_{0}^{2\pi}d\varphi\,\int_{0}^{\pi} d\nu\,\pa_{\nu}f(R_{\bp}\mathbf{n}(\varphi,\nu))  \e^{-\i t|\mathbf{p}|\cos \nu}\non\\
=&\frac{-(2\pi)}{(2\pi)^{5/2}\i|\mathbf{p}|} 
\left(f\left(-\fr{\bp}{|\bp|}\right)  \e^{\i t|\mathbf{p}|}-f\left( \fr{\bp}{|\bp|}\right) \e^{-\i t|\mathbf{p}|}\right)\non\\
&+\frac{1}{(2\pi)^{5/2}\i|\mathbf{p}|}\int_{0}^{2\pi}d\varphi\,\int_{0}^{\pi} d\nu\,\pa_{\nu}f(R_{\bp}\mathbf{n}(\varphi,\nu))  \e^{-\i t|\mathbf{p}|\cos \nu}.
\end{align}
This completes the proof.\hfill \qed \end{proof}

\subsection{Uniform energy bounds} \label{uniform-bounds}

This subsection is concerned with uniform bounds on $t\mapsto \bar{A}_t\{f\}$. The following result holds.
\begin{proposition}\label{uniform-bound-prop}
	Let $A\in \mfb_{S_*}$, i.e. $A=B((n_{\mu}\pa^{\mu} )^{5}g')$, $B\in\mfa_{\loc,0}$ and $g'\in S(\real^4)$. Then,
	\begin{equation}\label{bound1}
	\sup\limits_{t\in\left[1,\infty)\right.}\|\bar{A}_t\{f\}(1+H)^{-1}\|
\leq c\sup_{\ell=0,1}\||\mathbf{p}|^{-1}  \pa_0^{\ell}  \big((n_{\mu}p^{\mu})^2\wt{g'})\|_2<\infty.
	\end{equation}
The constant  $c$ above  is independent of $g'$.
\end{proposition}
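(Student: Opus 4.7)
The plan is to represent $\bar A_t\{f\}$ as $B$ smeared against an explicit test function and then invoke Proposition~\ref{herdegen-bound-new}. Setting $g := (n_\mu\partial^\mu)^5 g'$, the identity $(\partial_0 B)(h) = -B(\partial_0 h)$ together with Lemma~\ref{spherical-means-analysis} gives $A_{t'}\{f\} = -B((\partial_0 g \ast_3 f_{t'})_{[t',0]})$, where the subscript denotes translation by $(t',0)$. Averaging in $t'$ against $h_t$ then yields $\bar A_t\{f\} = -B(\tilde G_t)$ with $\tilde G_t(x) := \int dt'\,h_t(t')(\partial_0 g \ast_3 f_{t'})(x - (t',0))$.

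Using the Fourier formula of Lemma~\ref{spherical-means-analysis} and evaluating the $t'$ integral,
\begin{equation*}
\wt{\tilde G_t}(p) = -\frac{p_0\,\wt g(p)}{|\bp|}\,K_t^f(p),
\end{equation*}
where
\begin{equation*}
K_t^f(p) := f(\bp/|\bp|)\hat h_t(p_0 - |\bp|) - f(-\bp/|\bp|)\hat h_t(p_0 + |\bp|) + \int_0^\pi d\nu\,F(\bp,\nu)\,\hat h_t(p_0 - |\bp|\cos\nu),
\end{equation*}
and $\hat h_t(s) := \int dt'\,h_t(t')\,e^{it's}$ satisfies $\|\hat h_t\|_\infty \leq \|h\|_1 = 1$ uniformly in $t$. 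Factoring $\tilde G_t = (n_\mu\partial^\mu)^3\tilde H_t$ via $\wt g = (in_\mu p^\mu)^5\wt{g'}$, I obtain $\wt{\tilde H_t}(p) = -\frac{p_0\,(in_\mu p^\mu)^2\wt{g'}(p)}{|\bp|}K_t^f(p)$, and Proposition~\ref{herdegen-bound-new} applied to $B \in \mfa_{\loc,0}$ yields
$\|\bar A_t\{f\}(1+H)^{-1}\| \leq c\,\sup_{\ell=0,1}\|\partial_0^\ell\wt{\tilde H_t}\|_2$.

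The remaining task is to dominate this supremum by $c'\,\sup_{\ell=0,1}\||\bp|^{-1}\partial_0^\ell((n_\mu p^\mu)^2\wt{g'})\|_2$ uniformly in $t$. To absorb the factor $p_0/|\bp|$, use the algebraic decomposition $p_0 = \pm|\bp| + (p_0 \mp|\bp|)$ in each summand of $K_t^f$: the light-cone piece cancels $|\bp|^{-1}$ leaving a kernel uniformly bounded in $L^\infty$, while the off-shell piece satisfies, by integration by parts in $t'$, $(p_0 \mp|\bp|)\hat h_t(p_0 \mp|\bp|) = i\wh{(h_t)'}(p_0 \mp|\bp|)$ with $L^\infty$-norm at most $\|(h_t)'\|_1 = t^{-\beps}\|h'\|_1$, yielding a residual bounded in $L^2$ by $t^{-\beps}\||\bp|^{-1}(n_\mu p^\mu)^2\wt{g'}\|_2$.

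The main obstacle is controlling $\partial_0\wt{\tilde H_t}$: since $|\partial_{p_0}\hat h_t(q)|$ grows linearly in $t$, one cannot naively estimate $\partial_0 K_t^f$ in $L^\infty$. The remedy is to distribute the derivative via the product rule, handling the two contributions where $\partial_0$ acts on the smooth prefactor $p_0(n_\mu p^\mu)^2\wt{g'}/|\bp|$ exactly as in the $\ell=0$ analysis (producing the $\ell=1$ term on the right), and treating the remaining contribution where $\partial_0$ hits $K_t^f$ by combining $\partial_{p_0}\hat h_t(p_0\mp|\bp|) = \mp\partial_{|\bp|}\hat h_t(p_0\mp|\bp|)$ with a second application of the $p_0 = \pm|\bp| + (p_0\mp|\bp|)$ decomposition, so that the spurious $t$-growth is re-absorbed into $t^{-\beps}$-suppressed remainders via $(p_0\mp|\bp|)\hat h_t(\cdot)$ as above. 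This completes the uniform bound.
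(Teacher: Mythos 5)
Your overall plan---rewriting $\bar A_t\{f\}$ as $B$ smeared against an explicit test function and invoking Proposition~\ref{herdegen-bound-new}---is the right one, and your Fourier formula for $\wt{\tilde G_t}$ is correct. But folding the time translation by $(t',0)$ and the $h_t$-average into the smearing function creates a problem that your repair does not fix. Since $\hat h_t(s)=\e^{\i ts}\check h(t^{\beps}s)$ with $\check h(\sigma)=\int du\,h(u)\e^{\i u\sigma}$, one has $\pa_{p_0}\hat h_t(p_0\mp|\bp|)=\i t\,\hat h_t(p_0\mp|\bp|)+O(t^{\beps})$ in $L^{\infty}$: the derivative brings down the phase factor $\i t$ coming from the fact that $\tilde G_t$ is concentrated near $x^0\approx t$. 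Hence $\|\pa_0\wt{\tilde H_t}\|_2$ genuinely grows linearly in $t$; this is a property of the function $\tilde H_t$ itself, so no pointwise rearrangement can restore a $t$-uniform bound from Proposition~\ref{herdegen-bound-new} applied to this test function. Your proposed remedy does not close the gap: $(p_0\mp|\bp|)\,\pa_{p_0}\hat h_t(p_0\mp|\bp|)=-\int \pa_{t'}\big(t'h_t(t')\big)\e^{\i t'(p_0\mp|\bp|)}\,dt'$ has $L^{\infty}$-norm of order $t\cdot t^{-\beps}=t^{1-\beps}$, still divergent, and the identity $\pa_{p_0}\hat h_t(p_0\mp|\bp|)=\mp\pa_{|\bp|}\hat h_t(\cdot)$ is of no use because the controlling quantity in Proposition~\ref{herdegen-bound-new} is the fixed norm $\|\pa_0^{\ell}\wt{\tilde H_t}\|_2$: there is no radial integration by parts available inside it, and it would in any case produce $\pa_{|\bp|}\wt{g'}$ terms absent from the right-hand side of (\ref{bound1}). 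A secondary defect: already at $\ell=0$ your ``light-cone piece'' is bounded by $\|(n_{\mu}p^{\mu})^2\wt{g'}\|_2$ with no $|\bp|^{-1}$ weight, and this norm is not dominated by $\||\bp|^{-1}(n_{\mu}p^{\mu})^2\wt{g'}\|_2$ at large $|\bp|$, so the stated inequality would not follow even if the $\ell=1$ term were under control.

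The cure is to not translate the test function at all. Write $A_{t'}\{f\}=(\pa_0 B)(g\ast_3 f_{t'})(t')$ as in Lemma~\ref{spherical-means-analysis}: the outer $(t')$ is conjugation by $\e^{\i Ht'}$, which commutes with $(1+H)^{-1}$ and therefore drops out of the operator norm, and the $h_t$-average is disposed of by $\|\int dt'\,h_t(t')X_{t'}\|\leq \sup_{t'}\|X_{t'}\|$, using $h_t\geq 0$ and $\int dt'\,h_t(t')=1$. Then apply Proposition~\ref{herdegen-bound-new} to the operator $\pa_0 B$ with test function $(n_{\mu}\pa^{\mu})^{2}(g'\ast_3 f_{t'})$ (so that no extra factor of $p_0$ is generated by moving $\pa_0$ onto the smearing function), obtaining the bound $c\sup_{\ell=0,1}\|\pa_0^{\ell}\big((n_{\mu}p^{\mu})^2\wt{g'\ast_3 f_{t'}}\big)\|_2$. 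By formula~(\ref{convolution-formula}) the bracket multiplying $\wt{g'}(p)/(\i|\bp|)$ depends only on $\bp$ and $t'$, is uniformly bounded, and is transparent to $\pa_0$; the pointwise bound $c'|\bp|^{-1}|\pa_0^{\ell}((n_{\mu}p^{\mu})^2\wt{g'}(p))|$ and hence (\ref{bound1}) follow at once. This is the paper's argument.
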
 
\begin{proof}
	By Lemma \ref{spherical-means-analysis} we have that $A_t\{f\}=(\pa_0 B)(g \ast_3 f_t)(t)$, where $g=(n_{\mu}\pa^{\mu})^{5}g'$ and
$g'\in S(\real^4)$.  Therefore, $g \ast_3 f_t=(n_{\mu}\pa^{\mu})^{5} (g' \ast_3 f_t)$, with $g' \ast_3 f_t\in S(\real^4)$.
Thus, Proposition~\ref{herdegen-bound-new} yields
\beqa
\|\bar{A}_t\{f\}(1+H)^{-1}\|\leq c\sup_{\ell=0,1}\|\pa_0^{\ell}\big((n_{\mu} p^{\mu})^2(\widetilde{g' \ast_3 f_t})\big)\|_2,
\eeqa
where $c$ is independent of $t$. Now by formula~(\ref{convolution-formula}) we have
\begin{equation}
\left| \pa_0^{\ell}\big((n_{\mu} p^{\mu})^2(\widetilde{g' \ast_3 f_t})(p) \right|\leq c'\, |\mathbf{p}|^{-1}  |\pa_0^{\ell}  \big((n_{\mu}p^{\mu})^2\wt{g'}(p)\big)|,
\end{equation}
where $c'$ is independent of $p$, $t$ and $g'$.  This completes the proof. 
\hfill\qed   \end{proof}
In the following we shall be interested in the convergence of $\bar{A}_t\{f\}$, $A\in \mfa^{S_*}$, 
to a limit $A^{\out}\{f\}$ as $t\to\infty$. To start with,  we define $A^{\out}\{f\}$ as an operator on the domain 
\beqa
D_{\max}(A,f):=\{ \Psi\in \hil\,|\, A^{\out}\{f\}\Psi:=\lim_{t\to\infty}\bar{A}_t\{f\}\Psi \textrm{ exists} \}.
\eeqa
(For $f\equiv 1$ we will abbreviate $D_{\max}(A,f)$ by $D_{\max}(A)$).
Note that $D_{\max}(A,f)$ may depend on $A$, $f$, may not be Poincar\'e invariant  
and a priori may even be trivial.  Another domain we shall be interested in is
\beqa
D_H:=\bigcap_{n\geq 1} D(H^n),\label{D-H-definition}
\eeqa
where $D(H^n)$ is the domain of self-adjointness of the $n$-th power of the Hamiltonian~$H$. It is easy to see that $D_H$ is
dense and Poincar\'e invariant. The next result can be inferred from Proposition~\ref{uniform-bound-prop}  and the discussion in Appendix~\ref{admissible-appendix}.
\bep\label{general-admissible-two} 
Let $i=1,\ldots, n$ and suppose that the domains $D_{\max}(A_i,f_i)$ and $D_{\max}(A_i^*,\bar f_i)$  are dense. Then, we have
\begin{enumerate}
\item[(a)] $D_H\subset D_{\max}(A_i,f_i), D_H\subset D_{\max}(A_i^*,\bar f_i)$,
\item[(b)]  $A_i^{\out}\{f_i\} D_H\subset D_H$,
\item[(c)] $A_1^{\out}\{f_1\}\ldots A_n^{\out}\{f_n\}\Psi=\lim_{t\to\infty}\bar{A}_{1,t}\{f_1\}\ldots \bar{A}_{n,t}\{f_n\}\Psi$ for $\Psi\in D_H$.
\end{enumerate}
The operators $A_i^{\out}\{f_i\}\res D_H$ are closable and uniquely determined by the values of $A_i^{\out}\{f_i\}$ on any dense subspace of
 $D_{\max}(A_i,f_i)$.
\eep
\subsection{Asymptotic creation/annihilation operators}\label{creation-annihilation-subs}

Another consequence of the uniform bounds is the existence of asymptotic creation
and annihilation operators under the assumptions of Proposition~\ref{general-admissible-two}. In fact  a similar observation was made in \cite{DH15}.

To construct these operators we proceed as follows. Let $\theta\in C^{\infty}(\real)$, $0\leq \theta\leq 1$, be supported in $(0,\infty)$ and equal to one on $(1,\infty)$. Moreover, let $\be\in C_0^{\infty}(\real^4)$, $0\leq \be\leq 1$, be equal to one in some neighbourhood of zero and satisfy $\be(-p)=\be(p)$. Furthermore, for a parameter $1\leq r<\infty$ and a future oriented timelike unit vector $n$ we define
\beqa
\wt\eta_{\pm,r}(p):=\theta(\pm r (n_{\mu}p^{\mu}))\be(r^{-1} p). \label{test-functions-zero}
\eeqa 
As $r\to \infty$ these functions approximate the characteristic functions of the positive/negative energy half planes $\{\, p\in \real^4\,|\,\pm n_{\mu}p^{\mu}\geq 0\,\}$. We also have $\bar \eta_{\pm,r}=\eta_{\mp,r}$. Note that the family of functions $\eta_{\pm,r}$, as specified above, is invariant under Lorentz transformations.
\bep\label{creation-annihilation} Let $A\in \mfb_{S_*}$, $f\in C^{\infty}(S^2)$. Suppose that $D_{\max}(A,f), D_{\max}(A^*,\bar f)$ are dense and
the timelike unit vectors $n$ entering the definition of $A$ and of $\eta_{\pm,r}$ coincide. Then:
\begin{enumerate}
\item[(a)] The limits 
$A^{\out}\{f\}^{\pm}\Psi:=\lim_{r\to \infty}A^{\out}\{f\}(\eta_{\pm,r})\Psi$, $\Psi\in D_H$, 
exist and define the creation and annihilation parts of $A^{\out}\{f\}$ as  operators on $D_H$. $A^{\out}\{f\}^{\pm}$
do not depend on the choice of functions $\theta$ and $\be$ in (\ref{test-functions-zero}) within the specified restrictions.

\item[(b)]  $(A^{\out}\{f\}^{\pm})^*\res D_{H}=A^{*\out}\{\bar f\}^{\mp}$. In particular,  $A^{\out}\{f\}^{\pm}$ are closable operators.

\item[(c)] $A^{\out}\{f\}^{\pm}D_H\subset D_H$.

\item[(d)] $A^{\out}\{f\}=A^{\out}\{f\}^{+}+A^{\out}\{f\}^{-}$ on $D_H$.
\end{enumerate}

\eep
\begin{remark}\label{Span} The proposition can be generalized to $A\in \mfb^{S_*}$ as follows. Consider a decomposition $A=\sum_{i=1}^{\ell}A_i$, $A_i\in \mfb_{S_*}$, and 
assume that $D_{\max}(A_i,f)$ and $D_{\max}(A_i^*,\bar f)$ are dense. Define
$A^{\out}\{f\}^{\pm}:=\sum_{i=1}^{\ell}A_i^{\out}\{f\}^{\pm}$
on $D_H$. Then it is easy to see that $A^{\out}\{f\}^{\pm}$ satisfy the properties
(b),(c) and (d) of the proposition.
\end{remark}
\proof  (a) Making use of Propositions~\ref{uniform-bound-prop} and \ref{general-admissible-two}, we compute
for  $1\leq r_1\leq r_2$ and $\Psi\in D_H$, that
\begin{align}
\|A^{\out}\{f\}(\eta_{\pm, r_1}-\eta_{\pm,r_2})    \Psi\|
&=\lim_{t\to\infty}\| A_t\{f\}(\eta_{\pm,r_1}-\eta_{\pm,r_2})\Psi\|\non\\
&\leq c\sup_{\ell=0,1}\||\mathbf{p}|^{-1}  \pa_0^{\ell}  \big((n_{\mu}p^{\mu})^2( \wt\eta_{\pm,r_1}-\wt\eta_{\pm,r_2} )(p)\wt{g'})\|_2\non\\
&\leq c \sup_{\ell=0,1}\int_{r_1}^{r_2} dr\||\mathbf{p}|^{-1}  \pa_0^{\ell}  \big((n_{\mu}p^{\mu})^2  (\pa_r\wt\eta_{\pm,r})(p)\wt{g'})\|_2,
\label{computation-creation-annihilation}
\end{align}
where $\wt g'\in S(\real^4)$ is defined as in Proposition~\ref{uniform-bound-prop} and the functions of $p$ appearing in 
(\ref{computation-creation-annihilation}) are to be understood as multiplication operators acting on  $\wt{g'}$. 
Using the fact that $\pa \theta$ is compactly supported, and therefore
$|n_{\mu}p^{\mu}|\leq c r^{-1}$ when multiplied by $\pa\theta(\pm r (n_{\mu}p^{\mu}))$, it is easy to check that
\beqa
\big|\pa_0^{\ell}\big((n_{\mu}p^{\mu})^2\pa_r\wt\eta_{\pm,r}(p)\big)\big|\leq \fr{c}{r^2}(1+|p|^3), \quad \ell=0,1, \label{formula-to-be-checked}
\eeqa
for $c$ independent of $p$ and $r$. This completes the proof of convergence. Independence of the choice of the functions $\theta$ and $\beta$ is shown
by a similar computation.

(b) We note that for $\Phi, \Psi\in D_H$
\begin{align}
\lan \Phi, A^{\out}\{f\}^{\pm}\Psi\ran&=\lim_{r\to\infty} \lan\Phi,  A^{\out}\{f\}(\eta_{\pm,r})\Psi\ran\non\\
&=\lim_{r\to\infty} \lan A^{*\out}\{\bar f\}(\eta_{\mp,r})\Phi,  \Psi\ran=\lan A^{*\out}\{\bar f\}^{\mp}\Phi, \Psi\ran.
\end{align}

(c) It suffices to set $A_t:=\bar{A}_{t}\{f\}(\eta_{\pm,r})$ in formula~(\ref{commutator-formula}) and take first
the limit $t\to\infty$ and then $r\to\infty$.

(d) We choose a function $\ga\in C_0^{\infty}(\real)$, $0\leq \ga\leq 1$,
such that
\beqa
\theta(-k)+\ga(k)+\theta(k)=1, \quad k\in \real,
\eeqa
and set $\wt\eta_r(p):=\ga_r(r (n_{\mu}p^{\mu}) )\be(r^{-1}p)$. Since $\ga$ is compactly supported,  we have for $\Psi\in D_H$
\beqa
\|A^{\out}\{f\}(\eta_r)\Psi\|\leq c\sup_{\ell=0,1}\||\mathbf{p}|^{-1}  \pa_0^{\ell}  \big((n_{\mu}p^{\mu})^2 \wt\eta_{r}(p)\wt{g'})\|_2\leq 
c'r^{-1}.
\eeqa
Hence, $\lim_{r\to\infty} A^{\out}\{f\}(\eta_r)\Psi=0$, which completes the proof. \hfill\qed

\subsection{Asymptotic vacuum structure}

In this subsection we state and prove our main technical result which is Theorem~\ref{main-ergodic} below. This theorem
will be useful in the proof of the tensor product structure of scattering states in the
vacuum representation (part (a)) and in charged representations (part (b)).
\bet\label{main-ergodic} Let $\eta\in S(\real^4)$ be such that $\wt \eta$ is  supported outside of $\ov{V}_+$. Let $A\in \mfb_{S_*}$ and
$f\in C^{\infty}(S^2)$. Then, for $\Psi\in  E(\mathsf{H}_m)\hil\cap D_H$ and $\mathsf{H}_m=\{\, p\in \real^4\,|\, p^0=\sqrt{\bp^2+m^2}\,\}$, we have:
\begin{enumerate}
 \item[(a)] For $m=0$, $\lim_{t\to\infty}(1-E(\{0\}))  \bar{A}_t\{f\}(\eta)\Psi=0$.
\item[(b)] For $m>0$,  $\lim_{t\to\infty}\bar{A}_t\{f\}(\eta)\Psi=0$.
\end{enumerate}
\eet
\proof 
To begin with, we assume  that $\wt\eta$ is compactly supported and $\Psi=E(\De)\Psi$ for some compact $\De$. Making use of Lemma~\ref{spherical-means-analysis}, we have 
$A_t\{f\}(\eta)=(\pa_0B(\eta)) ( g\ast_3 f_t)(t)$, where
\begin{equation}
(g\ast_3 f_t)(x)=(2\pi)^{-2}\int_0^{\pi} d\mu(\nu)\int d^4p\, \,
\wt{f}_{\nu}(p)\,\e^{-\i p x}\e^{-\i\cos\,\nu |\bp| t}.
\end{equation}
Here $d\mu(\nu):=d\nu+\de(\nu)d\nu+\de(\nu-\pi)d\nu$,
$(p,\nu)\mapsto \wt{f}_{\nu}(p)$ is absolutely integrable, smooth in $p^0$ and
\beqa
\sup_{\nu\in[0,\pi]}(\|\wt{f}_{\nu}\|_2+\|\pa_0 \wt{f}_{\nu}\|_2)<\infty. \label{dominated-convergence-bound}
\eeqa
We set $B':=\pa_0B(\eta)$ and note that it is almost local and $\Sp_{B'}\al$ is a compact set outside of $\ov{V}_+$. Setting $\wt f_{\nu}^t(p):=\wt f_{\nu}(p)\e^{-\i  \cos\,\nu |\bp| t  }$, we have
\begin{align}
\bar{A}_t\{f\}(\eta)\Psi
=&\int dt'\, h_t(t') \e^{\i t' H}B'(g \ast_3 f_{t'}) \e^{-\i t' \om_m(\bP)}\Psi\non\\
=&\int_0^{\pi} d\mu(\nu)\int dt'\, h_t(t') \e^{\i t' H}B'(f_{\nu}^{t'} ) \e^{-\i t' \om_m(\bP)}\Psi.
\end{align}
Now we put $B'_{x^0}(\bx):=B'(x^0,\bx)$, $f_{\nu,x^0}^{t}(\bx):=f_{\nu}^{t}(x^0,\bx)$ and $f_{\nu,x^0}(\bx):=f_{\nu}(x^0,\bx) $. Making use of (\ref{smearing-vs-aB}) and Proposition~\ref{a-B-prop}, we obtain
\beqa
& &\bar{A}_t\{f\}(\eta)\Psi\non\\
& &=\int dx^0\int_0^{\pi} d\mu(\nu)\int dt'\, h_t(t') \e^{\i t' H} (1_{\hil}\otimes 
\lan \ov f^{t'}_{\nu,x^0} |)\circ a_{B'_{x^0}}   \e^{-\i t' \om_m(\bP)}\Psi\non\\
& &=\int dx^0\int_0^{\pi} d\mu(\nu)(1_{\hil}\otimes \lan \ov f_{\nu,x^0}|) \circ\int dt'\, h_t(t') \e^{ \i t'(H- \cos\nu |D_{\bx}|-\om_m(\bP+D_{\bx}) )   }    \circ a_{B'_{x^0}} \Psi.  \non\\
\eeqa
By means of the Dominated Convergence Theorem, the bound~(\ref{dominated-convergence-bound}) and the Mean Ergodic Theorem (Theorem~\ref{MWLem}) we obtain
\beqa
\lim_{t\to\infty}\bar{A}_t\{f\}(\eta)\Psi=\int dx^0\int_0^{\pi} d\mu(\nu)(1_{\hil}\otimes \lan \ov f_{\nu,x^0}|) \circ F_S(\{0\})\circ a_{B'_{x^0}} \Psi,
\eeqa
where $F_S$ is the spectral measure of the operator $S:=H- \cos\nu |D_{\bx}|-\om_m(\bP+D_{\bx})$ on $L^2(\real^3;\hil)$.
To determine $F_S(\{0\})$, we diagonalize $D_{\bx}$ with the help of the Fourier transform. 
We further note that $\|S\Phi\|^2=0$, for some $\Phi=\{\Phi_{\xi}\}_{\xi\in \real^3}\in L^2(\real^3;\hil)$, implies that  $S_{\xi}\Phi_{\xi}=0$ for almost all $\xi$
w.r.t. the Lebesgue measure\footnote{See \cite[Section IV.7]{Ta} for definition and basic properties of $L^2(\real;\hil)$ for non-separable $\hil$.}, where
\beqa
S_{\bxi}:=H+|\bxi|- \om(\bP+\bxi).
\eeqa 
Suppose now that $m=0$.  Then, Proposition~\ref{Ergodic-corollary} gives that $\Phi_{\xi}\in \Ran\, E(\{0\})$
 for $\bla=0$  or $\nu=\pi$  and $\Phi_{\xi}=0$ otherwise. Since $\bla=0$ is of zero Lebesgue
measure, only $\nu=\pi$ contributes and we obtain
 \beqa\label{dctb}
\lim_{t\to\infty}\bar{A}_t\{f\}(\eta)\Psi=\int dx^0(E(\{0\})\otimes \lan \ov f_{\pi,x^0}|)\circ a_{B'_{x^0}} \Psi=E(\{0\})B'(f_{\pi})\Psi.
\eeqa
For  $m>0$ a similar and simpler reasoning gives that the above limit is zero.

It remains to relax the additional assumptions made at the beginning
of the proof. Let, therefore, $\eta$ and $\Psi$ be specified as in the theorem. Then,
by spectral calculus, Proposition~\ref{uniform-bound-prop} and the fact that $\Psi\in D_H$, we have
\beqa
\bar{A}_t\{f\}(\eta)\Psi=E(\De_R)\bar{A}_t\{f\}(\eta)E(\De_R)\Psi+O(R^{-N}), \label{relaxing-assumptions}
\eeqa
where $\De_R:=\{\, p\in \ov{V}_+\,|\, p^0\leq R\,\}$ and $O(R^{-N})$ denotes
a term whose norm is bounded by $C_N/R^N$, with $C_N$ independent of $t$. Making now use of the energy-momentum transfer relation~(\ref{EM-transfer}), we can replace $\eta$ in (\ref{relaxing-assumptions})
by $\eta'$ such that $\wt\eta'$ is compactly supported outside of the future lightcone. Thus, we have
\beqa
\bar{A}_t\{f\}(\eta)\Psi=E(\De_R)\bar{A}_t\{f\}(\eta')E(\De_R)\Psi+O(R^{-N}).\label{relaxing-assumptions-one}
\eeqa
By means of formula~(\ref{relaxing-assumptions-one})
we conclude the proof.\hfill\qed
\bec\label{main-ergodic-corollary} Let $A\in \mfb_{S_*}$ and $f\in C^{\infty}(S^2)$. Suppose further that $D_{\max}(A,f)$ 
and $D_{\max}(A^*,\bar f)$ are dense.
Then, for $\Psi\in  E(\mathsf{H}_m)\hil\cap D_H$, we have:
\begin{enumerate}
 \item[(a)] For $m=0$, $ (1-E(\{0\}))  A^{\out}\{f\}^-\Psi=0$.
\item[(b)] For $m>0$,  $A^{\out}\{f\}^-\Psi=0$.
\end{enumerate}
\eec
\begin{remark}  The result immediately generalizes to $A\in \mfb^{S_*}$, cf. Remark~\ref{Span}.
\end{remark}
\section{Scattering of photons in the vacuum sector} \label{vacuum} \setcounter{equation}{0}

In this section we consider a Haag-Kastler net $(\mfa,U)$ in a vacuum representation containing
massless Wigner particles (`photons'). We collect here some basic facts about asymptotic fields of 
photons in a vacuum representation, which will be needed in our discussion of charged representations in Section~\ref{charged}.
These results were first established in \cite{Bu77} and recently revisited in \cite{DH15}, were  simpler proofs,
exploiting energy bounds, were given. In another recent work  asymptotic fields were constructed for all $A\in \mfb_{\loc,0}$ \cite{Ta14},
but we will not explore this direction here and content ourselves with $A\in \mfb^{C_*}$.
Instead, we take the opportunity to indicate another simplification. Namely, in the analysis of commutators of asymptotic fields in Proposition~\ref{commutator-proposition} below, the clustering estimates from \cite{Bu77, DH15} can be avoided due to Theorem~\ref{main-ergodic}.

Since the approximating sequences $\bar{A}_t\{f\}$ we use  are  different than in \cite{Bu77, DH15}, we give 
rather complete proofs. We start our discussion with the following standard lemma.
\begin{lemma}\label{StrongVacuum}
Let $A\in\mfb^{C_*}$ and $f\in C^{\infty}(S^2)$. Then,
\begin{equation}
\lim_{t\to\infty}\bar{A}_t\{f\}\Om=P_{\pho} f\left(\tfrac{\mathbf{P}}{|\mathbf{P}|}\right)A\Om, \label{single-particle-photon}
\end{equation}
where $P_{\pho}$ is the projection onto the subspace of massless one-particle states $\mfh_{\pho}$. Vectors
on the right-hand side of (\ref{single-particle-photon}) span a dense, Poincar\'e invariant subspace $D_{\pho}$ in $\mfh_{\pho}$. 
(The subspace $D_{\pho}^{(1)}\subset D_{\pho}$, spanned by vectors with $f\equiv 1$, is also dense and Poincar\'e invariant).
\end{lemma}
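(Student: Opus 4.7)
The strategy is a direct computation in Fourier space, mirroring the proof of Theorem~\ref{main-ergodic}. By linearity it suffices to treat $A = B(g)\in \mfb_{C_*}$ with $B\in\mfa_{\loc,0}$ and $g = (n_\mu\pa^\mu)^5 g'$, $g'\in C_0^\infty(\real^4)$. Since $H\Om = 0$, Lemma~\ref{spherical-means-analysis} gives
\[
\bar A_t\{f\}\Om = \int dt'\, h_t(t')\,\e^{\i t' H}(\pa_0 B)(g\ast_3 f_{t'})\Om.
\]
Writing $(\pa_0 B)(h)\Om = \i H\,B(h)\Om$ and expanding $B(h)\Om$ in the joint spectral decomposition of $P = (H,\bP)$ (whose spectral support on $B\Om$ lies in $\ov V_+$ by positivity of energy), substitution of formula (\ref{convolution-formula}) produces, after combining with the outer $\e^{\i t' H}$, three oscillatory summands with phases $p^0 - |\bp|$, $p^0 + |\bp|$, and (under a further integration against $F(\bp,\nu)$) $p^0 - |\bp|\cos\nu$ for $\nu\in(0,\pi)$.

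\textbf{Extracting the single-particle contribution.} Time averaging against $h_t$ converts each phase $\lambda$ into $\wh h(-t^{\beps}\lambda)\e^{\i t\lambda}$, which is uniformly bounded and converges pointwise to $\boldsymbol{1}_{\{\lambda = 0\}}$ as $t\to\infty$. The uniform energy bound of Proposition~\ref{uniform-bound-prop} together with dominated convergence allow the limit to be interchanged with the spectral integrals. On $\ov V_+$ one checks: $p^0 + |\bp| = 0$ forces $p = 0$, which is annihilated by the prefactor $\i p^0$ from $\pa_0 B$; $p^0 = |\bp|\cos\nu$ for any $\nu\in(0,\pi)$ forces $\bp = 0$ and hence $p = 0$, also annihilated; only the photon mass shell $p^0 = |\bp|$ survives. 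There the ratio $p^0/|\bp|$ equals $1$, the boundary coefficient $f(\bp/|\bp|)$ becomes the multiplier $f(\bP/|\bP|)$, and $\wt g(P)B\Om$ equals $B(g)\Om = A\Om$ up to a normalisation constant, so the surviving contribution is exactly $P_\pho f(\bP/|\bP|) A\Om$, establishing (\ref{single-particle-photon}).

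\textbf{Density and Poincar\'e invariance.} On $\mfh_\pho\setminus\{0\}$ the multiplier $\wt g(p) = (\i n_\mu p^\mu)^5 \wt{g'}(p)$ is non-vanishing, since $n_\mu p^\mu > 0$ for future-timelike $n$ and null $p$ with $\bp\neq 0$; varying $g'\in C_0^\infty(\real^4)$ therefore approximates any smooth multiplier on the mass shell. Combined with the Reeh--Schlieder density of $P_\pho\mfa_{\loc,0}\Om$ in $\mfh_\pho$, this shows that $D_\pho^{(1)} = P_\pho\mfb^{C_*}\Om$, and hence $D_\pho\supset D_\pho^{(1)}$, are dense. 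Poincar\'e invariance of $\mfb^{C_*}$ (immediate from invariance of $\mfa_{\loc,0}$ and of $C_*(\real^4)$ under $n\mapsto \La n$) combined with the transformation rule (\ref{transform}) then yields invariance of $D_\pho$ and of $D_\pho^{(1)}$. The delicate technical point is the interior $\nu$-integral: for every $\nu\in(0,\pi)$ the phase $p^0 - |\bp|\cos\nu$ is strictly positive on $\ov V_+\setminus\{0\}$ so that time averaging annihilates it, and the exchange of the $\nu$ integration with the limit is justified by dominated convergence, using the uniform bound on $F$ and its $p^0$-derivative (cf.\ (\ref{dominated-convergence-bound})).
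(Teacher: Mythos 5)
Your argument is correct and follows essentially the same route as the paper: reduction to $\mfb_{C_*}$ by linearity, the Fourier representation of Lemma~\ref{spherical-means-analysis}, the Mean Ergodic Theorem (Theorem~\ref{MWLem}) to kill the non-stationary phases, whose zero sets on $\ov{V}_+$ you identify exactly as in Proposition~\ref{Ergodic-corollary}(a), and the same cyclicity/covariance arguments for density and Poincar\'e invariance. The one point you gloss over is that the surviving multiplier carries a factor $|\bp|^{-1}$, so your dominated-convergence step needs the additional input that $HB\Om$ lies in the domain of $|\bP|^{-1}$, which the paper secures via the Jost--Lehmann--Dyson method of \cite{Bu77}.
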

\proof It suffices to prove the lemma for $A\in\mfb_{C_*}$ and then extend by linearity.  By Lemma~\ref{spherical-means-analysis}, we have
$A_t\{f\}=\pa_0B(g\ast_3 f_t)(t)$, $B\in \mfb_{C_*}$ and $g\in C_*(\mathbb{R}^4)$. 
Thus, we have
\beqa
A_t\{f\}\Om=\pa_0B(g\ast_3 f_t)(t)\Om=(2\pi)^{2}\e^{\i tH}(\wt{g\ast_3 f_t})(P)\pa_0B\Om.
\eeqa
Making now use of formula~(\ref{convolution-formula}), Theorem~\ref{MWLem} and  Proposition~\ref{Ergodic-corollary} (a),
we obtain
\begin{align}
\lim_{t\to\infty}\bar{A}_t\{f\}\Om&=\lim_{t\to\infty}(2\pi)^{2}\int dt'\, h_t(t') \e^{\i t'H}(\wt{g\ast_3 f_{t'}})(P)\pa_0B\Om\non\\
&=(2\pi)^{2}E(\pa\ov{V}_+)  \fr{\wt{g}(P)}{\i|\mathbf{P}|} f\left(\tfrac{\mathbf{P}}{|\mathbf{P}|}\right)\i HB\Om\non\\
&=(2\pi)^{2} P_{\pho}  \wt{g}(P) f\left(\tfrac{\mathbf{P}}{|\mathbf{P}|}\right) B\Om\non\\
&=P_{\pho}  f\left(\tfrac{\mathbf{P}}{|\mathbf{P}|}\right) A\Om.
\end{align}
Here we used that $HB\Om$ is in the domain of $|\bP|^{-1}$, as one can show using the JLD method \cite[p.149]{Bu77}.
We also exploited that $HB\Om$ is orthogonal to the vacuum and thus $E(\pa\ov{V}_+)$ can be replaced with $P_{\pho}$. 

Poincar\'e invariance of $D_{\pho}$ follows from the relation
\beqa
\lim_{t\to\infty}U(\la)\bar{A}_t\{f\}\Om=P_{\pho}  
(f\circ g_{\La^{-1}})\left(\tfrac{\mathbf{P}}{|\mathbf{P}|}\right) A_{\la}\Om
=\lim_{t\to\infty}\bar{A}_{\la,t}\{f\circ g_{\La^{-1}}\}\Om, \label{Poincare-computation}
\eeqa
where $A_{\la}:=U(\la)AU(\la)^*\in \mfb_{C_*}$ and (\ref{transform}) was taken into account.

To show density, we exploit the cyclicity of the vacuum under $\mfb$ and the fact that 
with functions $\wt g$, where $g\in C_*(\real^4)$, one can approximate pointwise the characteristic function of $\real^{4}\backslash \{0\}$.\hfill \qed

Next, denote by $\mathcal{O}_{+}$ the future tangent of a double cone $\mathcal{O}$, i.e. the  cone of all points that have a 
positive timelike separation from  $\mathcal{O}$. Following the arguments of \cite{Bu77}, based on the Huygens principle,  we have:
\begin{lemma}\label{lem2}
Let $A\in\mfb^{C_*}(\mco)$ and $f\in C^{\infty}(S^2)$. Then, the limit
\begin{equation}
A^{\out}\{f\}\Psi=\lim_{t\to\infty}\bar{A}_t\{f\}\Psi \label{first-A-out}
\end{equation}
exists for $\Psi$ in the dense domain $D(\mco):=\{\, B\Omega \,|\,
 B\in\mfb_{\loc}(\mco_{+})\, \}$.
Moreover, $A^{\out}\{f\}$ depends only on the single-particle state $A^{\out}\{f\}\Om$ within
the above restrictions. 
\end{lemma}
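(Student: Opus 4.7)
The plan is to reduce everything to the convergence on $\Om$ already established in Lemma~\ref{StrongVacuum}, by using locality (Huygens principle) to commute $\bar A_t\{f\}$ through $B'$ for large~$t$.

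The key geometric observation is that for $B'$ localized in a compact subset $K_{B'} \subset \mco_{+}$, the translated region $\mco + (t',t'\bn)$ becomes spacelike to $K_{B'}$ for all $\bn\in S^2$ once $t'$ is sufficiently large. Indeed, for $x\in \mco$, $y\in K_{B'}$, $\bn\in S^2$,
\[
\big((t',t'\bn)+x-y\big)^2 = (x-y)^2 - 2t'\big[(y_0-x_0)-\bn\cdot(\by-\bx)\big].
\]
Since $y-x$ is timelike and future-pointing (as $y\in\mco_+$), $y_0-x_0 > |\by-\bx| \geq \bn\cdot(\by-\bx)$, so the bracket is strictly positive; by compactness of $\mco$, $K_{B'}$ and $S^2$ it is bounded below by some $\delta>0$, uniformly, while $(x-y)^2$ remains bounded on $\mco\times K_{B'}$. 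Hence for all sufficiently large $t'$ the expression is negative, i.e.\ spacelike.

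Since $A\in \mfb^{C_*}(\mco)\subset \mfb(\mco)$, locality gives $[A(t',t'\bn),B']=0$ for all such $t'$. The time-averaging $h_t$ confines $t'$ to $[t-t^{\beps},t+t^{\beps}]$, so for $t$ large enough every $t'$ in the support is beyond the threshold, and $[\bar A_t\{f\},B']=0$. Consequently, for every $\Psi=B'\Om\in D(\mco)$ and all sufficiently large~$t$,
\[
\bar A_t\{f\}\Psi = B'\,\bar A_t\{f\}\Om,
\]
and by Lemma~\ref{StrongVacuum} the right-hand side converges to $B'A^{\out}\{f\}\Om$ as $t\to\infty$. This proves existence of the limit~\eqref{first-A-out} on $D(\mco)$ and simultaneously yields
\[
A^{\out}\{f\}B'\Om = B'\,A^{\out}\{f\}\Om, \qquad B'\in \mfb_{\loc}(\mco_+),
\]
from which the last assertion follows: if $\tilde A\in \mfb^{C_*}(\mco)$ has $\tilde A^{\out}\{f\}\Om = A^{\out}\{f\}\Om$, then the two limits agree on all of $D(\mco)$. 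Density of $D(\mco)$ in $\hil$ follows from the Reeh--Schlieder theorem applied to any double cone contained in the open region $\mco_+$.

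The main obstacle is the geometric Huygens estimate above; once one has the uniform lower bound $\delta > 0$, the rest reduces to locality and the already-established vacuum convergence. A minor additional point is ensuring the time-average $\bar A_t\{f\}$ (rather than the pointwise $A_t\{f\}$) inherits the commutation, which is immediate because the averaging window $[t-t^{\beps},t+t^{\beps}]$ moves to infinity with $t$.
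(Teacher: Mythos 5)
Your proposal is correct and takes essentially the same route as the paper: the Huygens-type geometric estimate showing that the localization region of $\bar{A}_t\{f\}$ becomes spacelike separated from that of $B'$ for large $t$, whence $\bar{A}_t\{f\}B'\Om=B'\bar{A}_t\{f\}\Om$ and Lemma~\ref{StrongVacuum} yields both the convergence on $D(\mco)$ and the dependence only on $A^{\out}\{f\}\Om$. One cosmetic caveat: your uniform lower bound $\delta>0$ on the bracket $(y_0-x_0)-\bn\cdot(\by-\bx)$ may fail when the double cone containing $B'$ is not compactly contained in $\mco_{+}$, but the spacelike-separation conclusion survives because the bracket dominates $(y_0-x_0)-|\by-\bx|$ while $(x-y)^2$ equals that same quantity times a bounded factor, so negativity for $t'$ beyond a uniform threshold follows anyway.
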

\begin{proof} Let $A$ be localized in  $\mathcal{O}$ and $\supp\, f$ be contained in the set 
$\Theta\subset S^2$.  Then, by construction, $\bar{A}_t\{f\}$ is localized in the region 
\beqa\label{Ot}
\mco_{t}:=\bigcup\limits_{\tau\in t+t^{\beps}\supp h}\left\{\mathcal{O}+\tau(1,\Theta)\right\}. 
\eeqa
Clearly, for sufficiently large $t$ the  region $\mco_{t}$ is  spacelike separated
from any given double cone $\mathcal{O}_{1}$ in $\mathcal{O}_{+}$. Thus, it follows from
Lemma~\ref{StrongVacuum} and
the locality property that for all $B\in \mfb_{\loc}(\mco_{+})$, 
\begin{equation}
\lim_{t\to\infty}\bar{A}_t\{f\}B\Omega=
\lim_{t\to\infty}B\bar{A}_t\{f\}\Omega=B\,P_{\pho}\, f\left(\tfrac{\mathbf{P}}{|\mathbf{P}|}\right)A\Om, \label{dense-dom-computation}
\end{equation}
defining $A^{\out}\{f\}$ on the domain $D(\mco)$. This domain is dense as shown in \cite{Bu75}.

It is manifest from the above discussion that given $\bar{A'}_t\{f'\}$, where $A'\in\mfb^{C_*}(\mco)$ and $f'\in C^{\infty}(S^2)$, such that $A^{\out}\{f\}\Om=A'^{\out}\{f'\}\Om$ we have $A^{\out}\{f\}=A'^{\out}\{f'\}$ as operators on
$D(\mco)$.\hfill \qed
\end{proof}
\nin In view of Lemmas~\ref{StrongVacuum} and \ref{lem2}, and Proposition~\ref{uniform-bound-prop} we obtain the following result.
\bep\label{admissible-two} 
Let $A, A_i\in\mfb^{C_*}$ and $f, f_i\in C^{\infty}(S^2)$, $i=1,\ldots,n$. Then:
\begin{enumerate}
\item[(a)] For any $\Psi\in D_H$ the limit $\lim_{t\to\infty}\bar{A}_t\{f\}\Psi$ exists and defines a closable operator $A^{\out}\{f\}\res D_H$.
This operator is uniquely determined by  the vector $A^{\out}\{f\}\Om$. 
\item[(b)]  $A^{\out}\{f\} D_H\subset D_H$. 
\item[(c)] $A_1^{\out}\{f_1\}\ldots A_n^{\out}\{f_n\}\Psi=\lim_{t\to\infty}\bar{A}_{1,t}\{f_1\}\ldots \bar{A}_{n,t}\{f_n\}\Psi$ for $\Psi\in D_H$.
\end{enumerate}
\eep
The next lemma settles the transformation rules of the asymptotic fields under the  Poincar\'e transformations.
\bel\label{Poincare-lemma} Let $A\in\mfa^{C_*}(\mco)$ and $f\in C^{\infty}(S^2)$.  For any $\la\in \wt{\mathcal{P}}_+^{\uparrow}$ we have
on $D_H$
\beqa
U(\la)A^{\out}\{f\}U(\la)^*=A_{\la}^{\out}\{f\circ g_{\La^{-1}}\}, \label{Lorentz-covariance}
\eeqa
where $A_{\la}:=U(\la)AU(\la)^*\in\mfa^{C_*}(\la\mco)$ and $g_{\La}$ was defined in  (\ref{g-Lambda}).
\eel
\proof  Recall that $D(\mco):=\{ B\Om\,|\, B\in \mfa_{\loc}(\mco_+) \,\}$ and note the relation
$D(\mco)=U(\la)^*D(\la\mco)$. By formula~(\ref{Poincare-computation}) and Lemma~\ref{lem2} we
obtain that  relation~(\ref{Lorentz-covariance}) holds on $D(\la\mco)$. Next, we choose $\Phi, \Psi\in D_H$
and $\Psi_n\in D(\la\mco)$ such that $\|\Psi-\Psi_n\|\leq 1/n$. Then, making use of the fact that 
$U(\la)^*D_H\subset D_H$ and the observation that $\Phi$ is in the intersection of domains of $(U(\la)A^{\out}\{f\}U(\la)^*)^*$ 
and  $A_{\la}^{\out}\{f\circ g_{\La^{-1}}\}^*$,
we have
\beqa
\lan \Phi, U(\la)A^{\out}\{f\}U(\la)^*\Psi\ran=\lan \Phi, A_{\la}^{\out}\{f\circ g_{\La^{-1}} \}\Psi\ran+O(1/n).
\eeqa
Keeping $\Phi$ and $\Psi$ fixed, we can take the limit $n\to\infty$ and drop the error term. The claim follows
from the resulting relation. \hfill \qed\\
Now we analyse  commutators of asymptotic fields. We proceed similarly as in the case of massless fermions \cite{Bu75}. 
Till the end of this section $A, A', A_i\in \mfa^{C_*}$ and $f, f', f_i\in C^{\infty}(S^2)$ unless stated otherwise.
\bel\label{first-commutator-lemma} Let $A\in \mfa^{C_*}(\mco)$. Then, for all $B\in\mfb_{\loc}(\mco_+)$
\beqa
[A^{\out}\{f\}, B]=0. \label{first-comm-equality}
\eeqa 
 Moreover, if $A'\in \mfa^{C_*}(\mco')$, then
\beqa
[A^{\out}\{f\}, A'^{\out}\{f'\}(x)]=0 \label{second-comm-equality}
\eeqa
provided that $\mco'+x\subset\mco_+$. Both equalities hold in the sense of quadratic forms on $D_H\times D_H$.
\eel
\proof In view of Proposition~\ref{admissible-two} we can write for any $\Psi, \Phi\in D_H$
\beqa
\lan \Psi,[A^{\out}\{f\}, B]\Phi \ran=\lim_{t\to\infty}\lan \Psi,[\bar{A}_t\{f\}, B]\Phi \ran=0,
\eeqa
since the commutator vanishes for sufficiently large $t$. By approximating
\beqa
\lan \Psi, [A^{\out}\{f\}, A'^{\out}\{f'\}(x)]\Phi\ran=
\lim_{t\to\infty}\lan \Psi, [A^{\out}\{f\}, \bar{A'}_t\{f'\}(x)]\Phi\ran,
\eeqa
and noting that $\bar{A'}_t\{f'\}(x)$ is localized in  $\mco_+$ for sufficiently large~$t$, we obtain relation~(\ref{second-comm-equality})
from~(\ref{first-comm-equality}) \hfill \qed
\bel\label{wave-equation} $x\mapsto A^{\out}\{f\}(x)$ is a solution of the wave equation. 
That is, 
\beqa
\square_x A^{\out}\{f\}(x)\Psi=0 \quad\textrm{for}\quad  \Psi\in D_H.
\eeqa
\eel
\proof It follows immediately from Lemma~\ref{StrongVacuum}, that  $\square_x A^{\out}\{f\}(x)\Om=0$.
Hence, by Proposition~\ref{admissible-two} (a),   $\square_x A^{\out}\{f\}(x)\Psi=0$
for any $\Psi\in D_H$. \hfill \qed
\bep\label{commutator-proposition} Let $A^{\out}\{f\}$, $A'^{\out}\{f'\}$ be two asymptotic fields as specified above. Then,
\beqa
\,[A^{\out}\{f\}, A'^{\out}\{f'\}]=\lan \Om,[A^{\out}\{f\}, A'^{\out}\{f'\}]\Om\ran 1_{\hil}
\eeqa
as operators on $D_H$. 
\eep
\proof First, we use a method of Pohlmeyer \cite{Po69} (applied also in the collision theory of massless fermions \cite{Bu75}) to show that
\beqa
\,[A^{\out}\{f\}, A'^{\out}\{f'\}]\Om=c\,\Om,\qquad c\in\mathbb{C}. \label{vacuum-reproduction}
\eeqa
To this end, we take any vector $\Phi$ such that $\Phi=E(K_{\Phi})\Phi$ for a compact set
$K_{\Phi}$ in the interior of the future light cone and consider the function
\beqa
F(x,y)=\lan \Phi, [A^{\out}\{f\}(x), A'^{\out}\{f'\}(y)]\Om\ran.
\eeqa
Making use of Lemma~\ref{wave-equation} and the energy-momentum transfer relation~(\ref{EM-transfer}), we get that the support of the Fourier
transform of $F$ is contained in the compact set
\beqa
\{\, p,q\in \real^4 \,|\, p_0^2=|\bp|^2, \, q_0^2=|\mathbf{q}|^2, p+q\in K_{\Phi}\,\}.
\eeqa
Therefore, $F$ is an entire analytic function and since it vanishes on  an open subset of 
$\real^8$ by Lemma~\ref{first-commutator-lemma}  it vanishes everywhere. Hence,
\beqa
\,[A^{\out}\{f\}, A'^{\out}\{f'\}]\Om=c\,\Om+\Psi_{\pho},
\eeqa
where $\Psi_{\pho}\in \mfh_{\pho}$. Thus to prove~(\ref{vacuum-reproduction}) it remains 
to show that $\Psi_{\pho}=0$ \footnote{In  collision theory of massless fermions  $\Psi_{\pho}=0$ was automatic in the corresponding expression, since a bosonic operator cannot create a fermionic single-particle state from the vacuum \cite[Lemma~4]{Bu75}. In the present bosonic case we can conclude using Theorem~\ref{main-ergodic}.}. For this purpose, we choose $\Phi_1, \Phi_2\in \mfh_{\pho}\cap D_H$  and compute by means of Proposition~\ref{creation-annihilation} and Corollary~\ref{main-ergodic-corollary} that
\beqa
\lan \Phi_1, A^{\out}\{f\}\Phi_2\ran=\lan \Phi_1, A^{\out}\{f\}^+\Phi_2\ran+\lan \Phi_1, A^{\out}\{f\}^-\Phi_2\ran=0.
\eeqa

Given (\ref{vacuum-reproduction}), we complete the proof of the proposition as follows.
Let $\mco$ be a double cone such that $A, A'\subset \mfa^{C_*}(\mco)$.
Then, for any $B\in \mfb_{\loc}(\mco_+)$ and $\Psi\in D_H$
\begin{align}
\lan \Psi, [A^{\out}\{f\}, A'^{\out}\{f'\}]B\Om\ran
=&\lim_{t\to\infty} \lan \Psi, [\bar{A}_t\{f\}, \bar{A}'_t\{f'\}]B\Om\ran\non\\
=&\lim_{t\to\infty}\lan \Psi, B[\bar{A}_t\{f\}, \bar{A}'_t\{f'\}]\Om\ran\non\\
=&\lan \Psi, B\Om\ran \lan \Om, [A^{\out}\{f\},  A'^{\out}\{f'\}]\Om\ran, \label{commutator-final-step}
\end{align}
where in the first step Proposition~\ref{admissible-two} (c) and in the second step the localization properties of the approximating sequences entered. Equation~(\ref{commutator-final-step}) extends by continuity from
$D(\mco_+)$ to $D_H$, since $\Psi$ is  in the domain
of $ ([A^{\out}\{f\}, A'^{\out}\{f'\}])^*$.\hfill \qed 

With the above proposition we have all the necessary ingredients for our discussion of Compton
scattering in Section~\ref{charged}. For the sake of completeness, however, we indicate below the construction
of scattering states of photons in the vacuum sector. To this end, it suffices to consider functions $f\in C^{\infty}(S^2)$
which are identically equal to one, in which case we simply write $A^{\out}$ for $A^{\out}\{f\}$.  

From Proposition~\ref{commutator-proposition} we immediately obtain the canonical commutation
relations for the asymptotic creation and annihilation operators, namely
\beqa
\,[A^{\out-}, A'^{\out+}]=\lan A^{*\out+}\Om, A'^{\out+} \Om\ran,\quad [A^{\out-}, A'^{\out-}]=[A^{\out+}, A'^{\out+}]=0. \label{commutation}
\eeqa
Recall that by Proposition~\ref{creation-annihilation} (c) we have  $A^{\out+}D_H\subset D_H$. Hence, scattering states may be constructed in a standard manner.
Parts (a) and (b) of the following theorem are a direct consequence of (\ref{commutation}) and part (c) is proven analogously 
as Theorem~\ref{Compton-scattering}~(c) stated below. 
\bet\label{photon-scattering} The states $\Psi^{\out}:= A_1^{\out+}\ldots A_n^{\out+}\Om$ have the following properties:
\begin{enumerate}
\item[(a)] $\Psi^{\out}$ depends only on the single-particle states $\Phi_i=A_i^{\out}\Om\in D_{\pho}$. Therefore, we put $\Psi^{\out}=\Phi_1\tout\cdots \tout\Phi_n$.
\item[(b)] 
$\lan  \Phi_1\tout\cdots \tout\Phi_n,\Phi'_1\tout\cdots \tout\Phi'_{n'}\ran=
\de_{n,n'}\sum_{\si\in \mathfrak{S}_n}\lan \Phi_1, \Phi'_{\si_1}\ran\ldots \lan \Phi_n, \Phi'_{\si_n}\ran$, where
$\mathfrak{S}_n$ is the set of all permutations of $(1,\ldots, n)$.
\item[(c)] $\wtU(\wtla)(\Phi_1\tout\cdots \tout\Phi_n)=(\wtU(\wtla)\Phi_1)\tout\cdots \tout (\wtU(\wtla)\Phi_n)$, where $\wtla\in\widetilde{\mathcal{P}}_+^\uparrow$.
\end{enumerate}
\eet

\section{Compton scattering in hypercone localized representations}\label{charged} \setcounter{equation}{0}

In this section we consider a Haag-Kastler net $(\mfa, U)$ in a vacuum representation, containing massless Wigner particles (`photons')
and a  representation $\pi$ which is hypercone localized w.r.t. this vacuum and describes
Wigner particles of mass $m>0$ (`electrons').  For brevity
we will write $(\hat\mfa, \hat U)$ for the resulting net $(\mfa_{\pi}, U_{\pi})$ and $\hat\hil:=\hil_{\pi}$. We also
set $\hat A:=\pi(A)$ for $A\in \mfa$ and denote by  ($\hat H, \hat \bP$) the energy-momentum
operators in the representation $\pi$.

Given $\hA\in  \hat \mfb^{C_*}(\mco)$ and $\supp\, f\subset \Theta\subset S^2$, the asymptotic field approximants  $t\mapsto \bar{\hA}_t\{f\}$ are localized in
\beqa
\mco_{t}:=\bigcup\limits_{\tau\in t+t^{\beps}\supp h}\left\{\mathcal{O}+\tau(1,\Theta)\right\}, \quad t\geq 1. \label{mco-t}
\eeqa
In the course of our analysis we will also consider  $t\mapsto \bar{\hA}_{\la,t}\{f\circ g_{\La^{-1}}\}$, $\la=(0,\wt\La)\in \wt{\mathcal P}_+^{\uparrow}$, whose
localization regions are
\beqa
\mco_{t}^{\La}:=\bigcup\limits_{\tau\in t+t^{\beps}\supp h}\left\{\La\mathcal{O}+\tau(1,  g_{\La}(\Theta))\right\}, \quad t\geq 1. \label{mco-t-one}
\eeqa
The following geometric lemma will be frequently used  in the subsequent discussion. Its proof can be found in Appendix~\ref{geometric}.
\bel\label{geometric-one} For any $\mco\in \mcK$ and any open $\Theta\subset S^2$ such that $\ov{\Theta}\subsetneq S^2$ 
there is a future lightcone $V$, a hypercone $\mcC\subset\mcF_V$ and a neighbourhood $N$ of unity in the Lorentz group such that 
\beqa
\La O_t\subset \mcC^{\cc}, \quad   O^{\La}_t\subset \mcC^{\cc}, \quad t\geq 1,
\eeqa
for all $\La\in N$. 
\eel
Given Lemma~\ref{geometric-one}, the existence of a certain family of asymptotic fields is easily obtained. The following result holds true.
\bel\label{charge-class-lem2}  Let $\hA\in  \hat \mfb^{C_*}$, $f\in C^{\infty}(S^2)$ and $\supp\, f \subset \Theta$, 
with $\Theta$ as in Lemma~\ref{geometric-one}.  
Then, the limit
\beqa
\hA^{\out}\{f\}\Psi:=\lim_{t\to\infty} \bar{\hA}_{t}\{f\}\Psi, \quad \Psi\in D_{\hat H}
\eeqa
exists. It defines a closable operator on $D_{\hat H}$ which is uniquely determined by $A^{\out}\{f\}\Om$. 
\eel
\proof Let $\mco$ be the localization region of $\hA$ and  $\mco_t$ be given by (\ref{mco-t}). 
By  Lemma~\ref{geometric-one}, there exists a future lightcone $V$ and $\mcC\subset\mcF_{V}$ such that
$O_t\subset \mcC^{\cc}$. Hence, by hypercone localization of $\pi$, there is a unitary $W_{\mcC}$ such that
for all $t\geq 1$
\beqa
\bar{\hA}_t\{f\}=\pi(\bar{A}_t\{f\})=W_{\mcC}(\bar{A}_t\{f\})W_{\mcC}^*.
\eeqa
Now by Proposition~\ref{admissible-two} the right-hand side converges on  $W_{\mcC}D_H$
to an operator which is uniquely determined by $A^{\out}\{f\}\Om$. Then, by Proposition~\ref{general-admissible-two},
the left-hand side converges on $D_{\hat H}$ to an operator which is uniquely determined by $A^{\out}\{f\}\Om$.  \hfill \qed\\
In the next proposition we eliminate the restriction on functions $f$.
\bep\label{charge-class-admissible-two} Let $\hA, \hA_i\in\hat{\mfb}^{C_*}$ and  $f, f_i\in C^{\infty}(S^2)$. Then:
\begin{enumerate}
\item[(a)] For any $\Psi\in D_{\hat H}$ the limit $\lim_{t\to\infty}\bar{\hA}_t\{f\}\Psi$ exists and defines a closable operator $\hA^{\out}\{f\}$ on $D_{\hat H}$ which is uniquely specified by $A^{\out}\{f\}\Om$. 
\item[(b)]  $\hA^{\out}\{f\} D_{\hat H}\subset D_{\hat H}$. 
\item[(c)] $\hA_1^{\out}\{f_1\}\ldots \hA_n^{\out}\{f_n\}\Psi=\lim_{t\to\infty}\bar{\hA}_{1,t}\{f_1\}\ldots \bar{\hA}_{n,t}\{f_n\}\Psi$ for $\Psi\in D_{\hat H}$.
\end{enumerate}
\eep
\begin{remark} It follows immediately from Proposition~\ref{charge-class-admissible-two} (a) that $x\mapsto \hA^{\out}\{f\}(x)\Psi$, $\Psi\in D_{\hat H}$,
is a solution of the wave equation. 
\end{remark}
\proof In view of Lemma~\ref{charge-class-lem2} and  Proposition~\ref{general-admissible-two}, we obtain the statement of the proposition for $f, f_i$ supported in proper open subsets
of $S^2$. 
To remove this restriction, we choose a  partition of unity on $S^2$ consisting
of $f^j\in C^{\infty}(S^2)$, $j=1,2$,  such that $\supp\, f_j\subsetneq S^2$. Thus, we may write
\beqa
\bar{\hA}_{t}\{f\}=\sum_{j=1,2} \bar{\hA}_{t}\{ff^j\}.
\eeqa
Now it is easy to see that $\hat A^{\out}\{f\}=\lim_{t\to\infty}\bar{\hA}_{t}\{f\}$ exists on $D_{\hat H}$ and  
has the properties specified in the proposition. The only property which requires an argument is
the last statement in part (a). To verify it, suppose that $A^{\out}\{f\}\Om=A^{'\out}\{f'\}\Om$. Then, we also have $A^{\out}\{ff^j\}\Om=A^{'\out}\{f'f^j\}\Om$, since
$ A^{\out}\{f\}\Om=P_{\pho}f(\bP/|\bP|)A\Om$. Therefore, by Lemma~\ref{charge-class-lem2}, we have that $\hA^{\out}\{ff^j\}=\hA^{'\out}\{f'f^j\}$ on $D_{\hat H}$.
Summation over $j$ yields the claim.
\hfill \qed\\
Next, we analyze the transformation rules of the asymptotic fields under Poincar\'e transformations. Our result is as follows.
\bel\label{Poincare-charged}  Let $\hA\in\hat\mfa^{C_*}(\mco)$ and $f\in C^{\infty}(S^2)$.  For any $\la\in \wt{\mathcal{P}}_+^{\uparrow}$ we have
on $D_{\hat H}$
\beqa
\hat U(\la)\hat A^{\out}\{f\}\hat U(\la)^*=\hA_{\la}^{\out}\{f\circ g_{\La^{-1}}\}, \label{Lorentz-covariance-charged}
\eeqa
where $\hA_{\la}:=\hat U(\la)\hA \hat U(\la)^*\in\hat\mfa^{C_*}(\la\mco)$ and $g_{\La}$ is given by (\ref{g-Lambda}).
\eel
\proof  First, decompose $f$ according to 
$f=\sum_{j=1,2} f^j$,
with  $f^j\in C^{\infty}(S^2)$,  $\supp\, f^j\subset \Theta_j$ and $\Theta_j$ as in Lemma~\ref{geometric-one}. Next, choose $\wt N\subset SL(2,\complex)$ such that its image in the Lorentz group under the canonical covering map is contained in the neighbourhood $N$ from Lemma~\ref{geometric-one}. (We can find one $N$ for both values of $j$).  Now for any $j$ Lemma~\ref{geometric-one} gives a  future lightcone $V_j$ and a hypercone $\mcC_j\in \mcF_{V_j}$ 
such that
\beqa
U(\la)\bar A_{t}\{f^j\}U(\la)^*\in \mfa(\mcC_{j}^{\cc}),\quad 
\bar A_{\la, t}\{f^j\circ g_{\La^{-1}}\} \in \mfa(\mcC_{j}^{\cc})
\eeqa
for all $t\geq 1$ and $\la=(0,\wt\La)$, $\wt\La\in \wt N$. Thus, due to the Poincar\'e covariance and  hypercone localization, we have unitaries $W_{\mcC_j}$ such that
\begin{align}
\hat U(\la)\bar \hA_{t}\{f^j\}\hat U(\la)^*&=W_{\mcC_j}\big(U(\la)\bar A_{t}\{f^j\}U(\la)^*\big)W_{\mcC_j}^*, \label{hypercone-poincare-first}\\
\bar \hA_{\la, t}\{f^j\circ g_{\La^{-1}}\}&=W_{\mcC_j} \bar A_{\la, t}\{f^j\circ g_{\La^{-1}}\} W_{\mcC_j}^*. 
\label{hypercone-poincare-second}
\end{align}
It follows from the above relations and Lemma~\ref{lem2} that 
\beqa
W_{\mcC_j} D(\la \mco)\subset \hat U(\la) D_{\max}(\hA, f^j)\cap  D_{\max}(\hA_{\la}, f^j\circ g_{\La^{-1}}).  
\eeqa
Hence, both sides of
\beqa
\hat U(\la)\hat A^{\out}\{f^j\}\hat U(\la)^*\Psi=\hA_{\la}^{\out}\{f^j\circ g_{\La^{-1}}\}\Psi, \quad \Psi\in W_{\mcC_j}D(\la \mco)
\label{charged-poincare-comp}
\eeqa
are well defined.
To verify equality~(\ref{charged-poincare-comp}), we choose $B\in \mfa_{\loc}(\mco_+)$, set $B_{\la}:=U(\la)BU(\la)^*$ and compute
\begin{align}
\hat U(\la)\hat A^{\out}\{f^j\}\hat U(\la)^*W_{\mcC_j}B_{\la}\Om
&=\lim_{t\to\infty} \hat U(\la)\bar{\hat A}_t\{f^j\}\hat U(\la)^*W_{\mcC_j}B_{\la}\Om\non\\
&=\lim_{t\to\infty} W_{\mcC_j} U(\la)\bar{A}_t\{f^j\}U(\la)^*B_{\la}\Om\non\\
&=\lim_{t\to\infty} W_{\mcC_j} U(\la)\bar{A}_t\{f^j\}B\Om\non\\
&=\lim_{t\to\infty} W_{\mcC_j} \bar{A}_{\la,t}\{f^j\circ g_{\La^{-1}}\}B_{\la}\Om\non\\
&=\hA_{\la}^{\out}\{f^j\circ g_{\La^{-1}}\}  W_{\mcC_j} B_{\la}\Om,
\end{align}
where in the second step we used (\ref{hypercone-poincare-first}), in the fourth step~(\ref{Poincare-computation}) and in the
last step (\ref{hypercone-poincare-second}). Arguing as in the proof of Lemma~\ref{Poincare-lemma}, we conclude
from (\ref{charged-poincare-comp}) that 
\beqa
\hat U(\la)\hat A^{\out}\{f^j\}\hat U(\la)^*\Psi=\hA_{\la}^{\out}\{f^j\circ g_{\La^{-1}}\}\Psi, \quad \Psi\in D_{\hat H}.
\label{charged-poincare-comp-D-H}
\eeqa
Summing the above relation over $j$ we obtain the claim for $\la=(0,\wt\La)$, $\wt\La\in \wt N$.

It remains to extend the result to arbitrary  $\la\in \wt{\mathcal{P}}_+^{\uparrow}$. To this end, we first note that (\ref{Lorentz-covariance-charged}) holds trivially for $\la=(x, I)$. Now 
any element of $\wt{\mathcal{P}}_+^{\uparrow}$  can be written as $(x, I)(0,\wt \La)$, $\wt\La\in SL(2,\complex)$. Since $SL(2,\complex)$
is connected, it is generated by any neighbourhood of the identity. \hfill\qed

Exploiting the hypercone localization and Proposition~\ref{commutator-proposition}, we obtain that commutators of
asymptotic fields are numbers.
\bep\label{commutator-proposition-new} Let $\hA_1, \hA_2\in\hat{\mfb}^{C_*}$ and $f_1, f_2\in C^{\infty}(S^2)$. Then,
\beqa
\, [\hA_1^{\out}\{f_1\}, \hA_2^{\out}\{f_2\}]=\lan \Om, [A_1^{\out}\{f_1\}, A_2^{\out}\{f_2\}]\Om \ran  1_{\hat\hil},
\eeqa
as operators on $D_{\hat H}$.
\eep
\proof  We decompose
$f_{i}=\sum_{j=1}^{\ell} f_{i}^{j}$,  $i=1,2$,
where $f_{i}^{j}\in C^{\infty}(S^2)$ are supported in sufficiently small subsets of $S^2$. Thus, we have
\beqa
\, [\bar{\hA}_{1,t}\{f_1\}, \bar{\hA}_{2,t}\{f_2\}]=\sum_{j_1,j_2}[\bar{\hA}_{1,t}\{f_1^{j_1}\}, \bar{\hA}_{2,t}\{f_2^{j_2}\}].
\eeqa
We divide the set of indices into two subsets, namely
\beqa
& &S:=\{\, (j_1, j_2)\,|\, \supp\, f_{1}^{j_1}\cap  \supp\, f_{2}^{j_2}\neq\emptyset\,\},\\
& &S':=\{\, (j_1, j_2)\,|\, \supp\, f_{1}^{j_1}\cap  \supp\, f_{2}^{j_2}=\emptyset\,\}.
\eeqa
If the partition is sufficiently fine, for any $(j_1, j_2)\in S$ Lemma~\ref{geometric-one} gives a future lightcone $V_{j_1,j_2}$ 
and a hypercone $\mcC_{j_1,j_2}\in \mcF_{V_{j_1,j_2}}$ such that 
\beqa
\, [\bar{A}_{1,t}\{f_1^{j_1}\}, \bar{A}_{2,t}\{f_2^{j_2}\}]\in \mfa( \mcC_{j_1,j_2}^{\cc} )
\eeqa
for all $t\geq 1$. Thus, by the hypercone localization of $\pi$ there is a unitary $W_{j_1,j_2}$ such that
\beqa
\, [\bar{\hA}_{1,t}\{f_1^{j_1}\}, \bar{\hA}_{2,t}\{f_2^{j_2}\}]=W_{j_1,j_2}[\bar{A}_{1,t}\{f_1^{j_1}\}, \bar{A}_{2,t}\{f_2^{j_2}\}]W_{j_1,j_2}^*.
\eeqa
Now for $\Psi\in D_{\hat H}$ and $\Phi\in  W_{j_1,j_2} D_H$ we have by Propositions~\ref{admissible-two}, \ref{commutator-proposition} and
\ref{charge-class-admissible-two} that
\begin{align}
\lan\Psi, [\hA^{\out}_{1}\{f_1^{j_1}\}, \hA^{\out}_{2}\{f_2^{j_2}\}]\Phi\ran
&=\lim_{t\to\infty}\lan\Psi, [\bar{\hA}_{1,t}\{f_1^{j_1}\}, \bar{\hA}_{2,t}\{f_2^{j_2}\}]\Phi\ran\non\\
&=\lim_{t\to\infty}\lan \Psi, W_{j_1,j_2}   [\bar{A}_{1,t}\{f_1^{j_1}\}, 
\bar{A}_{2,t}\{f_2^{j_2}\}]    W_{j_1,j_2}^*\Phi\ran\non\\
&=\lan \Psi,  \Phi\ran \lan \Om,   [A^{\out}_{1}\{f_1^{j_1}\}, A^{\out}_{2}\{f_2^{j_2}\}] \Om\ran.
\end{align}
Since $\Psi$ is in the domain of $([\hA^{\out}_{1}\{f_1^{j_1}\}, 
\hA^{\out}_{2}\{f_2^{j_2}\}])^*$, the above equality extends to $\Phi\in D_{\hat H}$.

Finally, we consider $(j_1, j_2)\in S'$. In this case locality gives  for sufficiently large $t$
\beqa
[\bar{\hA}_{1,t}\{f_1^{j_1}\}, \bar{\hA}_{2,t}\{f_2^{j_2}\}]=0,
\eeqa
as one can see by a straightforward computation. This concludes the proof.\hfill \qed\\
After these preparations we proceed to the construction of scattering states of one electron and a finite number of photons i.e. Compton scattering. It suffices to consider $f\in C^{\infty}(S^2)$ which are identically equal to one, in which case we write, as in the previous section, $\hA^{\out}$ for $\hA^{\out}\{f\}$. Similarly as in the vacuum representation, Proposition~\ref{commutator-proposition-new} gives
\beqa
\,[\hA^{\out-}, \hA'^{\out+}]=\lan A^{*\out+}\Om, A'^{\out+} \Om\ran,\quad [\hA^{\out-}, \hA'^{\out-}]=[\hA^{\out+}, \hA'^{\out+}]=0. \label{commutation-two}
\eeqa
Recalling that by Proposition~\ref{creation-annihilation} (c)  $\hA^{\out+}D_{\hat H}\subset D_{\hat H}$, scattering states are constructed in a straightforward manner.
\bet\label{Compton-scattering} The states $\Psi^{\out}:= \hA_1^{\out+}\ldots \hA_n^{\out+}\Psi_{\el}$, $\Psi_{\el}\in \hat\mfh_{\el}\cap D_{\hat H}$, have the following properties:
\begin{enumerate}
\item[(a)] $\Psi^{\out}$ depends only on the single-photon states $\Phi_i:=A_i^{\out}\Om\in D_{\pho}$ and the single-electron state $\Psi_{\el}\in \hat\mfh_{\el}\cap D_{\hat H}$. Therefore, we can
 write $\Psi^{\out}=\Phi_1\tout\cdots \tout\Phi_n\tout\Psi_{\el}$.
\item[(b)] Given $\Psi^{\out}, \Psi^{'\out}$ as above, 
\beqa
\lan \Psi^{\out}, \Psi^{'\out}\ran=\de_{n,n'}\lan \Psi_{\el},\Psi_{\el}'\ran
\sum_{\si\in \mathfrak{S}_n}\lan \Phi_1, \Phi'_{\si_1}\ran\ldots \lan \Phi_n, \Phi'_{\si_n}\ran, 
\eeqa
where
$\mathfrak{S}_n$ is the set of all permutations of $(1,\ldots, n)$.
\item[(c)] $\hat\wtU(\wtla)(\Phi_1\tout\cdots \tout\Phi_n\tout \Psi_{\el})=(\wtU(\wtla)\Phi_1)\tout\cdots \tout (\wtU(\wtla)\Phi_n)\tout(\hat\wtU(\wtla)\Psi_{\el})$,  $\wtla\in\widetilde{\mathcal{P}}_+^\uparrow$.
\end{enumerate}
\eet
\proof Parts (a) and (b) follow directly from (\ref{commutation-two}) and Corollary~\ref{main-ergodic-corollary} which gives $\hA^{\out-}\Psi_{\el}=0$ for $\Psi_{\el}\in \hat\mfh_{\el}\cap D_{\hat H}$.   To prove (c), it suffices to consider $\hat A_i\in \hat\mfa_{C_*}$,
the general case follows by linearity (cf. Remark~\ref{Span}). We use Lemma~\ref{Poincare-charged} and the following computation
\begin{eqnarray}
\hspace{-.5cm}&&\hspace{-.5cm}\hat\wtU(\wtla)
\hA_1^{\out}(\eta_{+,r_1})\ldots \hA_n^{\out}(\eta_{+,r_n})\Psi_{\el}\non\\
&=&\int \prod_{j=1}^{n}d^4x_j (\eta_{+,r_1}\otimes\cdots\otimes\eta_{+,r_n})(x_1,\dots,x_n) \hat\wtU(\wtla)\hA_1^{\out}(x_1)\hA_2^{\out}(x_2)\ldots \hA_n^{\out}(x_n)\Psi_{\el}\non\\
&=&\int \prod_{j=1}^{n}d^4x_j (\eta_{+,r_1}\otimes\cdots\otimes\eta_{+,r_n})(x_1,\dots,x_n)
 \hA_{1,\la}^{\out}(\La x_1)\ldots \hA_{n,\la}^{\out}(\La x_n)
\hat U(\la)\Psi_{\el}\non\\
&=& \hA_{1,\la}^{\out}(\eta_{+,r_1,\La})\ldots \hA_{n,\la}^{\out}(\eta_{+,r_n,\La})\hat U(\la)\Psi_{\el}. \label{poincare-computation}
\end{eqnarray}
We note  that $\eta_{+,r,\La}(x):=\eta_{+,r}(\La^{-1}x)$ belongs to the class of functions
defined in (\ref{test-functions-zero}), $\hA_{i,\la}\in \mfa_{C_*}$ and the timelike unit vectors entering into the construction of 
$\eta_{+,r_i,\La}$
and $\hA_{i,\la}$ coincide (cf. Definitions~(\ref{test-functions-zero}) and (\ref{local-C})).  It is easy to check that
\beqa  
\lim_{r_i\to\infty} A_{i,\la}^{\out}(\eta_{+,r_i,\La})\Om=
U(\la)A_i^{\out}\Om=U(\la)\Phi_i,
\eeqa
where $\Phi_i=A_{i}^{\out}\Om$. Thus, by taking the limits $r_i\to\infty$ on 
both sides of (\ref{poincare-computation}), we conclude the proof.
\hfill \qed\\
Let $\Ga(\mfh_{\pho})$ be the symmetric Fock space over $\mfh_{\pho}$ and we denote by $a^*(\,\cdot\,)$ and $a(\,\cdot\,)$
the corresponding creation and annihilation operators. Using Theorem~\ref{Compton-scattering} (a), (b), 
we  define the  \emph{outgoing wave operator} of Compton scattering 
\beqa
W^{\out}(\Ga(\mfh_{\pho})\otimes \hat\mfh_{\el})\to \hat\hil,
\eeqa
as the unique linear isometry, satisfying
\beqa
W^{\out}(a^*(\Phi_1)\ldots a^*(\Phi_n)\Om\otimes \Psi_{\el})=\hA_1^{\out+}\ldots \hA_n^{\out+}\Psi_{\el},
\eeqa
for $\Phi_i=A_i^{\out}\Om\in D_{\pho}$.  Setting $U_{\pho}(\la):=\Ga(U(\la)\res \mfh_{\pho})$ and
$\hat U_{\el}(\la):=\hat U(\la)\res \hat\mfh_{\el}$, we obtain from Theorem~\ref{Compton-scattering} (c)
\beqa
\hat U(\la)\circ W^{\out}=W^{\out}\circ ( U_{\pho}(\la)\otimes \hat U_{\el}(\la)),\quad \la\in \wt\mcP_+^{\uparrow},
\eeqa
which amounts to the Poincar\'e covariance of the wave operator.

\section{Haag-Kastler net of asymptotic photon fields}\label{last-section} \setcounter{equation}{0}

In this section we  construct  a Haag-Kastler net of asymptotic photon fields in a 
hypercone localized representation $\pi$ satisfying the properties specified at the beginning of Section~\ref{charged}.  
We also show that single-electron states induce vacuum representations of this net.

We start with the following technical lemma which summarizes and extends
the information about the domains of the asymptotic fields.
\bel\label{asymptotic-net-lemma-one} Let $\hA\in \hat\mfb^{C_*}$ be self-adjoint. Then:
\begin{enumerate}
\item[(a)] $D(\hat H)\subset D_{\max}(\hA)$ and $\hA^{\out}\res D(\hat H)$ is a symmetric operator uniquely determined by  $A^{\out}\Om$.
\item[(b)] $\|\hA^{\out}\Psi\|\leq c\|(1+\hat H)\Psi\|$, $\Psi\in D(\hat H)$.
\item[(c)] $|\lan \hat H\Psi, \hA^{\out}\Psi\ran-\lan \hA^{\out}\Psi, \hat H\Psi \ran|\leq c\|(1+H)^{1/2}\Psi\|^2$, $\Psi\in D(\hat H)$. 
\end{enumerate}
Moreover, $\i[\hat H, \hA^{\out}]$, defined as a quadratic form on $D(\hat H)\times D(\hat H)$, extends to
a symmetric operator $\i[\hat H, \hA^{\out}]^{\circ}$ on  $D(\hat H)$ in the sense explained in  Appendix~\ref{Weyl-rel}. This 
operator coincides  with $(\hA^{(1)})^{\out}\res D(\hat H)$, $\hA^{(1)}:=\i[\hat H, \hA]\in \hat\mfb^{C_*}$. Thus, it satisfies
properties (a),(b),(c) above.
\eel
\proof To prove (a), we note  that for any $\Psi\in D(\hat H)$ and $\eps>0$ there exists $\Phi_{\eps}\in D_{\hat H}$ such that 
$\|(1+\hat H)(\Psi-\Phi_{\eps})\|<\eps$. We write
\beqa
\bar{\hat A}_t\Psi=\bar{\hat A}_t\Phi_{\eps}+\bar{\hat A}_t(1+\hat H)^{-1}(1+\hat H)(\Psi-\Phi_{\eps})=\bar{\hat A}_t\Phi_{\eps}+O(\eps).
\label{D-H-extension}
\eeqa
Here $\|O(\eps)\|\leq c\eps$ uniformly in $t$ due to the energy bounds from Proposition~\ref{uniform-bound-prop}.
Now the existence of $\hat A^{\out}$ on $D(\hat H)$ follows from Proposition~\ref{charge-class-admissible-two} (a)
and the Cauchy criterion. It is  also clear from the above argument that $\hat A^{\out}\res D(\hat H)$ is uniquely 
determined by $\hat A^{\out}\res D_{\hat H}$. (That is, if $\hat A^{\out}\Phi=0$ for all $\Phi\in D_{\hat H}$ then 
$\hat A^{\out}\Psi=0$ for all $\Psi\in D(\hat H)$). 
Part (b) is a simple consequence of part (a) and Proposition~\ref{uniform-bound-prop}. 

To complete the proof of the lemma, we write for $\Psi\in D(\hat H)$
\beqa
\i\left(\lan \hat H\Psi, \hA^{\out}\Psi\ran-\lan \hA^{\out}\Psi, \hat H\Psi \ran\right)=
\lim_{t\to\infty}\lan \Psi, \bar{\hA}^{(1)}_t\Psi\ran=\lan \Psi, (\hA^{(1)})^{\out}\Psi\ran. \label{commutator-(1)}
\eeqa
Since the energy bounds give  $\|\bar{\hA}^{(1)}_t (1+\hat H)^{-1}\|\leq c$ and $\|(1+\hat H)^{-1}\bar{\hA}^{(1)}_t \|\leq c$,
uniformly in $t$, we obtain by interpolation  (cf. \cite[Appendix to IX.4]{RS2})
\beqa
\|(1+\hat H)^{-1/2}\bar{\hA}^{(1)}_t(1+\hat H)^{-1/2} \|\leq c, 
\eeqa
uniformly in $t$. This and the first equality in (\ref{commutator-(1)}) give part (c). The second
equality in (\ref{commutator-(1)}) ensures that $\i[\hat H, \hA^{\out}]^{\circ}$ is defined on $D(\hat H)$
and coincides on this domain with $(\hA^{(1)})^{\out}$. \hfill\qed\\
In the next lemma we collect the necessary information about the commutators of asymptotic fields.
\bel\label{asymptotic-net-lemma-two} Let $\hA_1, \hA_2\in \hat\mfb^{C_*}$ be self-adjoint. Then:
\begin{enumerate} 
\item[(a)]  $[\hA_1^{\out}, \hA_2^{\out}]=\lan \Om, [A_1^{\out}, A_2^{\out}]\Om \ran  1_{\hat\hil}$ as quadratic forms on $D(\hat H)\times D(\hat H)$.
\item[(b)] $\lan \Om, [A_1^{\out}, A_2^{\out}]\Om \ran=0$ if $A_1, A_2$ are localized in spacelike separated double cones.
\end{enumerate}
\eel
\proof We set $\eps>0$ and choose $\Psi_i\in D(\hat H)$ and $\Phi_{i,\eps}\in D_{\hat H}$, $i=1,2$, such that 
$\|(1+\hat H)(\Psi_i-\Phi_{i,\eps})\|<\eps$ as in the proof of Lemma~\ref{asymptotic-net-lemma-one}.
By taking the limit $t\to\infty$ in (\ref{D-H-extension}), we obtain
\beqa
\hat A_{i}^{\out}\Psi_{i'}=\hat A^{\out}_{i}\Phi_{i',\eps}+O(\eps),
\eeqa
where $\|O(\eps)\|\leq c\eps$. Thus, we get from Proposition~\ref{commutator-proposition-new}
\begin{align}
\lan \Psi_1,[\hA_1^{\out}, \hA_2^{\out}]\Psi_2\ran&=\lan \Phi_{1,\eps},[\hA_1^{\out}, \hA_2^{\out}]\Phi_{2,\eps}\ran+O(\eps)\non\\
&=\lan \Om, [A_1^{\out}, A_2^{\out}]\Om\, \ran \lan \Phi_{1,\eps},\Phi_{2,\eps}\ran+O(\eps)\non\\
&=\lan \Om, [A_1^{\out}, A_2^{\out}]\Om\, \ran \lan \Psi_{1},\Psi_{2}\ran+O(\eps),
\end{align}
where the rest term $|O(\eps)|\leq c\eps$ changes from line to line. Since $\eps$ was arbitrary,
this completes the proof of part (a). Part (b)  is a known consequence of the JLD representation (cf. \cite[p.160]{Bu77}).\hfill\qed\\
Making use of Lemmas~\ref{asymptotic-net-lemma-one}, \ref{asymptotic-net-lemma-two} and standard results
about self-adjoint extensions of unbounded operators collected in Appendix~\ref{Weyl-rel}, (where we set $N=1+\hat H$), 
we obtain the following proposition.
\bep\label{Weyl-proposition} Let $\hA_1, \hA_2\in \hat\mfb^{C_*}$ be self-adjoint.
Then $\hA_1^{\out}$ and $\hA_2^{\out}$ are essentially self-adjoint on $D(\hat H)$
and their self-adjoint extensions $\hA_1^{\out\ext}$, $\hA_2^{\out\ext}$ are essentially
self-adjoint on any core for $\hat H$. Moreover,
\beqa
\e^{\i (\hA_1+\hA_2)^{\out\ext}  }=\e^{\fr{1}{2}\lan \Om,[A^{\out}_1, A^{\out}_2]\Om\ran } \e^{\i\hA^{\out\ext}_1}\e^{\i\hA^{\out\ext}_2}.
\eeqa
\eep
Due to the previous results we are now in a positon to define the net of asymptotic photon fields. For any $\mco\in \mcK$
we, thus, introduce the von Neumann algebra
\beqa
\hat\mfa^{\out}(\mco):=\{\, \e^{\i\hA^{\out\ext}} \,|\, \hA\in \hat\mfb^{C_*}(\mco), \ \hA^*=\hA\,\}''.\label{asymptotic-algebra}
\eeqa
\bet $(\hat\mfa^{\out}, \hat U)$ is a Haag-Kastler net in the sense of Definition~\ref{HK}.
\eet
\proof Locality follows from Proposition~\ref{Weyl-proposition} and Lemma~\ref{asymptotic-net-lemma-two} (b). To show covariance
under Poincar\'e transformations, we use Lemma~\ref{Poincare-charged}, which gives on $D_{\hat H}$
\beqa
\hat U(\la)\hA^{\out}\hat U(\la)^*=\hA_{\la}^{\out}.
\eeqa 
Since $\hA_{\la}^{\out}$ is essentially self-adjoint on $D_{\hat H}$, which, moreover, is a core for $\hat H$,
all its self-adjoint extensions must coincide. In particular, we obtain
\beqa
\hat U(\la)\hA^{\out\ext}\hat U(\la)^*=\hA_{\la}^{\out\ext}. 
\label{ext-poincare-invariance}
\eeqa
Isotony and positivity of energy are obvious. \hfill\qed

Next, we proceed to a  discussion of representations of $(\hat\mfa^{\out}, \hat U)$ induced by vector states from $\hat\mfh_{\el}$.
\bel\label{expansion-lemma} Let $\hA\in \hat\mfb^{C_*}$ be self-adjoint and $\Psi_{\el}\in \hat\mfh_{\el}\cap D_{\hat H}$, $\|\Psi_{\el}\|=1$.
Then,
\beqa
\lan\Psi_{\el}, \e^{\i \hA^{\out\ext}} \Psi_{\el}\ran=
\e^{-\fr{1}{2}\| A^{\out}\Om\|^2}. \label{vacuum-relation}
\eeqa
\eel
\proof Consider the function $f(s):=\lan\Psi_{\el}, \e^{\i s \hA^{\out\ext}} \Psi_{\el}\ran$.
Since $\Psi_{\el}\in D_{\hat H}$ is contained in the domain of  $\hA^{\out\ext}$, we have by the Stone theorem
\beqa
(-\i)\pa_sf(s)=\lan\Psi_{\el}, \e^{\i s \hA^{\out\ext}} \hA^{\out\ext}\Psi_{\el}\ran=\lan\Psi_{\el}, \e^{\i s \hA^{\out\ext}} \hA^{\out}\Psi_{\el}\ran.
\eeqa
As $\hA^{\out}D_{\hat H}\subset D_{\hat H}$, we can iterate. This gives in particular
\beqa
(-\i)^n\pa_s^nf(s)|_{s=0}=\lan\Psi_{\el}, (\hA^{\out})^n\Psi_{\el}\ran.
\eeqa 
Now we use Proposition~\ref{creation-annihilation} to decompose $\hA^{\out}=\hA^{\out+}+\hA^{\out-}$ on $D_{\hat H}$ while keeping in mind that $\hA^{\out\pm}D_{\hat H}\subset D_{\hat H}$. Due to the canonical commutation relations (\ref{commutation-two}),
the fact that $\hA^{\out-}\Psi_{\el}=0$ (Corollary~\ref{main-ergodic-corollary}) and standard combinatorics we, moreover, have for even $n\geq 2$ 
\beqa
\lan\Psi_{\el}, (\hA^{\out})^n\Psi_{\el}\ran=(n-1)!! \lan \Om, (A^{\out})^2\Om\ran^{n/2}
\eeqa 
and zero for odd $n\geq 1$. Thus, we obtain
\beqa
\sum_{n=0}^{\infty } \fr{\i^n s^n}{n!}\lan\Psi_{\el}, (\hA^{\out})^n\Psi_{\el}\ran=\sum_{\ell=0}^{\infty}\fr{(-1)^{\ell}s^{2\ell} }{2^{\ell}\ell!} 
 \lan \Om, (A^{\out})^2\Om\ran^{\ell}=\e^{-\fr{1}{2}s^2\| A^{\out}\Om\|^2}, \label{convergent-sums}
\eeqa
where we set $\ell=n/2$. Since the sum on the left-hand side above is absolutely convergent for any $s\in \complex$, we conclude that $f$ extends to an entire
analytic function which  coincides with the function on the right-hand side of (\ref{convergent-sums}). \hfill\qed
\bet Let $\Psi_{\el}\in \hat\mfh_{\el}\cap D_{\hat H}$, $\|\Psi_{\el}\|=1$, $\om_{\el}(\,\cdot \,):=\lan \Psi_{\el}, \,\cdot \, \Psi_{\el}\ran$
be the corresponding state on $\hat\mfa^{\out}$ and $(\pi_{\el}, \hil_{\pi_\el}, \Om_{\pi_\el})$ its GNS representation. 
Then, $\pi_{\el}$ is a vacuum representation of $(\hat\mfa^{\out}, \hat U)$  in the sense of Definition~\ref{vacuum-def}. 
\eet
\begin{remark}  It is easy to see that the above theorem also  holds if $\pi$ is the original vacuum representation and $\Psi_{\el}$ is replaced with $\Om$.
This gives a different proof of a result from \cite{Bu77}.
\end{remark}
\begin{remark} If $U\res\mfh_{\pho}$ is an irreducible representation of
$\wt{\mathcal{P}}_+^{\uparrow}$ of
zero mass and some finite, integer spin, one could expect that $(\hat\mfa^{\out}_{\pi_{\el}}, \hat U_{\pi_{\el}})$ is just the usual massless free field of this spin on the Fock space. 
It turns out that  this is not true under our assumptions and counter-examples can be given using the following simple fact. 
Let $\mco_r$ be the standard double cone of radius $r$ centered at zero. Consider a Haag-Kastler net s.t.  $\mfa(\mco_r)=\complex I$ for $r<1$ and $\mfa(\mco_r)\neq \complex I$ for $r\geq 1$. Then also $\hat\mfa^{\out}(\mco_r)=\complex I$ for $r<1$.

\end{remark}
\proof First,  for $\hA$ as in Lemma~\ref{expansion-lemma}, we have
\beqa
& &\lan \Psi_{\el},\hat U(\la) \e^{\i \hA^{\out\ext}} \hat U(\la)^* \Psi_{\el}\ran\non\\
& &=\lan \Psi_{\el}, \e^{\i \hA_{\la}^{\out\ext}}\Psi_{\el}\ran=\e^{-\fr{1}{2}\| A_{\la}^{\out}\Om\|^2}=\e^{-\fr{1}{2}\| A^{\out}\Om\|^2}
=\lan \Psi_{\el}, \e^{\i \hA^{\out\ext}} \Psi_{\el}\ran. \label{state-invariance}
\eeqa 
Since by Proposition~\ref{Weyl-proposition} any element of $\hat\mfa^{\out}_{\loc}$ is a strong limit of finite linear combinations of  
operators of the form $\e^{\i \hA^{\out\ext}}$, we obtain that $\om_{\el}$ is invariant under Poincar\'e transformations.  Thus, by the GNS theorem \cite[Theorem 2.33]{Ar} 
we obtain a unique group of unitaries $U_{\pi_\el}$ acting on $\hil_{\pi_\el}$ such that 
\beqa
U_{\pi_\el}(\la)\Om_{\pi_\el}=\Om_{\pi_\el}, \quad U_{\pi_\el}(\la)\pi_{\el}(B)U_{\pi_\el}(\la)^*=\pi_{\el}(  \hat U(\la)B\hat U(\la)^*),
\eeqa
for all $B\in \mfa^{\out}$ and $\la\in \wt\mcP_+^{\uparrow}$. Weak (and therefore strong) continuity of $U_{\pi_\el}$ follows from the identity
\beqa
\lan \pi_{\el}(B_1)\Om_{\pi_\el}, U_{\pi_\el}(\la) \pi_{\el}(B_2)\Om_{\pi_\el}\ran=\lan \Psi_{\el}, B_1^* \hat U(\la) B_2 \hat U(\la)^*\Psi_{\el}\ran, \quad B_1, B_2\in \mfa^{\out}, \label{GNS-identity}
\eeqa 
and strong continuity of $\hat U(\la)$. Thus, $\pi_{\el}$ is a covariant representation. 

Positivity of energy of $\pi_{\el}$ easily follows from the above information,
using ideas from \cite[Theorem 2.2]{Dy08} and \cite[Theorem 4.5]{Ar}. More precisely, let $B\in  \hat\mfa^{\out}_{\loc}$ and $f\in S(\real^4)$ be such that $\supp\,  \wt f\cap \ov{V}_+=\emptyset$. Then, we obtain
from (\ref{GNS-identity}) that
\beqa
\| \wt f(H_{\el}, \bP_{\el}) \pi_{\el}(B)\Om_{\pi_\el}\|^2=\lan \Psi_{\el}, B(f)^*B(f)\Psi_{\el}\ran,
\eeqa
where $(H_{\el}, \bP_{\el})$ are the generators of $U_{\pi_\el}$. To show that the right-hand side above is zero,
we introduce compact sets $K\subset \real^3$, $\De_R=\{\, p\in \ov{V}_+\,|\, p^0\leq R\,\}$  and write
\beqa
& &\lan \Psi_{\el}, B(f)^*B(f)\Psi_{\el}\ran=
\lan E(\De_R)\Psi_{\el}, B(f)^*B(f)E(\De_R)\Psi_{\el}\ran+O(R^{-N})\non\\
& &\ph{4444}=\fr{1}{|K|}\int_{K} d^3x\,\lan E(\De_R)\Psi_{\el}, (B(f)^*B(f))(\bx) E(\De_R)\Psi_{\el}\ran+O(R^{-N}), 
\eeqa
where $|O(R^{-N})|\leq C_N/R^N$ for any $N\in\nat$. Here in the first step we used that $\Psi_{\el}\in D_{\hat H}$. 
In the second step we exploited the translation invariance of the functional on $\hat\mfa^{\out}$, which is induced by $E(\De_R)\Psi_{\el}\in \hat\mfh_{\el}\cap D_{\hat H}$. This invariance is
proven as in (\ref{state-invariance}).
By taking first the limit $K\nearrow \real^3$
and making use of Lemma~\ref{HA-lemma}, and then taking the limit $R\to\infty$, we conclude the proof of positivity of energy.

It remains to show the irreducibility of $\pi_{\el}$. As usually, it suffices to verify the clustering property, i.e.
\beqa
\lim_{|\bx|\to\infty}\om_{\el}(B_1B_2(\bx))=\om_{\el}(B_1)\om_{\el}(B_2), \quad B_1, B_2\in \hat\mfa^{\out}_{\loc}. \label{clustering-prop}
\eeqa  
In fact, given (\ref{clustering-prop}) and the fact that, by the Mean Ergodic Theorem, 
\beqa
E_{\mathrm{inv}}=\slim_{K\nearrow \real^3}\fr{1}{|K|}\int_{K}d^3x\, U_{\pi_\el}(\bx)
\eeqa
is a projection on invariant vectors of $U_{\pi_\el}\res \real^3$, we obtain that $E_{\mathrm{inv}}=|\Om_{\pi_\el}\ran\lan \Om_{\pi_\el}|$. Then,
\cite[Theorem~4.6]{Ar} gives irreducibility of $\pi_{\el}$. 

Let us verify (\ref{clustering-prop}) first for operators of the form $\e^{\i \hA_1^{\out\ext}}, \e^{\i \hA_2^{\out\ext}}$, where $\hA_1,\hA_2$
are as in Lemma~\ref{expansion-lemma}.  Taking the Weyl relations and (\ref{vacuum-relation}) into account, we obtain
\begin{align}
\om_{\el}(\e^{\i \hA_1^{\out\ext}} \e^{\i \hA_{2,\bx}^{\out\ext}} )
&=\e^{-\fr{1}{2}\lan \Om,[A^{\out}_1, A^{\out}_{2,\bx}]\Om\ran }\om_{\el}(\e^{\i (\hA_1+\hA_{2,\bx})^{\out\ext}  }  )\non\\
&=\e^{-\fr{1}{2}\lan \Om,[A^{\out}_1, A^{\out}_{2,\bx}]\Om\ran } \e^{   -\h \|(A_1+A_{2,\bx})^{\out}\Om\|^2  }.
\end{align}
It is well known that in a vacuum representation $\wlim_{|\bx|\to\infty} U(\bx)=|\Om\ran\lan \Om|$. (Observe that
$\lan B^*\Om, U(\bx) B^*\Om\ran=\lan \Om, [B, B^*(\bx)]\Om\ran\to 0$ for all $B$ as in Lemma~\ref{HA-lemma}).
Hence,
\beqa
\lim_{|\bx|\to\infty}\om_{\el}(\e^{\i \hA_1^{\out\ext}} \e^{\i \hA_{2,\bx}^{\out\ext}} )=\om_{\el}(\e^{\i \hA_1^{\out\ext}}) 
\om_{\el}(\e^{\i \hA_2^{\out\ext}}), \label{simple-clustering}
\eeqa
and this relation extends to finite linear combinations of operators of the form $\e^{\i \hA^{\out\ext}}$.

Now for any $B_1, B_2\in \hat\mfa^{\out}_{\loc}$  we can find, by the Kaplansky Density Theorem,  finite linear combinations 
$B_{1,\eps}$, $B_{2,\eps}$ such that $\|B_{1,\eps}\|\leq c$, $\|B_{2, \eps}\|\leq c$ uniformly in $\eps$ and
 $\|(B_1^*-B_{1,\eps}^*)\Psi_{\el}\|\leq \eps$, $\|(B_2-B_{2,\eps})\Psi_{\el}\|\leq \eps$. Making use of the translation invariance
of $\om_{\el}$ and relation~(\ref{simple-clustering}), we  write
\begin{align}
\om_{\el}(B_1B_2(\bx))&=\om_{\el}(B_{1,\eps}B_{2,\eps}(\bx))+O(\eps)\non\\
&=\om_{\el}(B_{1,\eps})\om_{\el}(B_{2,\eps})+O(\eps)+o_{\eps}(|\bx|^0)\non\\
&=\om_{\el}(B_{1})\om_{\el}(B_{2})+O(\eps)+o_{\eps}(|\bx|^0),
\end{align}
where $|O(\eps)|\leq c\eps$ uniformly in $|\bx|$ and $\lim_{|\bx|\to\infty}o_{\eps}(|\bx|^0)=0$.
By taking first the limit $|\bx|\to \infty$ and then $\eps\to 0$ we conclude the proof. \hfill\qed

\appendix

\section{Mean Ergodic Theorem and invariant vectors} \setcounter{equation}{0}

We pick $h$ as in (\ref{timeAverage}) and recall a variant of the abstract Mean Ergodic Theorem:  
\bet\label{MWLem}
Let $S$ be a self-adjoint operator on (a domain in) $\mathcal{H}$ and $F_S$ its spectral measure. Then,
\begin{equation}\label{MW1}
\slim_{t\to\infty}\int dt'\,h_t(t')\e^{\i t'S}=F_S(\{0\}).
\end{equation}
\eet
 Now we determine the projection $F_S(\{0\})$ on the subspace of invariant vectors of $t\mapsto \e^{\i tS}$ for the relevant operators $S$.
\bep\label{Ergodic-corollary} Let $(H,\bP)$ be the  energy-momentum operators of a  Haag-Kastler theory and
$E$ their joint spectral measure.
\begin{enumerate}
\item[(a)] Let $S_{\nu}:= H-\cos\,\nu |\bP|$ and $F_{S_{\nu}}$ be the spectral measure of $S_{\nu}$.  Then,
\begin{align} \label{zero-ergodic-one}
F_{S_{\nu}}(\{0\}) &=
\left\{ \begin{array}{ll} E(\pa\ov{V}_+) & \textrm{for  $\nu=0$}, \\ 
 E(\{0\}) & \textrm{for $\nu\in (0,\pi]$.} \\
\end{array} \right.
\end{align}
\item[(b)] Let $S_{\nu,\bla}:=H-|\bla|\cos\,\nu -\om_m(\mathbf{P}+\bla)$, where $\om_m(\bp)=\sqrt{\bp^2+m^2}$, and $F_{S_{\nu,\bla}}$ be the spectral measure of $S_{\nu,\bla}$. 
Then, for $\bla\neq 0$, 
\begin{align} \label{zero-ergodic-two}
F_{ S_{\nu,\bla} }(\{0\})  &=
\left\{ \begin{array}{ll} 0 & \textrm{for  $\nu\in [0,\pi)$ or $m>0$,} \\ 
 E(\{0\}) & \textrm{for $\nu=\pi$ and $m=0$.} 
\end{array} \right.
\end{align}
\end{enumerate}
\eep
\proof 
(a) For $\Psi_{0}\in\Ran F_{S_0}(\{0\})$, we have $(H-|\mathbf{P}|)\Psi_0=0$, hence $\Psi_0\in\Ran E(\partial \overline{V}_+)$. This gives the first part of (\ref{zero-ergodic-one}). 
To check the second part, we note that for $\nu\in (0,\pi]$ the set
\beqa
\De_{\nu}:=\{\, (p^0, \bp)\,|\, p^0=\cos \nu\,|\bp| \,\}
\eeqa
intersects with $\ov{V}_+$ only at $\{0\}$. 

(b) First, we note that the set
\beqa
\De_{\nu,\bxi}:=\{\, (p^0, \bp)\,|\, p^0=|\bxi|\cos \nu+\om_m(\bp+\bxi)   \,\}
\eeqa
describes a mass hyperboloid shifted by a spacelike or lightlike vector $(|\bxi|\cos\nu,-\bxi)$.
Thus $\De_{\nu,\bxi}$ contains zero only if $m=0$ and $\nu=\pi$. Hence, it suffices to show that
the relation
\beqa
(H-\om_m(\bP-\bxi))\Psi=|\bxi|\cos\nu\Psi, \label{starting-point}
\eeqa
where $\Psi=E(\De)\Psi$, $\De$ compact, can only hold for $\Psi\in E(\{0\})\hil$.

To this end, we generalize an argument from the Appendix of \cite{Bu75}: From (\ref{starting-point}) we obtain
\begin{align}
(H^2-|\bP-\bxi|^2-m^2)\Psi&=|\bxi|\cos\nu(H+\om_m(\bP-\bxi))  \Psi\non\\
&=|\bxi|\cos\nu(2H-|\bxi|\cos\nu)  \Psi.
\end{align}
Setting $M^2:=H^2-\bP^2$, we get
\beqa
M^2\Psi
=(2H|\bxi|\cos\nu-2\bP\bxi+|\bxi|^2 \sin^2\nu+m^2)  \Psi. \label{before-transformation}
\eeqa
Now we want to apply a Lorentz transformation $U(\wt \La)$ to both sides of the above equation. 
We recall from Subsection~\ref{Lorentz-subsection} that
\beqa
& &U(\wt \La)HU(\wt \La)^* =\lan \bv_{\La^{-1}} \ran H + 
\bv_{\La^{-1}} \bP,\label{H-transform-one}\\
& &U(\wt \La)\bP U(\wt \La)^* =- \bv_{\La} H+[\La^{-1}] \bP. \label{P-transform-one}
\eeqa
Choosing $\La=\La_{\eta}$ to be a boost with rapidity $\eta$ in some direction $\bn$ orthogonal to $\bxi$, we get that 
$\bv_{\La_{\eta}}$ is orthogonal to $\bxi$ and  $[\La^{-1}_{\eta}]^T\bxi=\bxi$. Therefore,
\beqa
M^2\Psi_{\La_{\eta}}=\big( 2(\lan \bv_{\La^{-1}_{\eta}}\ran H+\bv_{\La^{-1}_{\eta}}  \bP) |\bxi|\cos\,\nu-2\bP \bxi+|\bxi|^2\sin^2\nu +m^2\big)  \Psi_{\La_{\eta}}.
\eeqa
Taking the scalar product with $\Psi$ and making use of (\ref{before-transformation}) we get for $\bxi\neq 0$ and $\cos\,\nu\neq 0$
\beqa
\lan\Psi,H \Psi_{\La_{\eta}}\ran(1-\lan\bv_{\La^{-1}_{\eta}}\ran )=\lan \Psi, (\bv_{\La^{-1}_{\eta}} \bP) \Psi_{\La_{\eta}}\ran. \label{orthogonal-equation}
\eeqa
We note that the term on the left-hand side above is of order $\eta^2$
while the term on the right-hand side is of order $\eta$. Thus, dividing both sides of the equation by $\eta$
and taking the limit $\eta\to 0$, we obtain
\beqa
\lan \Psi, \bn \bP \Psi\ran=0.
\eeqa
Since the above equation holds also for $\Psi$ replaced with $E(\De_{\pm})\Psi$, 
where $\De_{\pm}$ are chosen so that $\pm E(\De_{\pm})\bn \bP E(\De_{\pm})\geq 0$,
we conclude that $(\bn \bP)\Psi=0$. Substituting this to (\ref{orthogonal-equation}) we infer
that $\lan \Psi, H\Psi\ran=0$ and, therefore, $\Psi\in E(\{0\})\hil$.   

In the case of $\cos\nu=0$ and $\bxi\neq 0$ we choose $\La_{\eta}$ to be the boost with rapidity $\eta$  in the direction of $\bxi$. 
Then, an analogous reasoning as above gives
\beqa
\lan \Psi, \big((1- [\La_{\eta}^{-1}]) \bP\big) \bxi\,\Psi_{\La_{\eta}}\ran=-\lan \Psi, H\Psi_{\La_{\eta}}\ran \bv_{\La_{\eta}^{-1}}\bxi.
\eeqa
Since $(1- [\La_{\eta}^{-1}])^{T}\bxi$ is of order $\eta^2$, we obtain $\lan \Psi, H\Psi\ran=0$, which concludes the proof.\hfill\qed

\section{Admissible propagation observables} \label{admissible-appendix} \setcounter{equation}{0}

\bed\label{admissible-prop-obs} Let $[1,\infty)\ni t\mapsto A_t\in B(\hil)$ be a propagation observable,
 $H$ a self-adjoint operator on a domain $D(H)$ in $\hil$, and $D,D^*\subset \hil$ some dense domains. 
We say that $A$ is admissible if:
 \begin{itemize}
\item[(a)]  For any $\Psi\in D^{(*)}$
the limit $\lim_{t\to\infty}A_t^{(*)}\Psi$ exists.
\item[(b)] $\sup_{t\in\real_+}\|A_t^{(*)}(1+H)^{-1}\|<\infty$.
\item[(c)] Set $A_t(s):=\e^{\i sH}A_t\e^{-\i sH}$. All the derivatives $A_t^{(n)}=\pa^n_s A_t(s)|_{s=0}$ exist in norm and satisfy (a), (b).
\end{itemize}
Here $(*)$ means that the statement holds  with and without all $*$ symbols (correlated).
\eed

As shown in the next two propositions, limits of admissible propagation observables exist as closable operators on
the following dense domain 
\beqa
D_H:=\bigcap_{n\geq 1} D(H^n).
\eeqa
Moreover, $D_H$ is an invariant domain of these limits.  
\bep\label{admissible-proposition-one} Let $A$ be  an admissible propagation observable. Then:
\begin{enumerate}
\item[(a)] For any $\Psi\in D_H$ the limit $\lim_{t\to\infty}A_t\Psi$ exists and defines a closable operator $A^{\out}$ on $D_H$.
This operator is uniquely specified by its values on $D$.
\item[(b)] $A^{\out} D_H\subset D_H$.
\end{enumerate}
\eep
\proof Exploiting part (c) of Definition~\ref{admissible-prop-obs}, we write
\beqa
A_t\Psi=(1+H)^{-1}(-\i)A^{(1)}_t\Psi+(1+H)^{-1}A_t(1+H)\Psi. \label{limit-A-t}
\eeqa
To prove (a), we 
use Definition~\ref{admissible-prop-obs} (b), (c) to approximate vectors $\Psi,(1+H)\Psi$ by elements
of $D$ uniformly in $t$. By part (a) of Definition~\ref{admissible-prop-obs}, $A_t, A^{(1)}_t$ converge on $D$ which
gives the existence of $A^{\out}$ as an operator on $D_H$. Since the above reasoning applies also to $A_t^*$, the operator
$A^{\out}$ is closable. To show that it is uniquely determined by its values on $D$, consider admissible propagation observables $A_1$ and $A_2$ such that $\lim_{t\to\infty}A_{1,t}\Phi=\lim_{t\to\infty}A_{2,t}\Phi$ for $\Phi\in D$. Then, it is clear from the above discussion that $A_1^{\out}=A_2^{\out}$ as operators on $D_H$. This completes the proof of (a). 

To prove (b), we make use of a standard commutator formula (see e.g. \cite{FGS01})
\begin{align}
[(1+H)^\ell,A_t]&=\sum_{k=1}^{\ell}\begin{pmatrix}
\ell \\k
\end{pmatrix} \text{ad}_{H}^k(A_t)(1+H)^{\ell-k},\label{commutator-first}\\
\text{ad}_H^0(A_t):=A_t,&\qquad\text{ad}_H^k(A_t):=[H,\text{ad}_H^{k-1}(A_t)],
\end{align}
which holds as an equality of quadratic forms on $D_H\times D_H$. Exploiting part (c) of Definition~\ref{admissible-prop-obs},
which ensures that $\text{ad}_{H}^k(A_t)=(-\i)^kA^{(k)}_t$  are bounded operators, we obtain for any $\Psi\in D_H$
\beqa
A_t\Psi=(1+H)^{-\ell}\bigg(\sum_{k=0}^{\ell}\begin{pmatrix}
\ell \\k
\end{pmatrix} (-\i)^kA^{(k)}_t   (1+H)^{\ell-k}\bigg)\Psi, \label{commutator-formula}
\eeqa
where we set by convention $A_t^{(0)}=A_t$. Taking now the limit $t\to\infty$ on both sides of (\ref{commutator-formula}),
we obtain (b). \hfill\qed
\bep\label{admissible-proposition-two} Let $A_i$, $i=1,\ldots, n$, be admissible propagation observables. 
Then, for any $\Psi\in D_H$,
\beqa
A_1^{\out}\ldots A_n^{\out}\Psi=\lim_{t\to\infty}A_{1,t}\ldots A_{n,t}\Psi. 
\eeqa 
\eep
\proof For $n=1$ the statement follows from Proposition~\ref{admissible-proposition-one}.
We suppose now it holds for some $n>1$ and prove it for $n+1$. Similarly as in (\ref{limit-A-t}), we write
for any $\Psi\in D_H$
\begin{align}
A_{1,t}\ldots A_{n+1,t}\Psi&=A_{1,t}(1+H)^{-1}(-\i) \sum_{\ell=2}^{n+1}A_{2,t}\ldots A^{(1)}_{\ell,t} \ldots A_{n+1,t}\Psi\non\\
&\ph{44}+A_{1,t}(1+H)^{-1}A_{2,t}\ldots A_{n+1,t}(1+H)\Psi.
\end{align}
By the induction hypothesis and Proposition~\ref{admissible-proposition-one} the above expression converges 
strongly as $t\to\infty$. Next, we pick $\Phi\in D_H$ and write
\begin{align}
\lan \Phi, A_{1,t}\ldots A_{n+1,t}\Psi\ran&=\lan \Phi, A_{1,t}A_{2}^{\out}\ldots A_{n+1}^{\out}\Psi\ran+o(t^{0})\non\\
&=\lan \Phi, A_{1}^{\out}A_{2}^{\out}\ldots A_{n+1}^{\out}\Psi\ran+o(t^{0}),
\end{align}
where in the first step we used the induction hypothesis, in the second step Proposition~\ref{admissible-proposition-one} 
and $o(t^0)$ denotes terms which tend to zero as $t\to\infty$. This concludes the proof.\hfill \qed
\section{Geometric argument}\label{geometric}
\setcounter{equation}{0}
\nin\textbf{Proof of Lemma~\ref{geometric-one}:} First, we note that
\beqa
\bigcup_{t\geq 1} O^{\La}_t=
\bigcup_{t\geq 1} \bigcup\limits_{\tau\in t+t^{\beps}\supp h}\left\{\La\mathcal{O}+\tau(1, g_{\La}(\Theta))\right\}
\subset  \bigcup_{\tau \in \real_+}  \left\{\La\mathcal{O}+\tau(1, g_{\La}(\Theta))\right\}=:\mcU^{\La}. \label{mcu-def}
\eeqa
On the other hand,
\beqa
\bigcup_{t\geq 1} \La O_t&=&\bigcup_{t\geq 1} \bigcup\limits_{\tau\in t+t^{\beps}\supp h}\left\{\La\mathcal{O}+\tau\La(1,\Theta)\right\}
\subset \bigcup_{\tau \in \real_+}  \left\{\La\mathcal{O}+\tau\La(1,\Theta)\right\}\non\\
&\subset &\bigcup_{\tau \in \real_+}\left\{\La\mathcal{O}+
(\lan \bv_{\La}\ran+\bv_{\La} \Theta) \tau (1,  g_{\La}(\Theta) )\right\}=\mcU^{\La}.
\eeqa
Here in the third step we made use of (\ref{g-Lambda-zero}) and (\ref{g-Lambda}) and in the last step 
of the fact that the prefactors $(\lan \bv_{\La}\ran+\bv_{\La} \Theta)$ are strictly 
positive and, thus, they just reparametrize~$\tau$.

Let us first disregard the Lorentz transformations, i.e. show that for any double cone $\mco\in \mcK$ 
and open $\Theta\subset S^2$ with $\ov{\Theta}\subsetneq S^2$, 
there is a 
future lightcone $V$ and a hypercone $\mcC\subset \mcF_V$ such that
the corresponding set  $\mcU^{\La=I}$ is in $\mcC^{\cc}$. The extension of the statement to $\mcU^{\La}$, where $\La$ is in some neighbourhood of unity $N$ in $\mathcal L_+^{\uparrow}$, will be discussed in the last part of the proof.
 
First, we fix a future lightcone $V$ so that $\ov{\mco}\subset V$ and 
choose a coordinate frame in which the origin is at the apex of $V$. Next, use the fact that there is an $\bell_0\in S^2$ and an $1\geq \eps_0>0$ such that the spherical cap
\beqa
\Theta_{\eps}:=\{ \bell\in S^2\,|\, 1-\eps\leq \bell\bell_0\leq 1\,\}
\eeqa
is contained in $S^2\backslash \ov{\Theta}$ for all $0<\eps\leq \eps_0$.
Let, moreover, $\msf{K}_{\eps}$ be a cone in the unit ball $\msf{B}$
with apex at $\bu_{\eps}:=(1-\eps)\bell_0$  and the opening angle determined by $\Theta_{\eps}$. More precisely,
\begin{equation}
\msf{K}_{\eps}:=\left\{\bu\in \mathsf{B}\,|\, \bu=\bu_{\eps}+s\,(\bell-\bu_{\eps} ),\ 0\leq s<1, \ \bell\in \Theta_{\eps} \right\}.
\end{equation}
Using the  Beltrami-Klein map $\bv: \mathsf{H}_{\bar\tau}\to \mathsf{B}$ given by $\bv(a)=\ba/a^0$, the
corresponding hyperbolic cone $\msf{C}( \msf{K}_{\eps})\subset \msf{H}_{\bar\tau}$ is given by
\beqa
\msf{C}( \msf{K}_{\eps})=\left\{\, \bar\tau\fr{(1,\bu)}{\sqrt{1-\bu^2}} \in \msf{H}_{\bar\tau} \,\bigg|\, \bu=\bu_{\eps}+s\,(\bell-\bu_{\eps} ),\ 0\leq s<1, \ \bell\in \Theta_{\eps} \right\}.
\eeqa
We note that as $\eps\to 0$, the apex of $\msf{C}( \msf{K}_{\eps})$ tends to lightlike infinity in the direction of $\bell_0$ and the opening
angle tends to zero.
In fact, for all $0\leq s<1$ and $\bell\in \Theta_{\eps}$ we have
\beqa
\bu_{\eps}(s, \bell):=\bu_{\eps}+s\,(\bell-\bu_{\eps} )=\bell_0(1-\eps(1-s))+s(\bell-\bell_0). \label{first-u}
\eeqa 
Noting that $(\bell-\bell_0)^2=2(1-\bell\bell_0)\leq 2\eps$ and 
setting $\bh_{\eps}(s, \bell):=-\eps^{\h}\bell_0(1-s)+s\eps^{-\h}(\bell-\bell_0)$, we have
\beqa
& &\bu_{\eps}(s, \bell)=\bell_0+\eps^{\h} \bh_{\eps}(s, \bell), \\
& &|\bh_{\eps}(s, \bell)|\leq 3.   \label{h-conditions}
\eeqa
Now a simple computation using (\ref{first-u}) gives
\beqa
1-\bu_{\eps}(s, \bell)^2
=\eps(1-s) \big\{  2-\eps(1-s)   +2s(1-\eps)(1-\bell\bell_0)\eps^{-1}\big\}.
\eeqa
It is easy to see that $1\leq \{\ldots\}\leq 4$ and, therefore, we can find a function  $(s,\bell)\mapsto g_{\eps}(s, \bell)$ such that
 $\fr{\bar\tau}{2}\leq g_{\eps}(s, \bell)\leq \bar\tau$ and
\beqa
\bar\tau\fr{1}{\sqrt{1-\bu_{\eps}(s, \bell)^2}}= \fr{g_{\eps}(s, \bell)}{\sqrt{\eps(1-s)}   }.
\eeqa
Thus, skipping the arguments of $g, \bh$ and setting
 $M:=\eps^{-\h}$, $S:=g(1-s)^{-\h}$, we have
\beqa
\bar\tau\fr{(1,\bu_{\eps}(s, \bell))}{\sqrt{1-\bu_{\eps}(s, \bell)^2}} =MS(1,\bell_0)+S(0, \bh), \label{hyperboloid-tau-definition}
\eeqa
where $M$ takes values in $[\eps_0^{-1/2}, \infty)$ and $S$ in $[\fr{\bar\tau}{2}, \infty)$.

Let us now show that there is a $c>0$ such that for sufficiently large $M$
\beqa
(MS(1,\bell_0)+S(0, \bh)-x)^2<-c, \label{geometric-claim}
\eeqa
for all $x\in \mco$, $S\in [\fr{\bar\tau}{2}, \infty)$ and $\bh$ within the above restrictions.
Since $\ov\mco\subset V$, there are constants $c_{\mco},  c'_{\mco}$ such that
\beqa
0<c_{\mco} \leq (x^0\pm|\bx|)\leq  c'_{\mco},
\eeqa
uniformly in $x\in \mco$. Moreover, due to (\ref{hyperboloid-tau-definition}) we have $(MS(1,\bell_0)+S(0, \bh))^2=\bar\tau^2$. 
Hence,
\begin{align}
(MS(1,\bell_0)+S(0, \bh)-x)^2&=\bar\tau^2-2MS(x^0-\bx\bell_0)-2S(0, \bh)x+x^2\non\\
&\leq -2MSc_{\mco}+ 6Sc_{\mco}'+ (c'_{\mco})^2+\bar\tau^2,
\end{align}
which proves (\ref{geometric-claim}).

Next, let us show that there is a $c'>0$ such that for sufficiently large $M$
\beqa
(MS(1,\bell_0)+S(0, \bh)-x-\tau(1, \bell'))^2<-c',
\eeqa
for all $\tau\in \real_+$, $\bell'\in \Theta$, $x\in \mco$, $S\in [\fr{\bar\tau}{2}, \infty)$ and $\bh$ within the above restrictions.
In view of (\ref{geometric-claim}), it suffices to note the estimate
\begin{align}
\big(MS(1,\bell_0)+S(0, \bh)-x\big)(1, \bell')&= S( M(1-\bell_0\bell')-\bh\bell')-x(1,\bell')\non\\
&\geq (\bar{\tau}/2)\big(M\eps_0-3\big)-c_{\mco}'.
\end{align}
Thus, we have proven that $\mcU^{\La=I}\subset \msf{C}( \msf{K}_{\eps})^{\cc}=\mcC(\msf{K}_{\eps})^{\cc}$ for $\eps$ sufficiently small,
depending on $\mco$ and $\Theta$.

Finally, let us choose a double cone $\mco_0$, satisfying $\ov{\mco}_0\subset \mco$, and an open set $\Theta_0\subset S^2$, fulfilling $\ov{\Theta}_0\subset \Theta$. ($\mco_0$ and $\Theta_0$ are still arbitrary, within the restrictions of the lemma, since $\mco$ and $\Theta$ were arbitrary).
Then, there is clearly a neighbourhood of unity $N$ in the Lorentz group such that $\La\mco_0\subset \mco$ and
$g_{\La}(\Theta_0)\subset\Theta$ for all $\La\in N$ (cf. (\ref{Lorentz-continuity}) for the latter condition). Therefore, by the
first part of the proof,
\beqa
\mcU^{\La}_{0}\subset \mcC^{\cc}, \quad \La\in N,
\eeqa  
where $\mcU_{0}^{\La}$ is defined as in (\ref{mcu-def}) using $\mco_0$ and $\Theta_0$. \hfill \qed

\section{Integrating Heisenberg commutation relations to Weyl relations} \label{Weyl-rel}

We state below two known results which were used in Section~\ref{last-section}. The first one is the Nelson commutator 
theorem \cite[Theorem X.37]{RS2},\cite[Theorem 0']{Fr77}.
\bet\label{Nelson} Let $N$ be a self-adjoint operator on $D(N)$ with $N\geq 1$. 
Let $A$ be a symmetric operator on $\hil$ with domain $D(A)$ which contains $D(N)$.  
Suppose that
\beqa
\|A\Psi\|\leq c\|N\Psi\| \quad \mathrm{and} \quad  |\lan A\Psi, N\Psi\ran-\lan N\Psi, A\Psi\ran|\leq d\|N^{1/2}\Psi\|^2 \label{Nelson-bounds}
\eeqa
for all $\Psi\in D(N)$. Then, $A$ is essentially self-adjoint on $D(N)$ and its unique self-adjoint extension $A^\ext$ is essentially self-adjoint
on any core for $N$. 
\eet
Before we state the second result we need some preparations. Let $A,N$ be as in Theorem~\ref{Nelson}. We, then, define
\beqa
\dot A=\i[N,A]
\eeqa
as a quadratic form on $D(N)\times D(N)$.  The associated operator  $\dot A^{\circ}$ is given by
\begin{align}
D(\dot A^{\circ})&=\{\Psi\in D(N)\,|\, \exists\, c_{\Psi} \textrm{ s.t. } |\lan \,\Phi, \dot A \Psi\ran|\leq c_{\Psi}\|\Phi\|\textrm{ for all } \Phi\in D(N) \,\},\,\,\,\label{domain-def} \\
\dot A^{\circ}\Psi&=\dot{A}\Psi, \quad \Psi\in D(\dot A^{\circ}),
\end{align}
where the vector $\dot{A}\Psi$ corresponds via the Riesz theorem to the bounded functional appearing in (\ref{domain-def}).
It is easy to see that $\dot A^{\circ}$ is a symmetric operator on $D(\dot A^{\circ})$. However, it is not guaranteed that $D(\dot A^{\circ})$ is dense.

Now we are in a position to state a result about integration of canonical commutation relations from \cite[Theorem $\mathrm{1_M}$]{Fr77}.
(Although separability of $\hil$ is assumed in \cite{Fr77}, this property is not used in the proof of the following result).
\bet\label{Froehlich-Weyl} Let $N$ be a self-adjoint operator with $N\geq 1$. 
Let $A_1$, $A_2$  be symmetric operators with domains $D(A_1)$ and $D(A_2)$, containing $D(N)$, and  such that
\beqa
C:=\i[A_1, A_2], \label{C-def}
\eeqa
defined as a quadratic form on $D(N)\times D(N)$, is a multiple of the identity. Assume moreover that 
$D(\dot A^{\circ}_1)\supset D(N)$ and $A_1, A_2, \dot A^{\circ}_1$ satisfy (\ref{Nelson-bounds}). 
Then, the self-adjoint extensions $A_1^\ext$, $A_2^{\ext}$, given by Theorem~\ref{Nelson}, satisfy
\beqa
\e^{\i t A_1^{\ext}} \e^{\i s A_2^{\ext}} \e^{-\i t A_1^{\ext}}=\e^{\i s A_{2}^{\ext}}\e^{\i st C}, \quad s,t\in \real. \label{Froehlich-Weyl-rel}
\eeqa
\eet
  From Theorem~\ref{Froehlich-Weyl} we easily get the usual form of the Weyl relations appearing in 
 Proposition~\ref{Weyl-proposition}: 
\bec Let $N, A_1, A_2$ be as in Theorem~\ref{Froehlich-Weyl}. Then $A_1+A_2$, defined 
as a symmetric operator on $D(A_1)\cap D(A_2)$ is essentially self-adjoint on $D(N)$ and
its self-adjoint extension $(A_1+A_2)^\ext$ is essentially self-adjoint on any core for $N$. 
Moreover,
\beqa
\e^{\i t(A_1+A_2)^\ext}=\e^{-\fr{\i}{2} t^2 C}\e^{\i t A_1^{\ext}} \e^{\i t A_2^{\ext}}, \quad t\in \real,
\eeqa
with $C$ defined by (\ref{C-def}).
\eec
\proof To justify the first statement, we note that $A_1+A_2$ satisfies the assumptions of Theorem~\ref{Nelson}.
Now we define
\beqa
V(t):=\e^{-\fr{\i}{2} t^2 C}\e^{\i t A_1^{\ext}} \e^{\i t A_2^{\ext}}.
\eeqa
Clearly, $V(0)=1$ and, making use of (\ref{Froehlich-Weyl-rel}, we get 
\beqa
V(t)V(s)&=&\e^{-\fr{\i}{2} (t^2+s^2) C}\e^{\i t A_1^{\ext}} \e^{\i t A_2^{\ext}} \e^{\i s A_1^{\ext}} \e^{\i s A_2^{\ext}}\non\\
&=&\e^{-\fr{\i}{2} (t+s)^2 C}\e^{\i t A_1^{\ext}}  \e^{\i s A_1^{\ext}} \e^{\i t A_2^{\ext}}\e^{\i s A_2^{\ext}}=V(t+s).
\eeqa
Thus $V$ is a one-parameter group of unitaries, whose weak (and therefore strong) continuity is obvious. By the Stone
theorem $V(t)=\e^{\i t Q}$, for a self-adjoint operator $Q$ given by
\beqa
& &D(Q):=\{\, \Psi\in \hil\,|\, \lim_{\tau\to 0}\fr{V(\tau)-1}{\tau}\Psi \textrm{ exists} \,\}, \\
& &Q\Psi=\lim_{\tau\to 0}\fr{1}{\i}\fr{V(\tau)-1}{\tau}\Psi \textrm{ for } \Psi\in D(Q),
\eeqa 
cf. \cite[Theorems VIII.7, VIII.8]{RS1}. From the equality
\beqa
(V(\tau)-1)=(\e^{-\fr{\i}{2} \tau^2 C} -1)\e^{\i \tau A_1^{\ext}} \e^{\i \tau A_2^{\ext}}+\e^{\i \tau A_1^{\ext}} (\e^{\i \tau A_2^{\ext}}-1)
+(\e^{\i \tau A_1^{\ext}}-1)
\eeqa
and the Stone theorem we immediately conclude that $D(N)\subset D(Q)$, (since $D(N)\subset D(A_1^\ext)\cap D(A_2^\ext)$), 
and that
\beqa
Q \res D(N)=(A_1+A_2)\res D(N).
\eeqa
Thus, $Q$ is a self-adjoint extension of $(A_1+A_2)\res D(N)$ and by the first part of the theorem we
obtain $Q=(A_1+A_2)^\ext$. \hfill\qed

\section{Conventions}\label{conventions} \setcounter{equation}{0}

\begin{enumerate}
\item  $\wt g(p^0)=(2\pi)^{-1/2}\int \e^{\i p^0x^0}g(x^0)dx^0$ for $g\in L^1(\real)$.
\item  $\wt g(\bp)=(2\pi)^{-3/2}\int \e^{-\i \bp\bx}g(\bx)d^3x$ for $g\in L^1(\real^3)$.
\item  $\wt g(p)=(2\pi)^{-2}\int \e^{\i(p^0x^0-\bp\bx)}g(x)d^4x$ for $g\in L^1(\real^4)$.
\item $\wt T(p)=(2\pi)^{-2}\int \e^{-\i(p^0x^0-\bp\bx)}T(x)d^4x$ for $T\in S'(\real^4)$.
\item  $(f\ast g)(\bx)=\int \,f(\bx-\by)g(\by) d^3y$ for $f,g\in L^1(\real^3)$.
\item  $(f\ast_3 g)(x)=\int \,f(x^0,\bx-\by)g(\by) d^3y$ for $f\in L^1(\real^4),\,g\in L^1(\real^3)$.
\end{enumerate}



\begin{thebibliography}{LNT2}
	
\bibitem[Ar]{Ar} H. Araki. \emph{Mathematical theory of quantum fields}. Oxford University Press, 1999.	
	
\bibitem[Ar82]{Ar82} W. Arveson.
	\emph{The harmonic analysis of automorphism groups}. In
	Operator algebras and applications, Part I (Kingston, Ont., 1980),
	Proc. Sympos. Pure Math., 38,  Amer. Math. Soc., Providence, R.I.,1982.D., pp. 199--269.
	
	\bibitem[BDN14]{BDN14} S. Bachmann, W. Dybalski and P. Naaijkens. \emph{Lieb-Robinson bounds, Arveson spectrum and Haag-Ruelle scattering theory for gapped quantum spin systems.} To appear in Ann. Henri Poincar\'e. Preprint arXiv:1412.2970.
	
     \bibitem[Bu77]{Bu77} D. Buchholz. \emph{Collision theory for massless bosons}.
	Commun. Math. Phys. \bf 52\rm, (1977) 147--173.
	
	\bibitem[Bu75]{Bu75} D. Buchholz. \emph{Collision theory for massless fermions}. Commun. Math. Phys. \bf 42\rm, (1975) 269--279.
	
	
	\bibitem[Bu82]{Bu82} D. Buchholz.
	\emph{The physical state space of quantum electrodynamics}.
	Commun. Math. Phys. \bf 85\rm, (1982) 49--71. 
	
	\bibitem[Bu86]{Bu86} D. Buchholz.
	\emph{Gauss' law and the infraparticle problem}.
	Phys. Lett. B \bf 174\rm,  (1986) 331--334.
	
	\bibitem[Bu90]{Bu90} D. Buchholz.
	\emph{Harmonic analysis of local operators}.
	Commun. Math. Phys. \bf 129\rm, (1990) 631--641.
	
	\bibitem[BPS91]{BPS91} D. Buchholz, M. Porrmann and U. Stein.
	\emph{Dirac versus Wigner: Towards a universal particle
		concept in quantum field theory}.
	Phys. Lett. B  \bf 267\rm, (1991) 377--381.
	
	\bibitem[BR14]{BR14} D. Buchholz and J.E. Roberts. \emph{New light on infrared problems: sectors, statistics, symmetries and spectrum.}
	Commun. Math. Phys \bf 330\rm, (2014) 935--972.
	
	
	\bibitem[CFP07]{CFP07}  T. Chen, J. Fr\"ohlich and A. Pizzo, \emph{Infraparticle scattering states in non-relativistic QED: 
		I. The Bloch-Nordsieck paradigm.} Commun. Math. Phys. \bf 294\rm, (2010) 761--825.
	
	
	
	\bibitem[DH15]{DH15} P. Duch and A. Herdegen. \emph{Massless asymptotic fields and Haag-Ruelle theory.}
	Lett. Math. Phys. \textbf{105},  (2015) 245--277.
	
\bibitem[Dy08]{Dy08} W. Dybalski. \emph{A sharpened nuclearity condition and the uniqueness of the vacuum in QFT}. Commun. Math. Phys. \textbf{283}, (2008)
523--542.	
	

\bibitem[DG14]{DG14} W. Dybalski and C. G\'erard. \emph{A criterion for asymptotic completeness in local relativistic QFT.} Commun. Math. Phys. \bf 332, \rm (2014) 1167--1202. 
	
	
	
\bibitem[Fr77]{Fr77} J. Fr\"ohlich. \emph{Application of commutator theorems to the integration of representations of Lie algebras and commutator relations.}
Commun. Math. Phys. \bf 54, \rm (1977) 135--150.  


	\bibitem[FGS01]{FGS01} J. Fr{\"o}hlich, M Griesemer and  B. Schlein. \emph{Asymptotic electromagnetic fields in models of quantum-mechanical matter interacting with the quantized radiation field.} Advances in Mathematics \bf 164\rm, (2001) 349--398.
	
	


\bibitem[He14]{He14.0} A. Herdegen. \emph{Infraparticle problem, asymptotic fields and Haag-Ruelle theory}. Ann. Henri Poincar\'e \textbf{15}, 
	(2014) 345--367.
	
	\bibitem[He14.1]{He14} A. Herdegen. \emph{On energy momentum transfer of quantum fields}. Lett. Math. Phys. \textbf{104}, (2014) 1263–-1280.
	

\bibitem[Kr82]{Kr82} K. Kraus.
 \emph{Aspects of the infrared problem in quantum electrodynamics.} Found. Phys. \textbf{13}, (1983) 701--713.


\bibitem[KPR77]{KPR77} K. Kraus, L. Polley and G. Reents. \emph{Models for infrared dynamics. I. Classical currents.} Ann. Inst. H. Poincar\'e t. \textrm{26}, (1977) 109--162.


\bibitem[Ku98]{Ku98}  W. Kunhardt. \emph{On infravacua and the localization of sectors}.    J. Math. Phys. \textbf{39},  (1998) 6353.


	
	\bibitem[MS15]{MS15} G. Morchio and F. Strocchi. \emph{The infrared problem in QED: A lesson from a model with Coulomb interaction and realistic photon emission}. Preprint arXiv:1410.7289.
	


\bibitem[Po69]{Po69} K. Pohlmeyer. \emph{The Jost-Schroer theorem for 
zero-mass fields}. Commun. Math. Phys. \bf 12\rm, (1969) 204--211.

\bibitem[RS1]{RS1} M. Reed and B. Simon. \emph{Methods of modern mathematical physics. I: Functional analysis.} Academic Press 1972.
	
\bibitem[RS2]{RS2} M. Reed and B. Simon. \emph{Methods of modern mathematical physics. II: Fourier analysis, self-adjointness.} Academic Press 1975.

\bibitem[St]{St} O. Steinmann. \emph{Perturbative quantum electrodynamics and axiomatic field theory}. Springer 2000.
	
\bibitem[Ta]{Ta} M. Takesaki. \emph{Theory of operator algebras I}. Springer 1979.


\bibitem[Ta14]{Ta14} Y. Tanimoto. \emph{Massless Wigner particles in conformal field theory are free}. Forum of Mathematics, Sigma (2014), Vol 2, e21, 27 pages.


\end{thebibliography}
\end{document}